\def\zeit{\number\shorthour:\ifnum\shortminute<10 0\number\shortminute
	\else\number\shortminute\fi}
\newenvironment{proof}{\noindent{\bf Proof : \ }}{\hfill$\Box$\par\medskip}
\newtheorem{theorem}{Theorem}
\newtheorem{corollary}[theorem]{Corollary}
\newtheorem{lemma}[theorem]{Lemma}
\newtheorem{definition}[theorem]{Definition}
\newenvironment{proofof}[1]{\begin{trivlist} \item {\bf Proof
			#1:~~}}
	{\qed\end{trivlist}}
\renewenvironment{proofof}[1]{\par\medskip\noindent{\bf Proof of #1: \ }}{\hfill$\Box$\par\medskip}
\newcommand{\COMMENTED}[1]{{}}
\newcommand{\eps}{\ensuremath{\varepsilon}}
\newcommand{\alg}{{\sf Alg}}
\newcommand{\prob}[1]{\operatorname{Pr}\left[#1\right]}
\newcommand{\ex}[1]{\operatorname{E}\left[#1\right]}
\newcommand{\X}{\ensuremath{\mathcal{X}}}
\newcommand{\our}{best-choice~}
\newcommand{\Tau}{\mathcal{T}}
\renewcommand{\P}{\mathcal{P}}
\newcommand{\maxall}{\max_{i=1}^{n}(x_i)}
\renewcommand{\i}{\ensuremath{\zeta}}
\renewcommand{\j}{\ensuremath{\eta}}
\newcommand{\eat}[1]{}
\DeclareMathOperator*{\argmax}{arg\,max}
\title{
	Prophets, Secretaries, and Maximizing the Probability of \\ Choosing
	the Best}
\author{
	Hossein Esfandiari\thanks{Supported in part by NSF grants CCF-1320231 and CNS-1228598.}\\Google Research \\ New York,
	NY
	\and MohammadTaghi HajiAghayi   \thanks{Supported in part by NSF CAREER award CCF-1053605,  NSF BIGDATA grant IIS-1546108, NSF AF:Medium grant CCF-1161365,
		DARPA GRAPHS/AFOSR grant FA9550-12-1-0423, and another DARPA SIMPLEX grant. Part of this work was done while visiting Microsoft Research New England.}\\University of Maryland\\ College Park, MD
	\and
	Brendan Lucier \\Microsoft Research\\ Cambridge, MA
	\and
	Michael Mitzenmacher\thanks{Supported in part by  NSF grants CCF-1563710, CCF-1535795, CCF-1320231, and CNS-1228598.  Part of this work was done while visiting Microsoft Research New England.}\\Harvard University \\ Cambridge,
	MA
}
\date{}
\begin{document}
\sloppy
\maketitle
\thispagestyle{empty}

\begin{abstract}
Suppose a customer is faced with a sequence of fluctuating prices, such as for airfare or
a product sold by a large online retailer.  Given distributional information about what price
they might face each day, how should they choose when to purchase in order to maximize
the likelihood of getting the best price in retrospect?
This is related to the classical secretary problem, but with values drawn from known
distributions.
In their pioneering work, Gilbert and Mosteller [\textit{J. Amer. Statist. Assoc. 1966}]
showed that when the values are drawn i.i.d., there is a thresholding algorithm that
selects the best value with probability approximately $0.5801$.
However, the more general problem with non-identical distributions has remained unsolved.

In this
paper we 
provide an algorithm for the case of non-identical distributions that
selects the maximum element with probability $1/e$, and we show that
this is tight. We further show that if the observations arrive in a random
order, 
this barrier of $1/e$
can be broken using a static threshold algorithm, and we show that our 
success probability is the best possible for any single-threshold algorithm under random observation order.  Moreover, 
we prove that one can achieve a strictly better success probability using
more general multi-threshold algorithms, unlike the non-random-order case.
Along the way, we show that the best achievable success probability for
the random-order case matches that of the i.i.d.\ case, which is approximately $0.5801$,
under a 
``no-superstars'' condition that no single distribution is very likely
ex ante to generate the maximum value.
We also extend our results to the
problem of selecting one of the $k$ best values.

One of the main tools in our analysis is a suitable
``Poissonization'' of random order distributions, which uses
Le Cam's theorem to connect the Poisson binomial distribution with
the discrete Poisson distribution. This approach may be of independent interest.

\end{abstract}


\maketitle

\setcounter{page}{0}
\newpage

\section{Introduction}

Suppose we are given a sequence of real numbers one by one, drawn from
independent but not necessarily identical distributions known in advance.
We can keep a single number from the sequence, but this choice must be made online.
At each observation, we can either select the current number
or push our luck and continue to the next observation.  Our goal is to 
maximize the probability of selecting the maximum (or equivalently minimum)
number from the sequence.

As a toy application, consider an airfare platform that provides a service
of suggesting when a buyer should purchase their ticket for the lowest fare.
Such a platform has distributional information about how expensive the fare 
will be each day before the flight.  Users hope to avoid the regret of purchasing
at a suboptimal price, and this incentivizes the platform to maximize the 
likelihood of suggesting the best price in hindsight.  
Given a model that maps time-before-flight and other fixed information (such
as location and airline) to a distribution over prices, how should the platform
make its online recommendations, and how likely is it to achieve the best price?


\paragraph{Secretaries: A Related Problem}
This question is related to 
the classical {\em secretary problem}.
In the secretary problem, we receive a sequence of randomly
permuted numbers 1 to $n$ in an online fashion.  We are given the
numbers one by one, but each time we observe a number, we see only
its relative rank compared to the previously observed numbers.  At
each observation, we have the option to stop the process and select
the most recent number. The goal is to maximize the probability of
selecting the maximum (or equivalently minimum) number.  For this problem,
Dynkin~\cite{dynkin1963optimum} presents a simple but elegant
algorithm that succeeds with probability at least $1/e$; indeed,
the success probability converges from above to $1/e$ as $n$ grows large,
and $1/e$ is the best possible bound (up to lower order terms) we can
achieve for this problem.

A natural variation of the problem assumes that the numbers are
drawn from the same known distribution, and the numbers themselves
are revealed one by one.  In their classic work, Gilbert and
Mosteller~\cite{GM66} consider this so-called ``full information'' case.\footnote{Despite the name, 
this is similar to the ``incomplete information'' setting in mechanism design.  
We will use the term ``full information'' in the sense of secretary 
problems, meaning that the distributions are known in advance.}
  As a starting
point, they show that one can pick a single threshold $\tau$ such that 
stopping at the first value larger than $\tau$ will select the maximum 
value with probability
approximately $0.517$ (asymptotically as $n$ grows large).  For the
general case where one can use a distinct threshold at each step,
they show that with the appropriate choice of thresholds one
succeeds at stopping at the maximum value with probability
approximately $0.5801$ (again, asymptotically as $n$ grows large;
both bounds are tight). These bounds significantly improve upon the
$1/e \approx 0.37$ result for the secretary problem, which
corresponds to the setting where the underlying distribution is not
known and only the relative ranks are
obtained.\footnote{Interestingly, one of the original motivations
Gilbert and Mosteller~\cite{GM66} provide is a simplified model for
an atomic bomb inspection program, which may have arisen from
Mosteller's work in Samuel Wilks's Statistical Research Group in New
York city during World War II on statistical questions about
airborne bombing.}

Since the work of Gilbert and Mosteller, there has been a vast literature on secretary problems going well 
beyond the scope of this paper.  We refer the interested reader to 
a survey by Freeman for an overview of this activity from the perspective of
stopping theory~\cite{Free83}.  The full information
case has received less attention, but there has been a notable line of work considering
variations such as $n$ being randomized~\cite{Porosinski87} and/or it being possible to revisit 
previously-observed values with a probability of failure~\cite{Petruccelli81}.  
To our knowledge, this literature on the full information
case has focused exclusively on the case of i.i.d.\ values.

\paragraph{Our Contributions}
We consider the more general problem of selecting the 
maximum (or minimum) value when the numbers are drawn from distributions that are 
independent but \emph{not necessarily identical}.
Our first result is that there is an algorithm that
achieves a success probability of $1/e$ in this non-i.i.d.\ setting, matching the original secretary problem,
and this is tight up to lower-order terms.
Our algorithm uses a single fixed threshold rule, and therefore
applies even if the values are revealed in an adaptively adversarial order.  
The proof is elementary; 
what is perhaps most surprising is that the single-threshold analysis is tight.
Our
lower bound holds even if the order is known in advance and applies to arbitrary algorithms, 
showing that a simple fixed-threshold rule is asymptotically optimal.
This expands the long-standing
result for the i.i.d.\ case due to Gilbert and Mosteller~\cite{GM66}
to the setting with different distributions.

We next consider a random-order model, where the values are 
drawn from arbitrary independent distributions but are
presented in a uniformly random order.
The  i.i.d.\ setting of Gilbert and
Mosteller~\cite{GM66} is a special case of 
this random order setting,
where all observed values are
chosen from the same distribution.  Our second result generalizes
the result of Gilbert and Mosteller
to show that in the
random-order
setting, it is possible to select the maximum
value with probability at least $0.517$, using a single-threshold
algorithm.  This improves on the
adversarial-order setting, and matches the tight bound for
single-threshold algorithms for the i.i.d.\ case~\cite{GM66}.

Still in the random-order model, we next present an algorithm that breaks this barrier
of $0.517$ using multiple thresholds.  As a corollary, algorithms that
use a single threshold are not optimal in the random-order model.
Our approach is to consider a natural ``no-superstars'' condition, 
which is that no single distribution has more than a certain constant probability (ex ante)
of generating the maximum value.  
This captures scenarios where no single entry has a non-vanishing impact on the
problem's solution in the limit as the problem size $n$ grows large.\footnote{We can think of
this as a \emph{large market} condition.}
We show that under such an assumption, 
there is an algorithm that succeeds with
probability arbitrarily close to $0.5801$, the tight success probability
obtainable in the i.i.d.\ setting with multiple thresholds, in the
limit as $n$ grows large.  If the no-superstars assumption
is violated, then the presence of a highly dominant distribution again makes it
possible to improve over the single-threshold bound of $0.517$.

It is natural to compare these results with the {\em prophet inequality}, introduced by Krengel et al.~\cite{krengel1978semiamarts,krengel1977semiamarts}. In the prophet inequality problem, the goal is to maximize the expected value of the number selected rather than the probability of selecting the maximum.  The classic prophet inequality is that one can achieve half of the expected maximum value using a single threshold algorithm, and this is tight.  The \emph{prophet secretary} model \cite{esfandiari2015prophet} considers this goal of maximizing expected value in the random-order model, which admits improved results. 
One can view our results as extending classic ``secretary-style'' results for best-choice problems to settings typical of prophet inequalities, with independent but non-identical distributions.

One distinction between the best-choice problem and a prophet inequality is that, in the best-choice problem, it is typically better to avoid a ``non-robust'' solution that achieves high expected value by accepting a very large number with very small probability, but otherwise does not obtain much value.  Motivated by this connection to robustness, one might relax the desideratum of picking only the highest number, and aim instead to obtain one of the top few values with high probability.  To this end, we consider a variant of our problem where the goal is to maximize the probability of selecting any of the top $k$ values.  A similar variant has been studied for the secretary problem
by Gusein-Zade~\cite{GZ66}, who shows that there is an algorithm whose failure probability is at most $O(\tfrac{\log k}{k})$.  When values are drawn i.i.d.\ from a known distribution, Gilbert and Mosteller~\cite{GM66} study the case $k=2$ and solve for the limiting probability of success.  We extend this to arbitrary $k$ and arbitrary distributions presented in an adversarial order, and show that there is an algorithm with failure probability exponentially small in $k$.  Moreover, this is the best possible bound, up to coefficients in the exponent, even in the i.i.d.\ setting.

As one of our main tools in our analysis, we use Le Cam's
theorem \cite{le1960approximation}, which (as we describe below) connects sums of Bernoulli
random variables and discrete Poisson distributions.  This result, along
with coupling techniques and other additions for our setting,
allow us to represent the probability distribution  for the maximum
(over several different distributions) by discrete Poisson distributions.
This variation of a ``Poissonization'' argument for these settings
appears novel, and may be of its own interest.



\subsection{Results and Techniques}

In what follows, we refer to the \our prophet inequality problem and \our prophet secretary problem for the variations we consider, where the goal is to maximize the probability of choosing the highest observed value given distributions presented in adversarial and random order, respectively.  We start by obtaining a tight bound for the \our prophet inequalities problem:
we provide an algorithm that selects the maximum with probability at least $\frac{1}{e}$ and show that there is no algorithm that selects the maximum with probability at least $\frac{1}{e}+\epsilon$ for any constant $\epsilon > 0$.  Although the probability of success here is the same as for the classical secretary problem, the proof and corresponding algorithm are not the same.
Our algorithm is based on choosing a suitable threshold and accepting any observation above that threshold.  We choose the threshold to optimize the probability that exactly one element lies above it, since we are guaranteed to accept the largest value in this case.  Perhaps surprisingly, our lower bound shows that this analysis is tight, even for an arbitrary selection rule with advance knowledge of the arrival order.  


We next provide a single threshold algorithm for the \our prophet secretary problem that selects the maximum with probability at least $0.517$.  This result utilizes some of the technology used for the \our prophet inequality result.
We also extend our analysis to the top-$k$-choice prophet inequality problem, and provide a single threshold algorithm that selects one of the top $k$ values with probability at least $1 - e^{-c_1 k}$, where $c_1>0$ is a fixed constant. We also show that this exponential dependence on $k$ is tight even in the i.i.d.\ setting: there is a constant $c_2>0$ such that no algorithm can select one of the top $k$ values with probability greater than $1 - e^{-c_2 k}$.  This tightness result involves arguing that an arbitrary algorithm must become ``trapped'' at some point in the observation sequence, with at least an exponential probability; conditional on what it has seen, there is a non-negligible chance that all of the top $k$ values have already been observed, but also a non-negligible chance that all of them are yet to come.

All of the algorithms above use a single fixed threshold.  For the \our
prophet inequality problem our lower bound shows that single-threshold algorithms
achieve tight results, but for the
\our prophet secretary problem 
we show that this is not the case.
Designing and analyzing multiple-threshold algorithms is significantly
more challenging, as dependencies and correlations naturally arise. To
overcome this, we develop an alternative
approach for analyzing the setting of multiple distributions in a
random order. The intuition is that for a large number of observations $n$,
we can split the observations into consecutive groups of size $n/T$ for
a suitable constant $T$, such that we can think of the maximum of each
group as being approximately from an i.i.d.\ distribution corresponding to a sample of $n/T$ distributions from the $n$ overall distributions.
That is, each group of $n/T$ distributions is sufficiently similar that
we can view the problem as very similar to the best-choice problem for $T$ i.i.d.\ observations.
Formalizing this closeness allows us to nearly achieve the same worst-case performance of 
a $T$-threshold scheme in the i.i.d.\ setting. 
This result requires a technical 
``no-superstars''
condition, which is that the \emph{a priori}
probability of any specific distribution being the maximum is $o(1)$.
Using this technique, and under this no-superstars assumption, we design a threshold-based algorithm whose
success probability converges
to $0.5801$ as $n$ grows large, which is tight even for the i.i.d.\ setting.  On the other hand, 
we show that if the no-superstars assumption is violated and there exists a distribution that has more than a certain constant probability of generating
the maximum value as $n$ grows large, then one can improve the single-threshold analysis.
Combining these methods leads to an unconditional improvement over the optimal worst-case bound
for single-threshold algorithms.

We briefly note that, unlike the expectation version of prophet
inequalities~\cite{esfandiari2015prophet}, in this setting of \our
prophet inequalities, all our results trivially extend to the setting
where we want to maximize the probability of finding the minimum
element as well.

\subsection{Poissonization Technique}

%
%

%


One approach used in \cite{GM66} involves setting a threshold and
considering the number of observations above that threshold.  In the
case of i.i.d.\ distributions for the observations, this number is the
sum of i.i.d.\ Bernoulli random variables, which is known to converge
to a Poisson distribution in the setting we consider (where the expected
number of positive observations is constant as the number of observations
grows large).

A helpful tool in extending such results to the setting where distributions
may differ for observations is Le Cam's theorem \cite{le1960approximation}.  The basic statement
of Le Cam's Theorem is the following:  let $X_1,\cdots, X_n$ be a sequence of Bernoulli random variables where $\prob{X_i=1}=p_i$ and $\lambda=\sum_{i=1}^N p_i$. We have
    $$\sum_{k=0}^{\infty} \Bigg| \prob{\sum_{i=1}^{n}\X_i = k} - \frac{\lambda^k e^{-\lambda}}{k!} \Bigg| < 2 \sum_{i=1}^{N} {p_i}^2.$$
Intuitively, Le Cam's Theorem says that when the probability of each random variable being $1$ in a sequence of Bernoulli random variables is
sufficiently small (e.g. $O(\frac 1 {N})$), the sum is well approximated by a Poisson distribution.
There are a number of interesting proofs of Le Cam's Theorem (see the survey \cite{steele1994cam}), including proofs that slightly improve the constant on the right hand
side above, but this general bound suffices for our purposes.

\subsection{Outline of Paper}
After formalizing notation in Section~\ref{sec:notation}, we present
our bounds for the \our prophet inequality problem and \our prophet
secretary problem in Section~\ref{sec:SingleT}.  In
Section~\ref{sec:TopK} we generalize these results to the problem of
selecting any of the top $k$ values.  Section~\ref{sec:iidRed}
contains our most technically demanding result, which is that the
bound achievable for the \our prophet secretary problem improves to
the (known) bound for the i.i.d. case using several thresholds. Then
in Section~\ref{sec:related} we summarize additional related work,
and we conclude with some open problems in
Section~\ref{sec:conclusion}.

\section{Notation}
\label{sec:notation}

In the \emph{\our prophet inequality problem}, we are given
a set of distributions $\{D_1,\ldots,D_n\}$.  We then observe an online 
sequence of values $x_1 ,\cdots, x_n$, where each $x_i$ is drawn independently from $D_i$,
presented in an arbitrary order.  When value $x_i$ is observed, we must
irrevocably decide whether or not to choose that value. Once we choose 
a value, the process stops.  A value that has been observed but not 
chosen cannot be chosen later.
The goal is to maximize the probability that the value chosen is equal to 
$\max_i \{x_i\}$.
We emphasize that the order in which the values are presented is arbitrary and 
not known in
advance.  We refer to the case where the distributions are identical as the \emph{i.i.d.\ setting}.

The \emph{\our prophet secretary problem} is identical, except that the
values are presented in a uniformly random order.  That is, 
%
after applying a random permutation $\Pi=\langle \pi_1,\ldots, \pi_n \rangle$ 
on the sequence of $x_i$ values, they are presented online in that order, so that
at step $k$, $\pi_k$ and $x_{\pi_k}$ are revealed.  Again, the goal is to maximize
the probability of choosing a maximum value.

Our algorithms will be threshold-based, where we choose a value
if and only if it lies above a suitable threshold.
We use $\Tau = \langle \tau_1,\ldots, \tau_n \rangle$ to refer to a
sequence of thresholds;  thus, we check for example whether $x_{\pi_k} \geq \tau_k$.
In the case that $\tau_1 = \tau_2 = \dotsc = \tau_n = \tau$, we say that
the algorithm is a \emph{single-threshold algorithm}.

In our proofs, we will assume for notational convenience that the distributions are \emph{atomless:}
the probability distributions are continuous, so that no single value takes on a non-zero probability.  
We use this assumption only to define the inverse of a given cumulative distribution; i.e.,
to find a value $\tau$ such that $\Pr_{x \sim D}[x \geq \tau] = p$ for some fixed $p \in [0,1]$.
This is only for convenience, and our results actually apply to the general case with atoms,
using the following reduction based on using an auxiliary random number to break ties (which we believe is folklore).
If there exists a value $\tau$ such that 
$\Pr_{x \sim D}[x \geq \tau] > p$ but also $\Pr_{x \sim D}[x \leq \tau] \geq p$ (i.e., there is an atom that
prevents the desired inversion), then we can modify our random
process to include a random variable $y$ drawn from the uniform distribution on $[0,1]$, and augment
threshold $\tau$ with a secondary threshold $\bar{y}$.  We will then interpret the event $[x \geq \tau]$ 
to mean $[(x > \tau) \vee ((x = \tau) \wedge (y \geq \bar{y}))]$, and set $y$ so that, under this
definition, $\Pr_{x \sim D}[x \leq \tau] = p$.  With this reduction in mind, we will assume throughout that distributions
are atomless without further comment.


\section{Best-Choice Algorithms with a Single threshold}
\label{sec:SingleT}
In this section, we describe algorithms and lower bounds for the \our prophet inequality problem (in Section~\ref{subsec:prophet}) and the \our prophet secretary problem (in Section~\ref{subsec:PS}).  All of the algorithms in this section will be single-threshold algorithms.

\subsection{Best-Choice Prophet Inequalities} \label{sec:pi}
We begin by showing that it is possible to choose the maximum value with probability at least $\frac{1}{e}$, using a single threshold, for the \our prophet inequality problem.  

\begin{theorem}\label{thm:pi:alg:improved}
For the \our prophet inequality problem, there is an algorithm that succeeds with probability at least $\frac{1}{e}$.
\end{theorem}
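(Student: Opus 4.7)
The plan is to use a single-threshold rule with a carefully chosen threshold, and to argue that we succeed whenever exactly one observation exceeds the threshold.

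First I would set up the algorithm. Let $p_i = \Pr_{x_i \sim D_i}[x_i \geq \tau]$ for a threshold $\tau$ to be chosen. Since the distributions are atomless, $\prod_i (1 - p_i)$ is a continuous function of $\tau$ ranging from $0$ (at $\tau = -\infty$) to $1$ (at $\tau = +\infty$), so we can pick $\tau$ so that $\prod_i (1-p_i) = 1/e$. The algorithm is: accept the first observed value $x_i$ with $x_i \geq \tau$. Crucially, regardless of the adversarial arrival order, the algorithm succeeds (selects $\max_i x_i$) whenever exactly one $x_i$ exceeds $\tau$, because in that case the unique accepted value is the maximum.

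Next I would lower-bound the probability that exactly one $x_i$ exceeds $\tau$. This probability is
\[
\sum_{i=1}^n p_i \prod_{j \neq i}(1-p_j) \;=\; \Bigl(\prod_{j}(1-p_j)\Bigr)\sum_{i=1}^n \frac{p_i}{1-p_i}.
\]
By our choice of $\tau$ the leading product is exactly $1/e$, so it remains to show $\sum_i p_i/(1-p_i) \geq 1$. For this I would use the elementary inequality
\[
\log\!\frac{1}{1-p} \;=\; \sum_{k\geq 1} \frac{p^k}{k} \;\leq\; \sum_{k\geq 1} p^k \;=\; \frac{p}{1-p} \qquad (0 \leq p < 1),
\]
summed over $i$, which gives $\sum_i p_i/(1-p_i) \geq \sum_i \log\tfrac{1}{1-p_i} = -\log\prod_i(1-p_i) = 1$. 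Combining the two factors yields success probability at least $1/e$.

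The main obstacle, really the only subtlety, is ensuring a valid threshold exists and that the chosen inequality is tight enough. The atomless reduction from Section~\ref{sec:notation} handles the existence of $\tau$, and the Taylor-series comparison above handles the tightness: equality in $\log(1/(1-p)) \leq p/(1-p)$ fails for any $p > 0$, so our bound has slack except in the limit of many tiny $p_i$'s, which matches the intuition that the worst case is near-Poisson behavior.
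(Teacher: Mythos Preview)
Your proof is correct and follows essentially the same approach as the paper: choose $\tau$ so that $\prod_i(1-p_i)=1/e$, reduce to showing $\sum_i p_i/(1-p_i)\ge 1$, and use the inequality $p/(1-p)\ge \log\tfrac{1}{1-p}$. The paper derives the latter from $e^x\ge 1+x$ rather than from the Taylor series, and it includes a $1/4$ warm-up before the $1/e$ argument, but the core proof is identical.
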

\begin{proof}
We will warm up by proving an easier result: a simple single-threshold algorithm that succeeds with probability $1/4$. We'll then show how to improve this to $1/e$.
Our algorithm will select threshold $\tau$ such that $\prob{\maxall \geq \tau}=1/2$, and choose the first value that is at least $\tau$.  From the definition of $\tau$, the algorithm chooses a value with probability $1/2$, otherwise it chooses nothing.  Conditional on having chosen a value, the algorithm will certainly succeed if no subsequent value is strictly greater than $\tau$.  But the probability of a subsequent value lying above $\tau$ is at most $1/2$, the probability that \emph{any} of the $n$ observations is greater than $\tau$.  So the probability of success, conditional on having selected an item, is at least $1/2$, leading to a total success probability of at least $1/4$.\footnote{This warm up is similar to ~\cite{samuel1984comparison}.}

We can modify the algorithm above to improve the success probability to $1/e$.  Namely, the algorithm will set threshold $\tau$ so that $\prob{\maxall \leq \tau}=1/e$ and pick the first number that is larger than $\tau$. We show that with probability at least $1/e$ there is exactly one number which is larger than $\tau$, which implies the desired result.  Let $p_i = \prob{x_{i}>\tau}$. By the way we choose $\tau$, we have 
\begin{align}\label{eq:pi:alg}
\prod_{i=1}^n (1-p_i)=1/e. 
\end{align}
We now consider the probability that exactly one number is larger than $\tau$, and show that it is at least $1/e$;
this completes the proof.\footnote{We thank an anonymous reviewer for providing us this simplification over our prior proof.}  
The probability that the $j$th observed value is larger than $\tau$ but all others are not is
\begin{align}\label{eq:pi2:alg}
\frac{p_j}{1-p_j} \prod_{i=1}^n (1-p_i) = \frac{1}{e} \frac{p_j}{1-p_j}.
\end{align}
We briefly note the fact that $e^x \geq 1+x$ implies (using $x= p_j/(1-p_j)$)
\begin{align}\label{eq:pi3:alg}
\frac{p_j}{1-p_j} \geq \ln \left ( \frac{1}{1-p_j} \right).
\end{align}
Now the probability that exactly one number is larger than $\tau$ is 
\begin{align*}
\frac{1}{e} \sum_{j=1}^n \frac{p_j}{1-p_j} & \geq  \frac{1}{e} \sum_{j=1}^n \ln \left ( \frac{1}{1-p_j} \right) \\
& =  \frac{1}{e} \ln \frac{1}{\prod_{j=1}^n  (1-p_j)} \\
& =  \frac{1}{e}.
\end{align*}
Here the first line follows from Inequality~\ref{eq:pi3:alg}, and the last line from 
$\prod_{j=1}^n  (1-p_j) = 1/e$.  
\end{proof}

Our algorithm uses only a single fixed threshold as its stopping rule.  One might suspect that a more complicated algorithm, perhaps one that modifies its thresholds adaptively or employs randomization, would perform better.  Our next result is that this is not the case: no online algorithm can guarantee a success probability strictly better than $\frac{1}{e}$.

\begin{theorem}\label{thm:pi:hard}
	For any constant $\eps > 0$, there is no algorithm that succeeds with probability $\frac{1}{e}+\eps$ for the \our prophet inequality problem. 
\end{theorem}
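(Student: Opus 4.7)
The plan is to exhibit a family of hard instances where no online algorithm can do better than $1/e + o(1)$, as $n \to \infty$. For each $n$, fix distinct values $v_1 < v_2 < \cdots < v_n$ together with a slowly-growing parameter $\lambda = \lambda(n) \to \infty$ satisfying $\lambda/n \to 0$ (say $\lambda = \log\log n$), and let $D_i$ output $v_i$ with probability $p := \lambda/n$ and $0$ otherwise. The adversary presents the distributions in the natural order $D_1, D_2, \ldots, D_n$, which the algorithm knows in advance.

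The first observation is that, because the values are monotonically increasing in position, the maximum of the observed sequence is $v_j$ where $j$ is the position of the last nonzero entry (or $0$ if no nonzero entry occurs). Thus any algorithm's success probability is bounded by $\prob{\text{algorithm selects the last nonzero}} + \prob{\text{all entries are zero}}$. The second term equals $(1-p)^n \to e^{-\lambda} = o(1)$, so it remains to upper-bound the probability of correctly identifying the last nonzero entry.

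For this, I would argue that the optimal strategy is a single-threshold rule: continue whenever $x_i = 0$, and stop at the first nonzero $x_i$ with $i \geq i^*$. This follows from a Markov/monotonicity argument. Since the $x_i$ are independent and values are fully determined by position, conditional on reaching step $i$ the relevant state is only the current index. The value of stopping upon seeing the nonzero $v_i$ at position $i$ is $(1-p)^{n-i}$ (the probability of no later nonzero), which is increasing in $i$, while the value of optimally continuing is a function of $i$ alone that is decreasing in $i$; hence the Bellman-optimal decision has threshold structure. Under a threshold $i^* = \alpha n$, the success probability equals $\prob{\text{exactly one nonzero in } [i^*,n]}$, which converges to $(1-\alpha)\lambda\,e^{-(1-\alpha)\lambda}$. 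Setting $\beta := (1-\alpha)\lambda$, this is $\beta e^{-\beta}$, maximized at $\beta = 1$ with value $1/e$. Combining with the negligible all-zero contribution, no algorithm can do better than $1/e + o(1)$ on this instance.

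The main obstacle I anticipate is rigorously ruling out non-threshold strategies (adaptive, randomized, or history-dependent). A clean route is a linear-programming relaxation: for any stopping rule, write the success probability as $\sum_j a_j (1-p)^{n-j}$ where $a_j := \prob{\text{stop at step } j,\ x_j \neq 0}$, subject to the natural feasibility constraints $a_j \leq p\cdot R_j$ and $R_{j+1} \leq R_j - a_j$, where $R_j := \prob{\text{algorithm reaches step } j}$. The resulting LP is upper-bounded by its vertex optimum, which corresponds exactly to a threshold rule, whose value we have just computed.
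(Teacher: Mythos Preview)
Your approach is correct and yields the same $1/e$ bound, but via a genuinely different hard instance. The paper takes $x_i = i$ with probability $1/i$ (and $0$ otherwise), which forces each position to be the maximum with probability exactly $1/n$; this directly mimics the classical secretary lower bound, the ensuing threshold analysis produces the familiar sum $\tfrac{i-1}{n}\sum_{j\ge i}\tfrac{1}{j-1}\le \alpha\ln(1/\alpha)\le 1/e$, and there is no all-zero case since $x_1=1$ always. Your uniform-probability instance $p_i=\lambda/n$ trades that secretary connection for a cleaner Poisson-style computation ($\beta e^{-\beta}\le 1/e$), at the cost of the auxiliary parameter $\lambda\to\infty$ needed to kill the all-zero event. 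The reduction to threshold rules is essentially the same in both proofs; your Bellman monotonicity claim is justified by the recursion $V(j)=(1-p)V(j+1)+p\max\{(1-p)^{n-j},V(j+1)\}\ge V(j+1)$, which already handles randomized and history-dependent rules (the future is independent of the past, and the success event depends only on the future), so the LP detour is sound but unnecessary. One small wrinkle: writing $i^*=\alpha n$ with fixed $\alpha$ is slightly misleading, since the optimizing threshold has $n-i^*\approx n/\lambda$, i.e.\ $\alpha\to 1$; it is cleaner to bound directly $\max_{m\ge 1} mp(1-p)^{m-1}=1/e+O(p)$.
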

\begin{proof}
	Consider the following example. There are $n$ random variables $x_1,\dots,x_n$ from distributions $D_1,\dots,D_n$ as follows: for any $i\in \{1,\cdots,n\}$, $x_i$ is $i$ with probability $q_i=\frac 1 i$ and $0$ otherwise.\footnote{One can make this atomless by assuming that $x_i$ is drown uniformly at random from $[i,i+\epsilon]$ with probability $q_i=\frac 1 i$ and $0$ otherwise.} Note that the nonzero random variable with the largest index is the maximum. Hence the probability of $x_i$ being the maximum is independent of the $x_j$ values with $j<i$. Moreover, $x_1$ is always $1$ and hence the maximum is never $0$. We let $p_i$  be the probability that $x_i$ is the maximum. The distributions will arrive in index order.
	
	We claim that $p_1=p_2=\dots=p_n=\frac 1 n$.  We will show this using strong induction.\footnote{This follows similar reasoning to the analysis of reservoir sampling.}  The base case holds for $i=n$ where $p_n=q_n=\frac 1 n$.  Assuming $p_{i+1}=\dots=p_n=\frac 1 n$, we have 
\begin{align*}
	p_i & = q_i\cdot\prob{x_{i+1}=\dots=x_n=0}\\
	&= q_i\cdot\prob{\text{maximum is not in $\{x_{i+1},\dots,x_n\}$ }}\\
	& = q_i\Big(1-\sum_{j=i+1}^{n} p_j\Big)
	=\frac 1 i \frac i n = \frac 1 n.
	\end{align*}
	Hence we have $p_1=p_2=\dots=p_n=\frac 1 n$. Also, we have $\prob{x_i=\max_{j=1}^n x_j|x_i\neq 0}=\frac i n$.
	
	Let $\alg$ be the best online algorithm and let $\alg_{i+1\rightarrow n}$ be the probability that $\alg$ picks the maximum assuming that it rejects $x_1,\dots, x_i$. 
Notice that if $\alg$ picks a nonzero number $x_j$ from $\{x_{i+1},\dots,x_n\}$, it is larger than all numbers in $\{x_1,\dots, x_i\}$. Hence $\alg_{i+1\rightarrow n}$ is independent of $x_1,\dots, x_i$. Notice that if $\alg$ rejects $x_{i+1}$ it picks the maximum with probability $\alg_{i+2\rightarrow n}$. Hence $\alg_{i+1\rightarrow n} \geq \alg_{i+2\rightarrow n}$, which means $\alg_{i+1\rightarrow n}$ is decreasing in $i$.
	
	Indeed, if $x_i\neq 0$, when $\prob{x_i=\max_{j=1}^n x_j | x_i\neq 0} \geq \alg_{i+1\rightarrow n}$, $\alg$ picks $x_i$ and stops. Otherwise, $\alg$ rejects $x_i$ and continues. Also remember that $\prob{x_i=\max_{j=1}^n x_j | x_i\neq 0}$ is increasing in $i$ and $\alg_{i+1\rightarrow n}$ is decreasing in $i$. Therefore, there exists an index $i$ such that for all $j<i$, $\alg$ rejects $x_i$ and accepts the first nonzero $x_j$ with $j\geq i$. Therefore, $\alg$ picks the maximum with probability 
	\begin{align*}
		\sum_{j=i}^{n} \prob{x_i=\dots=x_{j-1}=0}\prob{x_j=\max_{j'=1}^n x_{j'}} &=\frac 1 n + \sum_{j=i+1}^{n} \Pi_{k=i}^{j-1}(1-1/k) \frac 1 n \\
		&= \frac 1 n + \sum_{j=i+1}^{n} \frac{i-1}{j-1} \frac 1 n \\
		&= \frac 1 n + \frac {i-1}{n} \sum_{j=i+1}^{n} \frac 1 {j-1} \\
		&\leq \frac 1 n + \frac {i-1}{n} \Big( \ln (\frac {n+1}{i}) + \frac{1}{i} \Big)\\
		&\leq \frac 2 n + \frac {i-1}{n}  \ln (\frac {n+1}{i}) \\ 
		&\leq \frac 2 n + \frac {i}{n+1}  \ln (\frac {n+1}{i}) \\
		&\leq \frac 2 n + \alpha \ln (1/\alpha) &\text{for $\alpha=\frac {i}{n+1}\in [0,1]$.} 
	\end{align*}
	Note that $\alpha \ln (1/\alpha)$ maximizes at $\alpha=\frac 1 e$. Thus, $\alg$ picks the maximum with probability at most $\frac 1 e + \frac 2 n$.
\end{proof}

\label{subsec:prophet}
\subsection{Best-Choice Prophet Secretary} \label{sec:ps}
In this subsection we show a single threshold suffices to provide an algorithm that chooses the maximum value with probability $0.517$ for \our prophet secretary. To begin, we provide a simple analysis that achieves this $0.517$ probability for \our prophet inequalities with i.i.d.\ distributions.  
We note that this result was presented in \cite{GM66}, with the constant calculated numerically for large values of $n$.  We essentially follow their argument, but provide a formal justification for their numerical results.  

\begin{theorem}\label{thm:iid:alg1}
	For sufficiently large $n$, there exists a single threshold algorithm that chooses the maximum value with probability arbitrarily close to $\max_\lambda \sum_{k=1}^{\infty}\Big( \frac 1 k \frac{\lambda^k e^{-\lambda}}{k!} \Big) \approx 0.5173$, for \our prophet inequalities with i.i.d. distributions, and this is tight for single-threshold algorithms.
\end{theorem}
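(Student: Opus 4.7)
The plan is to analyze the natural single-threshold algorithm that sets $\tau$ so that $\Pr_{x\sim D}[x\geq \tau]=\lambda/n$ for a parameter $\lambda>0$ to be optimized. Letting $K$ be the number of observations exceeding $\tau$, we have $K\sim \mathrm{Binomial}(n,\lambda/n)$. The algorithm picks the first observation above $\tau$; because the values are i.i.d., conditional on $K=k\geq 1$, by exchangeability the identity of the maximum among the $k$ values above $\tau$ is uniform, so the first such value is the maximum with probability $1/k$. Hence
\begin{equation*}
\Pr[\text{success}] \;=\; \sum_{k=1}^{n} \Pr[K=k]\cdot \frac{1}{k}.
\end{equation*}

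Next I would invoke Le~Cam's theorem (as stated in the introduction) to pass to the Poisson limit. With $p_i=\lambda/n$ for all $i$, Le~Cam's bound is $2n(\lambda/n)^2 = 2\lambda^2/n \to 0$, so the total variation distance between $\mathrm{Binomial}(n,\lambda/n)$ and $\mathrm{Poisson}(\lambda)$ tends to zero. Since the weights $1/k\leq 1$ are bounded, we get
\begin{equation*}
\Pr[\text{success}] \;\longrightarrow\; \sum_{k=1}^{\infty} \frac{1}{k}\cdot \frac{\lambda^k e^{-\lambda}}{k!}
\end{equation*}
as $n\to\infty$. Choosing $\lambda$ to maximize the right-hand side (a smooth function of $\lambda$ whose maximizer lies near $\lambda\approx 1.5$, giving value $\approx 0.5173$) yields the claimed algorithmic bound. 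I would verify the interchange of limit and sum by truncating at some large $K_0$ (the Poisson tail past $K_0$ contributes $o(1)$) and applying Le~Cam's bound to the truncated sum.

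For the matching upper bound on any single-threshold algorithm, observe that such an algorithm is fully specified by the threshold $\tau$, and by the argument above its success probability is exactly $\sum_{k\geq 1}\Pr[K=k]/k$ where $K\sim \mathrm{Binomial}(n,p)$ with $p=\Pr_{x\sim D}[x\geq\tau]$. I would show this quantity is bounded in the limit by the displayed Poisson expression: if $np\to 0$ or $np\to\infty$ along a subsequence, the success probability vanishes (in the first case $\Pr[K\geq 1]\to 0$, in the second case $\mathrm{E}[1/K\mid K\geq 1]\to 0$ and $\Pr[K=0]\to 0$); otherwise pass to a subsequence along which $np\to \lambda\in(0,\infty)$ and apply Le~Cam's theorem exactly as above. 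Taking the supremum over $\lambda$ then shows no single threshold can asymptotically beat the value $\max_\lambda \sum_{k\geq 1}\lambda^k e^{-\lambda}/(k\cdot k!)$.

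The main obstacle I anticipate is not any single step but rather the bookkeeping required to make the ``$n\to\infty$'' convergence uniform enough to conclude the theorem for sufficiently large $n$: specifically, controlling the truncation error in the infinite sum uniformly in $\lambda$ (over the compact range of interest) and handling thresholds with $np$ growing arbitrarily slowly or quickly in the lower bound. Both are handled by combining Le~Cam's total-variation bound with the trivial tail bound on the Poisson distribution, but the argument must be written carefully to produce a clean ``arbitrarily close to $0.5173$'' statement rather than merely an asymptotic limit.
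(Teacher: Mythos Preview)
Your proposal is correct and follows essentially the same approach as the paper: set a single threshold so that each observation exceeds it with probability $\lambda/n$, use exchangeability to get conditional success probability $1/k$ when $k$ values exceed the threshold, and apply Le~Cam's theorem to pass from $\mathrm{Binomial}(n,\lambda/n)$ to $\mathrm{Poisson}(\lambda)$. Your treatment of the tightness direction is actually more explicit than the paper's, which simply asserts that the same argument yields the asymptotic upper bound.
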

\begin{proof}
	Let $\tau$ be given by $\prob{\maxall \leq \tau}=\P$, and $p = \prob{x_i\geq \tau} = 1-\P^{1/n}$ for $\P$ to be given later.  Let $\mathcal{K}$ be the random variable indicating the number of $x_i$ that are greater than $\tau$. When $\mathcal{K}\geq 1$, due to symmetry each of these $\mathcal{K}$ items is the maximum with probability $1/\mathcal{K}$, and since we pick the first item that is greater than $\tau$, when $\mathcal{K}\geq 1$ the maximum is chosen with probability $1/\mathcal{K}$. Thus, the probability that we pick the maximum is at least 
	\begin{align*}
		\sum_{k=1}^{n}\Big( \frac 1 k \prob{\mathcal{K}=k} \Big).
	\end{align*}
Here $\mathcal{K}$ is sum of Bernoulli random variables, and so the probability we choose the maximum is simply
	\begin{align*}
		\sum_{k=1}^{n}\Big( \frac 1 k {n \choose k} p^k (1-p)^{n-k} \Big).
	\end{align*}
For large $n$ we may use that the limit of the Bernoulli distribution becomes a Poisson distribution, and use numerical
calculations and Le Cam's theorem to obtain the result.  Specifically, take 
$\P=(1-1.501/n)^n \simeq e^{-1.501}$ and $p=\prob{x_i\geq \tau}=1.501/n$, where the $1.501$ is determined numerically.
By Le Cam's theorem $\sum_{k=0}^{\infty} \Bigg| \prob{\mathcal{K}=k} - \frac{\lambda^k e^{-\lambda}}{k!} \Bigg| < 2 n p^2$, where $\lambda=n p = 1.501$. This gives us $\sum_{k=1}^{\infty} \Bigg| \frac 1 k \prob{\mathcal{K}=k} - \frac 1 k \frac{\lambda^k e^{-\lambda}}{k!} \Bigg| < 2 n p^2 < \frac 6 n$. Therefore the probability that we pick the maximum is at least
	\begin{align*}
		\sum_{k=1}^{n}\Big( \frac 1 k \prob{\mathcal{K}=k} \Big) &\geq \sum_{k=1}^{n}\Big( \frac 1 k \frac{\lambda^k e^{-\lambda}}{k!} \Big) - \frac 6 n \\
		&\geq 0.5173 - \frac 6 n &\text{Calculated numerically for $\lambda = 1.501$}\\
		& \geq 0.517. &\text{Assuming $n\geq 20000$}
	\end{align*}
We note that by taking $n$ large enough, we can obtain a success probability arbitrarily close to the sum
$$\max_\lambda \sum_{k=1}^{\infty}\Big( \frac 1 k \frac{\lambda^k e^{-\lambda}}{k!} \Big)$$
using the same argument.  This is an asymptotic upper bound by a similar argument, so this success probability is tight.
\end{proof}

We are now ready to extend Theorem~\ref{thm:iid:alg1} to the more general \our prophet secretary problem.  Notice that the following theorem does not require $n$ to be large, so even when applied to the special case of i.i.d.\ distributions it extends Theorem~\ref{thm:iid:alg1} to general $n$.  

\begin{theorem}\label{thm:ps:alg1}
	For any $\eps' > 0$, there exists a single threshold algorithm that chooses the maximum value with probability at least $\max_\lambda \sum_{k=1}^{\infty}\Big( \frac 1 k \frac{\lambda^k e^{-\lambda}}{k!} \Big) \approx 0.5173 $, for the \our prophet secretary problem.
\end{theorem}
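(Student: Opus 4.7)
I will follow the template of Theorem~\ref{thm:iid:alg1}, now using Le Cam's theorem to handle non-identical distributions in the random-order model. Let $\lambda^* \approx 1.501$ be the maximizer of the Poisson sum $\sum_{k\geq 1} \tfrac{\lambda^k e^{-\lambda}}{k\cdot k!}$. I will choose the single threshold $\tau$ so that $\sum_{i=1}^n p_i = \lambda^*$, where $p_i = \Pr_{x \sim D_i}[x \geq \tau]$; this is achievable because, under the atomless assumption, the map $\tau \mapsto \sum_i \Pr_{x\sim D_i}[x \geq \tau]$ is continuous and ranges from $n$ down to $0$, so we can hit any intermediate value by the intermediate value theorem (assuming $n \geq \lambda^*$, which is the only non-trivial case).

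Next I will compute the success probability using permutation symmetry. Fixing the realized values, let $T = \{i : x_i \geq \tau\}$; whenever $T \neq \emptyset$ the global maximum lies in $T$, and in a uniformly random arrival order each member of $T$ is equally likely to be the first to arrive. Therefore the algorithm selects the maximum with probability $1/|T|$ given $|T| \geq 1$, and unconditionally
\[
\Pr[\text{success}] \;=\; \Ex\!\left[\tfrac{1}{S}\,\mathbf{1}[S\geq 1]\right] \;=\; \sum_{k\geq 1} \tfrac{\Pr[S=k]}{k},
\]
where $S = |T|$ is Poisson-binomial with parameters $(p_1,\ldots,p_n)$.

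Then I will invoke Le Cam's theorem to compare $S$ with a $\mathrm{Pois}(\lambda^*)$ variable: the total-variation bound $\sum_{k\geq 0}\left|\Pr[S=k] - \tfrac{e^{-\lambda^*}(\lambda^*)^k}{k!}\right| \leq 2\sum_i p_i^2$, combined with the trivial estimate $\tfrac{1}{k} \leq 1$ for $k\geq 1$, yields
\[
\Pr[\text{success}] \;\geq\; \sum_{k\geq 1}\tfrac{1}{k}\cdot \tfrac{e^{-\lambda^*}(\lambda^*)^k}{k!} \;-\; 2\sum_i p_i^2 \;\approx\; 0.5173 - 2\sum_i p_i^2.
\]
This exactly mirrors the numerical calculation in the proof of Theorem~\ref{thm:iid:alg1}, with Le Cam's theorem playing the role of the Poisson limit theorem used there.

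The hard part will be bounding $\sum_i p_i^2$, which in the i.i.d.\ case is $(\lambda^*)^2/n = o(1)$ but is not automatically small for arbitrary distributions. I will split into two regimes. When $\max_i p_i \leq \eps'/(2\lambda^*)$, the bound $\sum p_i^2 \leq (\max_i p_i)(\sum_i p_i) \leq \eps'/2$ immediately gives $\Pr[\text{success}] \geq 0.5173 - \eps'$. When some $p_i$ is a ``superstar'' bounded away from zero, I will argue separately that the success probability is already at least the Poisson value by conditioning on the superstar's outcome: if it exceeds $\tau$ (an event of probability $p_i$) then $S \geq 1$ is guaranteed, and the residual sum of Bernoullis has strictly smaller $\sum_{j\neq i} p_j^2$, to which Le Cam applies more tightly. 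Peeling off superstars one at a time should reduce the problem to the no-superstars regime while each stage only improves the success probability, giving the claimed bound for any $\eps'>0$.
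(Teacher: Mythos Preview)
Your steps up through the Le~Cam bound are fine and match the paper's approach: once the threshold is set, the success probability in the random-order model is exactly $\Ex[\mathbf{1}[S\ge 1]/S]$ for the Poisson--binomial count $S$, and Le~Cam gives the Poisson sum minus $2\sum_i p_i^2$. The gap is your superstar case. The ``peeling'' sketch is not a proof: after conditioning on $x_i\ge\tau$, the relevant quantity becomes $\Ex[1/(1+S')]$ with $S'$ a Poisson--binomial of mean $\lambda^*-p_i$, not $\lambda^*$; Le~Cam then compares $S'$ to $\mathrm{Pois}(\lambda^*-p_i)$, not $\mathrm{Pois}(\lambda^*)$, so ``applies more tightly'' does not translate into the bound you want. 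Likewise, on the event $x_i<\tau$ you get $\Ex[\mathbf{1}[S'\ge1]/S']$ with the same shifted mean. You would need to prove that the weighted combination $p_i\,\Ex[1/(1+S')]+(1-p_i)\,\Ex[\mathbf{1}[S'\ge1]/S']$ dominates $g(\lambda^*)$, and then iterate; you have not stated an inductive hypothesis, and it is not clear what invariant survives the peel. As written, this step is a heuristic.

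The paper sidesteps the superstar issue entirely by a splitting trick that you are missing: replace each $D_i$ by a bag of $N=n^2/\eps'$ i.i.d.\ dummy distributions $D_i'$ chosen so that the maximum of the $N$ dummies is distributed as $D_i$. The algorithm is unchanged (it sees only the bag maxima, which are the original $x_i$), but now one can count $\mathcal K'=\#\{(i,j):x_i^j\ge\tau\}$ at the dummy level. Since the bag with the overall maximum contains at least one dummy above $\tau$, pointwise $\mathcal K'\ge \mathcal K$ and $\mathbf{1}[\mathcal K\ge1]=\mathbf{1}[\mathcal K'\ge1]$, hence $\Ex[\mathbf{1}[\mathcal K\ge1]/\mathcal K]\ge \Ex[\mathbf{1}[\mathcal K'\ge1]/\mathcal K']$. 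Every dummy probability satisfies $p_i^j\le \eps'/n^2$, so $\sum_{i,j}(p_i^j)^2\le \eps'$ regardless of superstars, and Le~Cam applied to $\mathcal K'$ (with $\tau$ chosen so that $\sum_{i,j}p_i^j=\lambda^*$, equivalently $\Pr[\max_i x_i\le\tau]\approx e^{-\lambda^*}$) gives the Poisson sum minus $\eps'$. Since the algorithm does not depend on $\eps'$, letting $\eps'\to0$ yields the full $0.5173$ unconditionally. This coupling-via-splitting is the missing idea; it removes the need for any case analysis on $\max_i p_i$.
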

\begin{proof}
	As in Theorem~\ref{thm:iid:alg1}, we set $\tau$ such that $\prob{\maxall \leq \tau} \simeq e^{-1.501}$ and pick the first number which is at least $\tau$. We clarify the exact value of $\tau$ later in the proof after we present the required notation.
	To analyze the algorithm, for some arbitrary small $\eps'$ we replace each distribution $D_i$ with a bag of $n^2/\eps'$ identical and independent copies of a dummy distributions $D'_i$, where the distribution of the maximum of the $n^2/\eps'$ copies of $D'_i$ is equivalent to $D_i$. We let $x_i^j$ to be the realization of the $j$'th copy of $D'_i$, let $p_i^j = \prob{x_i^j \geq \tau}$, and let $n'=n^3/\eps'$ to be the total number of dummy distributions. By the way we have defined the dummy distributions, the distribution of the maximum of all dummy distributions is equivalent to the distribution of the maximum of the original problem.  
	
	The bags arrive in a random order and upon the arrival of each bag we observe the realization of the maximum number in the bag. The first time we face a bag with at least one number above the threshold, we stop and pick the maximum number in the bag. Again, the distribution of the value chosen in this framework is equivalent to that of our threshold algorithm on the actual distributions. 
	
	Let $\mathcal{K}$ be the random variable indicating the number of $x_i$s that are greater than $\tau$ and let $\mathcal{K}'$ be the random variable indicating the number of $x_i^j$s that are greater than $\tau$. In fact, if for some $i$ we have $x_i\geq \tau$, then for some $j$ we have $x_i^j \geq \tau$. Hence we have $\mathcal{K}' \geq \mathcal{K}$. 
	Notice that if $\mathcal{K}'\geq 1$ with probability $1/\mathcal{K}$ the bag that contains the maximum number arrives first and we select the maximum number; otherwise, we do not. Thus, we choose the maximum with probability 
	\begin{align}\label{eq:ps:bound}
	\sum_{k=1}^{n'}\Big( \prob{\text{We choose the maximum} \big| \mathcal{K}'=k} \prob{\mathcal{K}'=k} \Big) \geq 
	\sum_{k=1}^{n'}\Big( \frac 1 k \prob{\mathcal{K}'=k} \Big), 
	\end{align}
	where the inequality holds since $\mathcal{K}' \geq \mathcal{K}$.
	
	Now we are ready to set the value for $\tau$ given at the beginning of the proof;  specifically,  we set $\tau$ so that $\sum_{i=1}^{n}\sum_{j=1}^{n^2/\eps'} p_i^j$ equals the value of $\lambda$ that maximizes
$\max_\lambda \sum_{k=1}^{\infty}\Big( \frac 1 k \frac{\lambda^k e^{-\lambda}}{k!} \Big)$, which is approximately $1.501$.  This corresponds to $\lambda = 1.501$ for Le Cam's Theorem.  Also for any $i$ and $j$ we have $p_i^j\leq \eps'/n^2$. Using Le Cam's Theorem we have 
	\begin{align*}
			\sum_{k=1}^{\infty} \Bigg| \frac 1 k \prob{\mathcal{K}' = k} - \frac 1 k  \frac{\lambda^k e^{-\lambda}}{k!} \Bigg| &\leq \sum_{k=1}^{\infty}\Bigg| \prob{\mathcal{K}' = k} - \frac{\lambda^k e^{-\lambda}}{k!} \Bigg| \\
		&< 2 \sum_{i=1}^{n}\sum_{j=1}^{n^2/\eps'} {p_i^j}^2  &\text{Le Cam's Theorem}\\
		&\leq 2 \frac{n^3}{\eps'} \times \big(\frac {\eps'}{n^2}\big)^2 \leq \eps' &\text{since $p_i^j\leq \frac{\eps'}{n^2}$ and $n>1$.}
	\end{align*}
	This immediately gives us $\sum_{k=1}^{\infty}\Big( \frac 1 k \prob{\mathcal{K}'=k} \Big) \geq \sum_{k=1}^{\infty} \Big( \frac 1 k  \frac{\lambda^k e^{-\lambda}}{k!}\Big) - \eps'$. Therefore, the probability that our algorithm picks the maximum is at least 
	\begin{align*}
		\sum_{k=1}^{n'}\Big( \frac 1 k \prob{\mathcal{K}'=k} \Big) &= \sum_{k=1}^{\infty}\Big( \frac 1 k \prob{\mathcal{K}'=k} \Big) &\text{Since for $k>n'$, $\prob{\mathcal{K}'=k}=0$}\\
		&\geq \sum_{k=1}^{\infty} \Big( \frac 1 k  \frac{\lambda^k e^{-\lambda}}{k!}\Big) - \eps'.
	\end{align*}
	Recall that $\eps'$ is an arbitrary small positive number and the algorithm does not depend on $\eps'$. Hence, the probability that our algorithm picks the maximum is at least $\sum_{k=1}^{\infty} \Big( \frac 1 k  \frac{\lambda^k e^{-\lambda}}{k!}\Big)$ as claimed.
\end{proof}

We note that since the lower bound in Theorem~\ref{thm:ps:alg1} matches the upper bound on the performance of any single-threshold algorithm from Theorem~\ref{thm:iid:alg1}, we can conclude that the algorithm in Theorem~\ref{thm:ps:alg1} is best-possible among single-threshold algorithms for best-choice prophet secretary.

\label{subsec:PS}

\section{Top-$k$-Choice Algorithms}
\label{sec:TopK}

In this section we consider a variant of our best-choice problems, where the goal is relaxed to choosing one of the $k$ largest values.  Here $k > 1$ is fixed as $n$ grows large.  As before, we can make only a single choice; doing so stops the process, and that is the final selection.  We first show that for the top-$k$-choice prophet inequality problem, where the distributions are presented in an arbitrary order, there is a single-threshold algorithm whose probability of failure is exponentially small in $k$.

\begin{theorem}
\label{thm:topk}
For any $k \geq 1$, there exists an algorithm for the top-$k$-choice prophet inequality problem that succeeds with probability at least $1 - 2e^{-\gamma k}$, where $\gamma = (3-\sqrt{5})/2$. 
\end{theorem}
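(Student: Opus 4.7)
My approach will be to use a single-threshold algorithm: fix a threshold $\tau$ and select the first value $x_{\pi(i)}$ that satisfies $x_{\pi(i)} \geq \tau$. Let $p_i = \Pr[x_i \geq \tau]$ and let $M = \sum_i \mathbf{1}[x_i \geq \tau]$ be the (random) number of values above $\tau$. The pivotal structural observation is that whenever $1 \leq M \leq k$, the $M$ values exceeding $\tau$ are exactly the $M$ largest values overall, and since $M \leq k$ they all lie among the top $k$. Hence the algorithm succeeds as soon as it picks any value above $\tau$, and so the failure event is contained in $\{M = 0\} \cup \{M \geq k+1\}$, uniformly over all arrival orders (and in particular over the worst-case order).

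The plan is to calibrate $\tau$ so that the expected count $\mu := \sum_i p_i$ above the threshold equals $\alpha k$ for a constant $\alpha \in (0,1)$ to be chosen at the end. With this calibration, the first failure mode is bounded immediately: using $1-p \leq e^{-p}$ we get
$\Pr[M = 0] = \prod_i (1-p_i) \leq e^{-\mu} = e^{-\alpha k}$.
For the second failure mode, I will apply a multiplicative Chernoff bound to the Poisson binomial $M$, yielding an inequality of the form $\Pr[M \geq k+1] \leq \exp(-k\,\psi(\alpha))$ for an explicit function $\psi$ that is strictly positive whenever $\alpha < 1$.

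Finally, I will choose $\alpha$ to make the two exponents as large as possible simultaneously, i.e.\ to maximize $\min(\alpha, \psi(\alpha))$. A careful balance between the "too few" and "too many" regimes yields $\gamma = (3-\sqrt{5})/2$, the positive root of $(1-\gamma)^2 = \gamma$ (equivalently $\gamma + 1/\gamma = 3$). A union bound over the two failure events then produces the claimed bound of $2 e^{-\gamma k}$ on the total failure probability.

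The main obstacle will be the second step: obtaining a Chernoff-style tail bound that is tight enough to produce the specific constant $(3-\sqrt{5})/2$ after balancing against $e^{-\alpha k}$, and handling the lower-order discrepancy between bounding $\Pr[M \geq k]$ versus $\Pr[M \geq k+1]$ (which affects the exponent only through a factor of $(k+1)/k$ that vanishes for large $k$ but must be controlled uniformly). Once the tail bound is in hand, the optimization reduces to single-variable calculus and the golden-ratio-like constant emerges naturally from the balance condition.
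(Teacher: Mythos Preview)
Your proposal is correct and follows essentially the same approach as the paper: a single-threshold rule, the observation that the algorithm succeeds whenever $1\le M\le k$, Chernoff-type tail bounds on the Poisson binomial count with mean $\alpha k$, and balancing the two exponents via $(1-\alpha)^2=\alpha$ to obtain $\gamma=(3-\sqrt5)/2$. The only cosmetic difference is that the paper phrases both tails through the KL-divergence form of Chernoff (using $KL(p\|q)\ge (p-q)^2/\max(p,q)$), whereas you handle $\Pr[M=0]$ directly via $\prod_i(1-p_i)\le e^{-\mu}$; the resulting exponents and final optimization are identical.
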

%

The algorithm in Theorem~\ref{thm:topk} sets its threshold $\tau$ so that the expected number of values greater than $\tau$ is exactly $\gamma k$.  The result then follows by applying standard concentration bounds (Chernoff) to show that it is exponentially unlikely (in $k$) that no values are greater than $\tau$, and also exponentially unlikely that strictly more than $k$ values are greater than $\tau$.  The formal details are deferred to Appendix~\ref{app:TopK}.

One thing to note about the bound in Theorem~\ref{thm:topk} is that it is independent of $n$, which we can take to be very large relative to $k$.  It's tempting to imagine that one could improve this error in special cases such as the i.i.d.\ setting.  Our next result shows that this is not possible.  One cannot do better than an exponentially decreasing error in $k$, even for the i.i.d.\ setting and hence also for the top-$k$-choice prophet secretary problem.  

We note that for such a bound one cannot simply condition on observing a certain worst-case ordering over a collection of $\theta(k)$ distributions, as the probability of seeing any particular permutation of $\theta(k)$ elements is $e^{-\theta(k \log k)}$.  The intuition of our proof is that, say halfway through the process, there is at least an exponentially small probability that the algorithm becomes ``trapped:'' given what it has seen, there is at least an exponentially small probability that all of the top $k$ values were present in the first half, but also at least an exponentially small probability that all of the top $k$ values appear in the second half.  Thus, regardless of what the algorithm has done, an exponential error bound cannot be avoided.
Formalizing this intuition takes some care.

\begin{theorem}
\label{thm:topk.lb}
There exists a constant $c$ such that, for any fixed $k \geq 1$, no algorithm for the top-$k$-choice prophet inequality problem with identical distributions 
selects the maximum with probability more than $1 - e^{-c \cdot k}$.
\end{theorem}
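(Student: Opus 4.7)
The plan is to establish the lower bound for the uniform distribution on $[0,1]$ with a large number $n$ of samples, via a decoupling argument that formalizes the ``trapping'' intuition outlined in the paper. Without loss of generality consider a deterministic algorithm; let $\sigma_1 \in \{0,1\}$ be the indicator that the algorithm accepts some value within the first $n/2$ positions, so $\sigma_1$ is a function of $X_1,\dots,X_{n/2}$ alone. Define
$F_1 = \{\text{all top $k$ values lie in positions } n/2+1,\dots,n\}$ and
$F_2 = \{\text{all top $k$ values lie in positions } 1,\dots,n/2\}$.
On $F_1 \cap \{\sigma_1=1\}$ the algorithm has committed to a first-half value strictly below every top-$k$ value, and on $F_2 \cap \{\sigma_1=0\}$ it can only select from a second half containing no top-$k$ value; both cases are failures. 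Writing $p_i(x) := \prob{F_i \mid X_1,\dots,X_{n/2}=x}$ and using that $\sigma_1$ depends only on $X_1,\dots,X_{n/2}$, iterated expectation yields
\[
\prob{\text{fail}} \;\ge\; \Ex[p_1\sigma_1 + p_2(1-\sigma_1)] \;\ge\; \Ex[\min(p_1,p_2)].
\]

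Next I translate these conditional probabilities into explicit functions of the first-half statistics $M_1 = \max_{i\le n/2} X_i$ and $v_k = $ ($k$-th largest of $X_1,\dots,X_{n/2}$): $p_1(x)$ equals the probability that a $\mathrm{Binomial}(n/2,\,1-M_1)$ variable is at least $k$, and $p_2(x) = v_k^{n/2}$. The key tension is that $v_k \le M_1$, so ``$M_1$ small'' (needed for $p_1$ large) pushes $v_k$ away from $1$ (making $p_2$ small), and conversely; a naive argument would therefore lose a doubly exponential factor. The hard part is choosing the right trapping event that balances this tradeoff. My choice is $T = \{M_1 \le 1-2k/n\}$, which has $\prob{T} = (1-2k/n)^{n/2} \to e^{-k}$. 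On $T$ the Binomial mean is at least $k$, so $p_1 \ge 1/2 - o(1)$. Conditional on $T$, the first half is i.i.d.\ uniform on $[0,\,1-2k/n]$, so $v_k = (1-2k/n)\,W$ where $W$ is the $k$-th largest of $n/2$ uniforms on $[0,1]$, with $\Ex[W] \to 1-2k/n$; applying Jensen's inequality to the convex map $w\mapsto w^{n/2}$ gives
\[
\Ex[v_k^{n/2}\mid T] \;\ge\; (1-2k/n)^{n/2}\,(\Ex[W])^{n/2} \;\longrightarrow\; e^{-k}\cdot e^{-k} \;=\; e^{-2k}.
\]

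Since $v_k \le M_1 \le 1-2k/n$ forces $p_2 \le e^{-k}(1+o(1)) \le 1/2 \le p_1$ on $T$, the minimum equals $p_2$, and combining gives
\[
\prob{\text{fail}} \;\ge\; \prob{T}\cdot \Ex[p_2\mid T] \;\ge\; (1+o(1))\,e^{-k}\cdot e^{-2k} \;=\; (1+o(1))\,e^{-3k},
\]
establishing the theorem with any constant $c>3$ once $n$ and $k$ are sufficiently large. The main obstacle is exactly the opposition between $p_1$ and $p_2$: since $v_k \le M_1$, no typical first-half realization makes both probabilities large at once, and integrating $v_k^{n/2}$ over all atypical first halves would collapse to a doubly exponential quantity. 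The event $T$ is tuned precisely to the quantile $1-2k/n$ so that $\prob{T}$ and $\Ex[p_2\mid T]$ each contribute only a single exponential factor $e^{-\Theta(k)}$, and Jensen's inequality is what prevents that second factor from blowing up.
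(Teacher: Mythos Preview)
Your argument is correct. The reduction to deterministic algorithms, the two failure events $F_1,F_2$, the identity $\Pr[\text{fail}]\ge \Ex[\min(p_1,p_2)]$, the formulas $p_1=\Pr[\mathrm{Bin}(n/2,1-M_1)\ge k]$ and $p_2=v_k^{n/2}$, and the Jensen step all check out. On $T$ the binomial has integer mean at least $k$, so the standard median bound for binomials gives $p_1\ge 1/2$ exactly; together with $p_2\le(1-2k/n)^{n/2}\to e^{-k}<1/2$ this yields $\min(p_1,p_2)=p_2$ on $T$ for all $k\ge 1$ once $n$ is large, and the conclusion $\Pr[\text{fail}]\ge(1+o(1))e^{-3k}$ follows. (The ``$k$ sufficiently large'' caveat is unnecessary: for each fixed $k\ge 1$ you may take $n$ large and absorb the $1+o(1)$ into the constant.)

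Your route is genuinely different from the paper's. The paper works with a sparse distribution (an atom at $0$ plus a uniform on $[1,2]$ with mass $k/n$) and engineers two \emph{independent} set-up events: $A=\{$exactly $k$ nonzero values in the first half$\}$ and $B=\{$a particular rank-ordering between the first $k$ successes in each half$\}$, each of probability $e^{-\Theta(k)}$; conditional on $A\cap B$, the algorithm's fate then hinges on the (still random) count $Z_2$ of second-half successes, and either $Z_2=0$ or $Z_2\ge k$ forces failure. The value of this decoupling is that $A$, $B$, and $Z_2$ are jointly independent, so the three exponential factors multiply cleanly without any analytic estimate. Your approach instead stays with the plain uniform distribution and handles the dependence between $p_1$ and $p_2$ head-on: you pick the single quantile event $T=\{M_1\le 1-2k/n\}$ and use convexity (Jensen on $w\mapsto w^{n/2}$) to lower-bound $\Ex[p_2\mid T]$. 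This is more elementary---no artificial discretization, no combinatorial ordering event---and it yields an explicit constant $c$ just above $3$, whereas the paper leaves its exponent as an unspecified $\Theta(k)$.
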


\section{Improved Best-Choice Prophet Secretary with Multiple Thresholds}\label{sec:iidRed} 

As we showed in Section \ref{subsec:prophet}, a single threshold algorithm achieves tight results for \our prophet inequalities. However, this does not seem to be true for \our prophet secretary. In this section, which captures our main result, we seek to go beyond the single threshold algorithms and design a more efficient algorithm for \our prophet secretary. Our algorithm will use multiple thresholds.
First we provide an algorithm for inputs with an additional assumption that we call the \emph{no-superstars assumption}, which is that no single observation has too large a probability, \emph{a priori}, of being the largest value. Then we use this algorithm to provide an unconditional algorithm for \our prophet secretary that improves upon single threshold algorithms. 

\begin{definition}[No-Superstars Assumption.] \label{def:iid:smallbid}
	We say that a set of distributions $\{D_1,\dots,D_n\}$ satisfies the {\em no $\eps$-superstars assumption} if, for all $i\in \{1,\dots,n\}$,
	$$\prob{i = \argmax_{j=1}^n x_j}\leq \eps,$$
	where each $x_i$ is a random variable drawn from $D_i$. 
\end{definition}
In particular, we will show that our algorithm results in an improved bound (relative to the best single-threshold algorithm) when the set of distributions satisfies a no $\eps$-superstars assumption for a sufficiently small constant $\eps$. We will sometimes drop the $\eps$ and simply refer to the ``no-superstars assumption'' when $\eps$ is clear from context.


The starting point for our algorithm is the analysis of Gilbert and Mosteller~\cite{GM66}, which shows that in the i.i.d.\ setting the optimal (multi-threshold) algorithm succeeds with probability $0.5801$ as $n$ grows large.  At an intuitive level, we would like to establish that a prophet secretary instance behaves similarly to an i.i.d.\ instance, where each of the distinct distributions is replaced by an ``average'' of all the distributions.  However, this is not quite right due to correlations between values.  For example, once the process reaches the last few distributions, the algorithm may have a lot of information about their likely outcomes relative to an i.i.d.\ instance, because knowing which distributions are left could be very informative.

To dampen this correlation, we will instead consider groups of $qn$ consecutive observations for some small constant $q$.  The maximum of each collection of $qn$ distributions will, because of concentration from sampling, be distributed very similarly to the maximum of a suitable average of all the distributions, and there is negligible correlation between the $1/q$ collections.  It is here where we make use of the no-superstars assumption.  We can therefore model our \our prophet secretary instance as a (nearly) i.i.d.\ instance with $1/q$ observations, and design an algorithm based on the i.i.d.\ variation of the problem.  This ultimately leads to an algorithm for \our prophet secretary that succeeds with probability as close as desired to the worst-case guarantee of the best i.i.d.\ algorithm.

\begin{theorem}\label{thm:Ralg:main:foreshadow}
	Let $\alg_{\tau}$ be any threshold-based algorithm that selects the maximum with probability at least $\alpha$ when values are i.i.d.  Then for any $\gamma \in (0,1)$, there is an algorithm for the \our prophet secretary problem that selects the maximum with probability at least $(\alpha-13\gamma)$, whenever the distributions satisfy the no $\eps$-superstars assumption with $\eps =\frac{\gamma^{10}}{24\log(\frac 2 {\gamma^2})}$.  In particular, we can take $\alpha \approx 0.5801$ as $n$ grows large.
\end{theorem}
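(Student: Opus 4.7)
The plan is to follow the intuition laid out immediately before the theorem: partition the $n$ arrivals into $T = \Theta(1/\gamma)$ consecutive groups and argue that, under the no-superstars assumption, the sequence of group-maxima behaves almost like $T$ i.i.d.\ draws from a common ``average'' distribution $D^{\ast} = \frac{1}{n}\sum_{i=1}^n D_i$. Then I would simulate the given $T$-threshold i.i.d.\ algorithm $\alg_{\tau}$ on this virtual i.i.d.\ instance, with one caveat: within a group we must commit online, so when the simulation says ``accept group $t$'' we actually pick the \emph{first} value in the group that exceeds that group's threshold.

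First I would fix $T = \Theta(1/\gamma)$, partition the arrival sequence into groups $G_1,\dots,G_T$ of size $n/T$, and let $y^{(t)} = \max_{i\in G_t} x_i$. Since the arrival order is a uniform random permutation, each $G_t$ is a uniform random subset of the $n$ distributions of size $n/T$, and I would compare this against the ideal process in which each group is instead an i.i.d.\ sample of $n/T$ distributions drawn \emph{with replacement} from the uniform mixture over $\{D_1,\dots,D_n\}$. Because the no-$\eps$-superstars hypothesis guarantees that no single distribution has large mass above any threshold that $\alg_{\tau}$ might reasonably use, replacing sampling without replacement by sampling with replacement costs only a small total variation, bounded uniformly across the $T$ thresholds of interest.

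Next, inside each group I would invoke Le Cam's theorem to compare threshold-crossing counts to a Poisson distribution. For each threshold $\tau_t$ used by $\alg_{\tau}$, the count $\mathcal{K}_t$ of values in $G_t$ exceeding $\tau_t$ is a sum of Bernoullis whose success probabilities are each $\leq \eps$ (being above a relevant threshold is a necessary condition for being the global maximum, so the no-superstars property controls these probabilities). Le Cam gives total variation of order $\eps \lambda_t$ between $\mathcal{K}_t$ and $\mathrm{Poi}(\lambda_t)$, where $\lambda_t = \Ex[\mathcal{K}_t]$, and the same Poisson approximation holds for the analogous count under the i.i.d.\ sample from $D^{\ast}$. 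Chaining these two comparisons via the common Poisson shows that $(y^{(1)},\dots,y^{(T)})$ is close in total variation to $T$ i.i.d.\ samples from the distribution of the max of $n/T$ draws from $D^{\ast}$; the acceptance rule ``first value above $\tau_t$'' succeeds within the group $t^{\ast}$ containing the global maximum precisely when exactly one value in that group exceeds $\tau_{t^{\ast}}$, an event whose probability is again controlled by the Poisson approximation and contributes $O(\gamma)$ loss per group.

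The main obstacle will be the careful bookkeeping in this coupling: one must propagate the marginal closeness of group-maxima to the \emph{entire} online decision process of $\alg_{\tau}$, without having the errors compound across the $T$ sequential decisions. Summing the four sources of loss --- (i) sampling without vs.\ with replacement, (ii) the per-group Le Cam error, (iii) the loss from committing to the first-above-threshold value rather than the group maximum, and (iv) the baseline shortfall of $\alg_{\tau}$ from $\alpha$ --- yields the $(\alpha - 13\gamma)$ guarantee. The peculiar quantitative choice $\eps = \gamma^{10}/(24 \log(2/\gamma^2))$ is calibrated precisely so that the product of the $T = \Theta(1/\gamma)$ per-group Le Cam and sampling-approximation contributions stays $O(\gamma)$, with the $\log(2/\gamma^2)$ factor absorbing a union bound over the $T$ thresholds that $\alg_{\tau}$ consults.
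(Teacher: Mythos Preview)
Your high-level plan---partition the random-order stream into $T=\Theta(1/\gamma)$ groups and run a $T$-step i.i.d.\ algorithm on the group maxima---matches the paper's strategy. But the technical route you sketch diverges from the paper in two substantive ways, and one of them hides a genuine gap.

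\textbf{Where the techniques differ.} The paper does \emph{not} use Le Cam's theorem or the mixture $D^{\ast}=\tfrac{1}{n}\sum_i D_i$ here. Instead it fixes a finite grid of thresholds $t_0<\dots<t_c$ (with $c=\Theta(1/\gamma^3)$) so that $\Pr[\max_i x_i\in[t_\zeta,t_{\zeta+1})]=\rho=\gamma^3$ for each $\zeta$, and defines an explicit target distribution $D_{\min}$ supported on these grid points. It then shows, via Massart's concentration inequality for sampling \emph{without} replacement, that with probability $1-O(\gamma)$ over the random permutation every group $S_\eta$ simultaneously satisfies the deterministic quantile bounds of Lemma~\ref{lm:Ralg:DS}. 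This high-probability event yields, after a further $(1-4\gamma)$ deflation, a pointwise stochastic sandwich $\hat D_{S_\eta}\preceq D_{\min}\preceq \tilde D_{S_\eta}$ on the discretized scale, and the comparison with $\alg_\tau$ proceeds by monotonicity rather than by total variation.

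\textbf{The gap in your argument.} You correctly flag the sequential coupling as ``the main obstacle,'' but then propose nothing concrete to resolve it. Your Le Cam bound controls, for each group separately, the count of values above a single threshold; it says nothing about the \emph{joint} law of $(y^{(1)},\dots,y^{(T)})$, which under sampling without replacement is not a product measure. Claiming ``$(y^{(1)},\dots,y^{(T)})$ is close in total variation to $T$ i.i.d.\ samples'' does not follow from marginal Poisson approximations, and a naive union bound would accumulate errors multiplicatively across $T$ steps. The paper sidesteps joint-TV entirely: once the high-probability concentration event holds, the stochastic sandwich holds for \emph{every} group simultaneously, so one can compare reaching probabilities $\phi_\eta$ and success-at-$\eta$ probabilities $\sigma_\eta$ term by term using monotonicity (this is the content of Lemma~\ref{lm:Ralg:Alg0}). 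Without a device of this sort---discretization plus a deterministic sandwich on a good event---your outline does not explain why the $T$ sequential decisions do not compound the error. You also omit the second failure mode the paper tracks (Lemma~\ref{lm:iid:row}): because the paper works with rounded values $\tilde x_{S_\eta}$, two different groups can tie on the grid while the true maxima differ, and this must be charged separately.
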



While Theorem~\ref{thm:Ralg:main:foreshadow} requires a no-superstars assumption, we can use it to 
show that for general input distributions, the single-threshold algorithm is not tight, under the additional assumption that we observe not just the value but also which distribution the value arises from in each observation. 

\begin{theorem}\label{thm:ps:algEps}
	There exists an algorithm for the \our prophet secretary problem that chooses the maximum value with probability at least $\max_\lambda \sum_{k=1}^{\infty}\Big( \frac 1 k \frac{\lambda^k e^{-\lambda}}{k!} \Big) + \eps_0 $, where $\eps_0$ is a positive constant, when we observe not just the value but also the distribution from which each value arises.
\end{theorem}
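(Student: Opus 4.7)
I would case-split on whether the input distributions satisfy the no $\eps$-superstars assumption for a suitably small constant $\eps > 0$. Fix a small constant $\gamma > 0$ (for concreteness, say $\gamma = 0.004$) and set $\eps = \gamma^{10}/(24\log(2/\gamma^2))$ to match the hypothesis of Theorem~\ref{thm:Ralg:main:foreshadow}. In the ``no-superstars'' branch, invoke Theorem~\ref{thm:Ralg:main:foreshadow} with the Gilbert--Mosteller i.i.d.\ algorithm (which guarantees $\alpha \approx 0.5801$) as the base algorithm, yielding success probability at least $\alpha - 13\gamma > 0.528$, a positive constant above $0.5173$. This handles one side of the case analysis and determines one candidate for $\eps_0$.

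In the complementary branch, there is some $i^*$ with $\prob{i^* = \argmax_j x_j} > \eps$. Because the algorithm can observe the distribution label of each arrival, it knows when the superstar arrives and can treat it specially. My plan is to use a two-threshold algorithm: set a personalized threshold $\tau_S$ for the superstar and a threshold $\tau_{NS}$ for the remaining distributions, chosen so that $\sum_{i \neq i^*} \prob{x_i \geq \tau_{NS}}$ matches the Poisson-optimal parameter $\lambda^* \approx 1.501$. The algorithm accepts the first arrival whose value exceeds its own threshold. To analyze it, let $p^* = \prob{x_{i^*} \geq \tau_S}$ and let $K'$ count non-superstars exceeding $\tau_{NS}$; Le Cam's theorem approximates $K'$ by $\mathrm{Poisson}(\lambda^*)$. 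Conditional on the superstar failing its threshold, the analysis recovers the classical single-threshold bound $F(\lambda^*) \approx 0.5173$. The event that the superstar passes $\tau_S$ and is the overall maximum adds a strictly positive contribution, giving a constant improvement; the final $\eps_0$ is the minimum of the two improvements delivered by the two branches.

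\textbf{Main obstacle.} The difficult step is calibrating $\tau_S$ so the gain from exploiting the superstar strictly outweighs the losses, which arise when we either (i) accept the superstar while it is not the overall maximum, or (ii) accept a non-superstar above $\tau_{NS}$ while the superstar would have been larger. The key quantitative lemma is that, for a suitable choice of $\tau_S$, the conditional probability $\prob{x_{i^*} = \max_j x_j \mid x_{i^*} \geq \tau_S}$ is bounded below by a positive function of $\eps$, which together with $p^*$ being comparable to $\eps$ yields a net positive $\eps_0 > 0$. A secondary subtlety arises if some non-superstar distribution has a large $\prob{x_i \geq \tau_{NS}}$, violating Le Cam's small-probability hypothesis; such distributions can be iteratively ``peeled off'' and handled analogously to the primary superstar, preserving a Poisson approximation on the residual.
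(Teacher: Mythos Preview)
Your no-superstars branch is identical to the paper's: both invoke Theorem~\ref{thm:Ralg:main:foreshadow} with the Gilbert--Mosteller i.i.d.\ algorithm to get roughly $0.5801 - 13\gamma$, which beats the single-threshold bound by a fixed constant.

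Your superstar branch, however, diverges from the paper and carries a real gap. You propose a personalized threshold $\tau_S$ for the superstar and argue that for a suitable $\tau_S$ one has simultaneously $p^* = \prob{x_{i^*}\geq\tau_S}$ of order $\eps$ and $\prob{x_{i^*}=\max \mid x_{i^*}\geq\tau_S}$ bounded below by a function of $\eps$. You correctly flag this as the main obstacle, but you do not show it can be overcome, and in fact the hypothesis $\prob{i^*=\argmax_j x_j}\geq\eps$ alone does not obviously guarantee such a $\tau_S$ exists: the superstar's ``being the max'' mass could sit entirely in a region where it is below the natural threshold $\tau$ (so raising $\tau_S$ kills $p^*$) while simultaneously having large mass above $\tau$ that is \emph{not} the max (so lowering $\tau_S$ kills the conditional probability). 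Even granting the lemma, you have not carried out the net gain/loss accounting against the baseline, which is where the real work lies.

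The paper sidesteps this calibration problem entirely by splitting the superstar case into two sub-cases according to whether $\prob{x_{i^*}<\tau \text{ and } i^*=\argmax}\geq\eps/2$ or $\prob{x_{i^*}\geq\tau}\geq\eps/2$. In the first sub-case, rather than changing the threshold, the paper modifies the algorithm \emph{positionally}: if the superstar arrives among the last $\eps n/2$ items and nothing has been accepted, accept it regardless of value. The gain (superstar is the max, below $\tau$, and late) is at least $\eps^2/4$, while the loss (superstar late but not the max) is at most $\eps^2/8$. In the second sub-case, the paper does not change the algorithm at all; it shows that the analysis of Theorem~\ref{thm:ps:alg1} was already slack by a constant, because the Poissonization bound in Inequality~\eqref{eq:ps:bound} treats multiple dummy copies of the superstar as separate competitors when in reality they are bundled. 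Your threshold-based plan does not recover this second sub-case, since there the optimal $\tau_S$ may well equal $\tau$ and the improvement comes purely from sharper analysis, not a better algorithm.
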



We give the formal details of our algorithm and analyze its success probability in Section~\ref{subsec:reduc-alg}.  Omitted proof details appear in Appendix~\ref{subsec:reduc-alg-proofs}.  The main technical difficulty in the analysis is establishing the necessary concentration bounds, which require some care because we are sampling without replacement and do not have a good uniform bound on the contribution of any single value.  We defer the proof details of these concentration inequalities to 
Appendix~\ref{subsec:reduc-conc}.


\subsection{An Algorithm for Best-Choice Prophet Secretary}\label{subsec:reduc-alg}

Before we describe our algorithm for the \our prophet secretary problem,
we must first provide some definitions and fix some parameters. 
Throughout this subsection, for an arbitrary $\gamma\in (0,1)$ we set $\lambda_0=\gamma$, $\rho=\gamma^3$, $q=\frac{\gamma^2}{2}$, and $\delta=\frac{\gamma^6}{4}$. Notice that we have $\frac{\gamma\lambda_0}{2\rho} = \frac{\gamma^2}{2\gamma^3}=\frac{1}{2\gamma}\geq\frac{\gamma^2}{2}=q$. 
We will then set $\eps = \frac{  \gamma^2 q^2 \rho \lambda_0 }{2\log \frac {2} {\delta}} = \frac{\gamma^{10}}{8\log(\frac 8 {\gamma^6})}= \frac{\gamma^{10}}{24\log(\frac 2 {\gamma^2})}$; this will be the value $\eps$ we require in the no-superstars assumption.  We note that we have not aimed to optimize these parameters.  

Set $c = \frac{1-\lambda_0}{\rho}$.  We let $t_0,\dots, t_c$ be the (unique) sequence of thresholds such that, for each $\i \in \{0,\dots,c\}$, we have $\prob{\maxall \leq t_{\i}} = \lambda_0+\i \rho$.  That is, the probability that $\maxall$ falls between any two consecutive thresholds is $\rho$, and the probability that it falls below $t_0$ is $\lambda_0$.
 
The next definition captures our desire to combine multiple distributions $D_i$ into a single collection, and study the maximum of the values drawn from that collection of distributions.
 
\begin{definition}[Collection Distribution] Let $S \subseteq \{D_1,\ldots,D_n\}$ be an arbitrary set. We define the collection distribution $D_S$ using the following procedure: $D_S$ draws $x_i$ from distribution $D_i$ for each $D_i \in S$, then returns $\max_{D_i\in S} x_i$.  We use $x_S$ to indicate an outcome of $D_S$. 
\end{definition}

The following lemma provides a concentration result for the distribution $D_S$, when $S$ is a set of size $qn$ chosen uniformly at random without replacement from $D_1,\dots,D_n$. Intuitively, this says that if we decompose a random order sequence of $D_1,\dots,D_n$ into $\frac 1 q$ subsequences, each of size $qn$, these subsequences behave similarly to an i.i.d.\ distribution. We use this to prove our main result.  We defer the proof to Section~\ref{subsec:reduc-conc}.

\begin{lemma}\label{lm:Ralg:DS}
	Let $S$ be a set of size $qn$, chosen uniformly at random without replacement from $D_1,\dots,D_n$. 
	With probability $1-\frac{\gamma^3} 2$ for all $\i \in \{0,\dots,c-1\}$ we have
	\begin{align*}
		(1-3\gamma)q \sum_{i=1}^n p^{\i}_i \leq \prob{ t_{\i} \leq x_S < t_{\i+1}} \leq (1+\gamma)q \sum_{i=1}^n p^{\i}_i,
	\end{align*}
	where $p^{\i}_i = \prob{t_{\i} \leq x_i < t_{\i+1}}$, assuming the no $\eps$-superstars assumption with $\eps =\frac{\gamma^{10}}{24\log(\frac 2 {\gamma^2})}$.
\end{lemma}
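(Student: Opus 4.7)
}

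The plan is to express $\Pr[t_\i \leq x_S < t_{\i+1}\mid S]$ in closed form using CDFs, reduce the analysis to showing concentration of two sums over the random set $S$, and then apply Bernstein-type inequalities for sampling without replacement. Writing $F_i$ for the CDF of $D_i$ and using independence, I would start from
\[
\Pr[t_\i \leq x_S < t_{\i+1}\mid S] \;=\; \prod_{i \in S} F_i(t_{\i+1}) \;-\; \prod_{i \in S} F_i(t_\i),
\]
and then substitute $F_i(t_{\i+1}) = F_i(t_\i) + p^\i_i$ to factor this as
\[
\prod_{i \in S} F_i(t_\i) \cdot \left(\prod_{i \in S}\!\left(1 + \frac{p^\i_i}{F_i(t_\i)}\right) - 1\right).
\]
The leading-order approximation is $(\lambda_0 + \i\rho)^q \cdot q \sum_i p^\i_i$, and since $(\lambda_0 + \i\rho)^q \geq \lambda_0^q = \gamma^{\gamma^2/2} \geq 1 - \gamma$ for small $\gamma$ (because $(\gamma^2/2)\ln(1/\gamma) \leq \gamma/2$), this is within $(1-\gamma)$ of the target $q\sum_i p^\i_i$.

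Next I would use the no $\eps$-superstars assumption to prove the uniform bound $P^\i_i := \Pr[x_i \geq t_\i] \leq \eps/\lambda_0$. For this, I would write
\[
\Pr[i = \argmax_j x_j,\ x_i \geq t_\i] \;=\; \int_{t_\i}^{\infty} f_i(x)\prod_{j \neq i} F_j(x)\,dx \;\geq\; P^\i_i \cdot \prod_{j \neq i} F_j(t_\i) \;\geq\; P^\i_i \cdot \lambda_0,
\]
because $\prod_j F_j(t_\i) = \lambda_0 + \i\rho \geq \lambda_0$. Since the left side is at most $\Pr[i = \argmax_j x_j] \leq \eps$, this yields $P^\i_i \leq \eps/\lambda_0$, and hence $p^\i_i$ and $-\ln F_i(t_\i) \leq 2P^\i_i$ are all bounded by $O(\eps/\lambda_0) = O(\gamma^9/\log(1/\gamma))$. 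This smallness is what makes the subsequent Taylor-expansion and concentration arguments go through.

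The core of the proof is two concentration statements for sampling $qn$ indices uniformly without replacement. Applied to the ``weights'' $-\ln F_i(t_\i)$, the first should give an absolute error bound $|\sum_{i\in S}(-\ln F_i(t_\i)) - q\sum_i(-\ln F_i(t_\i))| \leq \gamma$, which exponentiates to $\prod_{i\in S} F_i(t_\i) \in (1\pm\gamma)(\lambda_0 + \i\rho)^q$. Applied to the weights $p^\i_i$, the second should give a multiplicative bound $\sum_{i\in S} p^\i_i \in (1\pm\gamma)\, q\sum_i p^\i_i$. I would prove both via Hoeffding's reduction from sampling without replacement to sampling with replacement followed by Bernstein's inequality; the point is that $\sum_i p^\i_i \geq \rho$ and $\sum_i(-\ln F_i(t_\i))$ is bounded, while each summand is bounded by $M = O(\eps/\lambda_0)$. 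Plugging into the Bernstein exponent $\gamma^2 q\mu/(3M)$, the choice $\eps = \gamma^2 q^2 \rho \lambda_0 / (2\log(2/\delta))$ is precisely calibrated so that this exponent exceeds $\log(2/\delta)$ for $\delta = \gamma^6/4$, giving per-$\i$ failure probability at most $\delta$. A union bound over the $c+1 = O(1/\rho) = O(1/\gamma^3)$ values of $\i$ yields total failure probability at most $\gamma^3/2$.

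Finally I would combine everything using $\prod_{i\in S}(1 + r_i) - 1 = \sum_{i\in S} r_i + O((\sum r_i)^2)$, valid because $\sum r_i \leq (1+O(\gamma))\sum_{i\in S} p^\i_i = O(q\ln(1/\lambda_0)) = O(\gamma)$, together with $F_i(t_\i) \in [1-\gamma, 1]$ (so $r_i$ and $p^\i_i$ differ by a $(1\pm\gamma)$ factor) and the bound $(\lambda_0+\i\rho)^q \in [1-\gamma, 1]$. Multiplying the resulting $(1\pm\gamma)$-style factors accumulates to $(1-3\gamma, 1+\gamma)\cdot q\sum_i p^\i_i$, as claimed. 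The main obstacle I anticipate is the concentration step: sampling without replacement does not fit Bernstein directly, the summands are bounded only by the small but nonzero $\eps/\lambda_0$ rather than being identically tiny, and the target interval $\sum_i p^\i_i$ can be as small as $\rho = \gamma^3$ for large $\i$; threading Bernstein's inequality through all of this with the tight asymmetric constants $(1-3\gamma, 1+\gamma)$ is where the carefully tuned definition of $\eps$ earns its keep, and is the portion the authors rightly defer to Appendix~\ref{subsec:reduc-conc}.
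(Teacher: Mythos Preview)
Your proposal is correct in outline and uses many of the same ingredients as the paper---the no-superstars bound $\Pr[x_i\ge t_\i]\le \eps/\lambda_0$ (the paper's Lemma~\ref{lm:iid:oneyprob}), concentration for sampling without replacement (the paper uses Massart's inequality rather than Bernstein via Hoeffding's reduction, but these play the same role), and a union bound over the $c=O(1/\rho)$ values of $\i$. The calibration of $\eps$ against $\delta=\gamma^6/4$ you identify is exactly what appears in the paper's Corollary~\ref{cr:iid:main}.

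The paper's route differs in the decomposition. Instead of expanding the CDF product and controlling two sums ($\sum_{i\in S}p^\i_i$ and $\sum_{i\in S}(-\ln F_i(t_\i))$), the paper works directly with events: it shows (i) $\Pr[\exists_{i\in S}\, t_\i\le x_i<t_{\i+1}] \in [(1-2\gamma)q\sum_i p^\i_i,\;(1+\gamma)q\sum_i p^\i_i]$ via concentration of $\sum_{i\in S}p'_i$ alone (Corollary~\ref{cr:iid:main}), and (ii) $\Pr[\exists_{i\in S}\,x_i\ge t_{\i+1}]\le 2q/\lambda_0=\gamma$ (Lemma~\ref{lm:iid:column}). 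The upper bound is then simply $\Pr[t_\i\le x_S<t_{\i+1}]\le \Pr[\exists_{i\in S}\,t_\i\le x_i<t_{\i+1}]\le (1+\gamma)q\sum_i p^\i_i$, and the lower bound is $\Pr[t_\i\le x_S<t_{\i+1}]\ge(1-\gamma)(1-2\gamma)q\sum_i p^\i_i\ge(1-3\gamma)q\sum_i p^\i_i$ via a positive-correlation inequality between the two events. This avoids the log-CDF concentration entirely and explains the asymmetric constants $(1-3\gamma,1+\gamma)$ cleanly: the upper bound costs only one $(1+\gamma)$ factor, while the lower bound costs two. Your Taylor-expansion route would accumulate small multiplicative losses on both sides, so recovering exactly $(1+\gamma)$ on the upper side would require reverting to the trivial union bound $\Pr[t_\i\le x_S<t_{\i+1}]\le\sum_{i\in S}p^\i_i$ anyway.
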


We use the following definitions in our proof of Theorem \ref{thm:Ralg:main}.

\begin{definition}
	For a given number $x \geq t_0$, we write $\tilde{x} = \max\{ t_{\i} \colon t_{\i} \leq x \}$.  That is, $\tilde{x}$ is $x$ rounded down to the nearest $t_{\i}$.
Similarly, for a distribution $D$ we use $\tilde{D}$ to represent the distribution that draws $x$ from $D$ and then returns $\tilde{x}$. 
\end{definition}

\begin{definition}
	We define a distribution $D_{\min}$ as follows: for any $\i \in \{0,\dots,c-1\}$, $D_{\min}$ returns $t_{\i}$ with probability $(1-3\gamma)q \sum_{i=1}^n p^{\i}_i$, and otherwise $D_{\min}$ returns $0$. 
\end{definition}

\begin{definition}\label{def13}
For $\j \in \{1,\dotsc,\frac 1 q \}$, let $S_{\j}$ be the set of distributions $D_{\pi_{(\j-1) qn+1}}, \dots D_{\pi_{(\j) qn}}$. Let $\hat D_{S_{\j}}$ be a distribution that returns $\tilde x_{S_{\j}}$ with probability $1-4\gamma$ and returns $0$ otherwise. We use $\hat x_{S_{\j}}$ to indicate an outcome of $\hat D_{S_{\j}}$.
\end{definition}


We now present results for two algorithms, Algorithm~\ref{Alg0} and Algorithm~\ref{Alg}, whose pseudocode is listed in the text.
These algorithms take, as parameters, a sequence of thresholds defining an arbitrary threshold-based algorithm for the i.i.d.\ setting
with $1/q$ observations.
Algorithm~\ref{Alg0} provides an intermediary result.  In particular, 
Algorithm~\ref{Alg0} is meant to work with the values $\tilde x_{S_{\j}}$, which recall are ``rounded down'' 
values drawn from the collection distributions.  This algorithm is used to bound the success rate 
if we used the $1/q$ collection distributions to generate our input instead of the actual observations.
We then show that Algorithm~\ref{Alg}, which works with the real observations, performs 
nearly as well as Algorithm~\ref{Alg0}.  

\SetKwInOut{Parameters}{Parameters}
\SetKwInOut{Given}{Given}

\begin{algorithm}[t]
	\Parameters{Thresholds $\tau_1, \dotsc, \tau_{1/q}$}
	\KwIn{Iteratively receive values $\tilde x_{S_{\j}}$, for $\j \in \{1, \dotsc, \frac 1 q\}$.}
	
	\begin{algorithmic}[1]
		\STATE With probability $4\gamma$, do not pick $\tilde x_{S_{\j}}$ and move to the next number.\label{Alg0:Ralg:l1}
		\STATE Set $t_0$ such that $\prob{\maxall \leq t_{0}} = \lambda_0$.
		\IF{$\tilde x_{S_{\j}}\leq t_0$}\label{Alg0:Ralg:if1}
		\STATE Do not pick $\tilde x_{S_{\j}}$ and move to the next number.
		\ENDIF
		\IF{$\tilde x_{S_{\j}}\leq \tau_{\j}$}
		\STATE Do not pick $\tilde x_{S_{\j}}$ and move to the next number.
		\ELSE
		\STATE Pick $\tilde x_{S_{\j}}$.\label{Alg:Ralg:end}
		\ENDIF
		
	\end{algorithmic}
	\caption{}
	\label{Alg0}
\end{algorithm}

\begin{algorithm}[t]
	\Parameters{Thresholds $\tau_1, \dotsc, \tau_{1/q}$}
	\KwIn{Iteratively receive values $x_{\pi_i}$, for $i \in \{1, \dotsc, n\}$.}
	
	\begin{algorithmic}[1]
		\STATE With probability $4\gamma$, do not pick $x_{\pi_i}$ and move to the next number.
		\STATE Set $t_0$ such that $\prob{\maxall \leq t_{0}} = \lambda_0$.
		\IF{$\tilde x_{\pi_i}\leq t_0$}
		\STATE Do not pick $x_{\pi_i}$ and move to the next number.
		\ENDIF
		\IF{$\tilde x_{\pi_i}\leq \tau_{\lceil qi \rceil}$}
		\STATE Do not pick $x_{\pi_i}$ and move to the next number.
		\ELSE
		\STATE Pick $x_{\pi_i}$.
		\ENDIF
		
	\end{algorithmic}
	\caption{}
	\label{Alg}
\end{algorithm}


We first show that Algorithm~\ref{Alg0} can simulate an arbitrary i.i.d.\ algorithm with minimal loss, under a no-superstars assumption.

\begin{lemma}\label{lm:Ralg:Alg0}
 	Let $\alg_{\tau}$ be any threshold-based algorithm that selects the maximum with probability at least $\alpha$ for $1/q$ instances of $D_{\min}$, with thresholds $\tau_{1},\dots,\tau_{1/q}$. For any arbitrary $\gamma\in (0,1)$, Algorithm \ref{Alg0} selects the maximum with probability at least $(\alpha-10\gamma)$ for $\tilde D_{S_1},\dots,\tilde D_{S_{1/q}}$, assuming the no $\eps$-superstars assumption with $\eps =\frac{\gamma^{10}}{24\log(\frac 2 {\gamma^2})}$.
\end{lemma}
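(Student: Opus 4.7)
My plan is to construct a coupling between the execution of Algorithm~\ref{Alg0} on $\tilde{D}_{S_1},\ldots,\tilde{D}_{S_{1/q}}$ and the execution of $\alg_\tau$ on an independent i.i.d.\ sequence $y_1,\ldots,y_{1/q}\sim D_{\min}$, and then absorb each source of discrepancy into the $10\gamma$ budget. The key enabler is the elementary inequality $(1-4\gamma)(1+\gamma)\leq 1-3\gamma$, which converts the ``overshoot'' slack in the upper half of Lemma~\ref{lm:Ralg:DS} into exactly the $4\gamma$-skip appearing on line~\ref{Alg0:Ralg:l1} of Algorithm~\ref{Alg0}.

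The first step is a union bound. Each $S_j$ violates the two-sided concentration of Lemma~\ref{lm:Ralg:DS} with probability at most $\gamma^3/2$, so the good event $E$ that all $1/q=2/\gamma^2$ collections concentrate simultaneously satisfies $P[E]\geq 1-\gamma$. All subsequent analysis proceeds under $E$, contributing $\gamma$ to the budget.

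The second step is to set up the coupling. Let $z_j$ be the value that survives lines~\ref{Alg0:Ralg:l1}--\ref{Alg0:Ralg:if1} of Algorithm~\ref{Alg0}: namely $z_j=\tilde{x}_{S_j}$ when the $4\gamma$-skip does not fire and $\tilde{x}_{S_j}>t_0$, and $z_j=\mathrm{null}$ otherwise. Combining the upper half of Lemma~\ref{lm:Ralg:DS} with the key inequality yields, for each $\zeta\geq 0$,
\[
P[z_j=t_\zeta\mid E]\;\leq\;(1-4\gamma)(1+\gamma)\,q\textstyle\sum_i p_i^\zeta\;\leq\;(1-3\gamma)\,q\textstyle\sum_i p_i^\zeta\;=\;P[y_j=t_\zeta],
\]
while the lower half of Lemma~\ref{lm:Ralg:DS} gives $P[\tilde{x}_{S_j}=t_\zeta\mid E]\geq P[y_j=t_\zeta]$. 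I couple the three variables accordingly so that (a)~$z_j=t_\zeta$ implies $y_j=t_\zeta$, and (b)~$y_j=t_\zeta$ implies $\tilde{x}_{S_j}=t_\zeta$; in other words, along the coupling $z_j$, $y_j$, and $\tilde{x}_{S_j}$ appear in non-decreasing order in the ordering $\mathrm{null}\prec 0\prec t_0\prec t_1\prec\cdots$.

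The final step is to translate $\alg_\tau$'s success into Algorithm~\ref{Alg0}'s success. By property~(a), $z_j>\tau_j$ implies $y_j>\tau_j$, so if $\alg_\tau$ picks $y_{j^*}=\max_j y_j$ then no $j<j^*$ has $z_j>\tau_j$; provided the $4\gamma$-skip does not fire at step $j^*$ and $\tilde{x}_{S_{j^*}}>t_0$, Algorithm~\ref{Alg0} picks $\tilde{x}_{S_{j^*}}$ at step $j^*$. Property~(b) then forces $\tilde{x}_{S_{j^*}}=y_{j^*}$. Enumerating the losses --- $\leq 1-\alpha$ from $\alg_\tau$ failing, $\leq 4\gamma$ from the $4\gamma$-skip firing at $j^*$, $\leq\lambda_0=\gamma$ from $\maxall\leq t_0$ (in which case no $y_j$ exceeds $t_0$ either), and $\leq\gamma$ from $E^c$ --- covers $\alpha-7\gamma$; the remaining $3\gamma$ slack is spent on the main obstacle, namely the possibility that the argmax of $\tilde{x}_{S_j}$ differs from the argmax of $y_j$ because $\tilde{x}_{S_j}$ can carry at most $4\gamma\cdot q\sum_i p_i^\zeta$ of ``extra mass'' over $y_j$ in each interval $[t_\zeta,t_{\zeta+1})$. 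Controlling this last term requires summing the per-interval slack over $\zeta$ and $j$ and invoking the no-$\eps$-superstars assumption to keep the global mass $\sum_\zeta\sum_i p_i^\zeta$ manageable relative to $q=\gamma^2/2$; the resulting bound is of order $\gamma$, completing the $10\gamma$ accounting.
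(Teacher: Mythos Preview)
Your coupling setup is sound and in fact recapitulates the paper's sandwich $\hat x_{S_{\j}}\preceq x_{\min}\preceq \tilde x_{S_{\j}}$ (your $z_{\j}$ plays the role of $\hat x_{S_{\j}}$, and your key inequality $(1-4\gamma)(1+\gamma)\le 1-3\gamma$ is exactly the paper's justification of inequality~(\ref{eq:Ralg:x<x<x})). The union bound giving the good event $E$ with probability $\ge 1-\gamma$ is also identical.

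The gap is in your final paragraph. The ``main obstacle'' you identify---that the argmax of $\tilde x_{S_{\j}}$ may differ from the argmax of $y_{\j}$---is real, but your proposed control of it does not work as stated. The naive union bound you describe is $\sum_{\j}\sum_{\i} 4\gamma q\sum_i p_i^{\i}=4\gamma\sum_i\Pr[x_i\ge t_0]$, and this last sum is of order $1/\lambda_0=1/\gamma$ (by Lemma~\ref{lm:iid:ES} with $\tau^0=t_0$), not $O(1)$; the bound is therefore $O(1)$, not $O(\gamma)$. The no-$\eps$-superstars assumption does not help here: it has already been fully consumed in establishing Lemma~\ref{lm:Ralg:DS}, and places no additional constraint on $\sum_i\Pr[x_i\ge t_0]$. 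To get $O(\gamma)$ you must weight each slack term by the probability that no other $\tilde x_{S_{\j'}}$ exceeds $t_{\i}$, which introduces a factor like $(1-(1-3\gamma)R_{\i})^{1/q-1}$ and makes the sum telescope; this works but yields roughly $4\gamma/(1-3\gamma)$, exceeding the $3\gamma$ you budgeted (and you also need to separately handle the edge case $y_{j^*}=t_0$, which line~\ref{Alg0:Ralg:if1} rejects).

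The paper sidesteps this argmax-tracking entirely. It writes Algorithm~\ref{Alg0}'s success probability as $\sum_{\j}\tilde\phi_{\j}\tilde\sigma_{\j}$ (probability of reaching step~$\j$ times probability of succeeding at step~$\j$) and compares factor-by-factor to $\alg_\tau$'s $\sum_{\j}\phi_{\j}\sigma_{\j}$. The sandwich gives $\tilde\phi_{\j}\ge\phi_{\j}$ directly (since $\hat x_{S_{\j}}$ is stochastically below $x_{\min}$, Algorithm~\ref{Alg0} stops no more often than $\alg_\tau$); for the $\sigma$-factor, the paper replaces first all $\tilde D_{S_{\j'}}$, $\j'\ne\j$, by $D_{\min}$ (this can inflate success at~$\j$ by at most $1/(1-4\gamma)$, using the upper sandwich) and then $\tilde D_{S_{\j}}$ itself (which drops success to $(1-4\gamma)\sigma_{\j}$). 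This yields $\tilde\sigma_{\j}\ge(1-4\gamma)^2\sigma_{\j}\ge(1-8\gamma)\sigma_{\j}$, a clean multiplicative loss that, together with the two $\gamma$'s from $E^c$ and $\maxall\le t_0$, gives exactly $\alpha-10\gamma$. The factor-by-factor distribution replacement is the device your argument is missing.
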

\begin{proof} 
	First of all notice that the probability that the maximum is less than $t_0$ is $\lambda_0=\gamma$. We assume that any number less than $t_0$ is $0$ and we do not pick it. We miss the maximum with probability $\gamma$ due to this assumption. Algorithm \ref{Alg0} handles this assumption by the condition in line \ref{Alg0:Ralg:if1}.


By Lemma \ref{lm:Ralg:DS} with probability $1-\frac{\gamma^3} 2$ for all $\i \in \{0,\dots,c\}$ we have
\begin{align}\label{eq:Ralg:x<x<x}
	\prob{\hat x_{S_{\j}}=t_{\i}} \leq \prob{x_{\min}=t_{\i}} \leq \prob{\tilde x_{S_{\j}}=t_{\i}},
\end{align}
where the first inequality follows from $(1-4\gamma)(1+\gamma) \leq 1-3\gamma$ (where $1+\gamma$ and $1-3\gamma$ are coming from Lemma 10 and $1-4\gamma$ is coming from the definition of $\hat{x}_{S_{\j}}$ i.e. Definition 13).
%
By the union bound this holds for all $\j \in \{1,\dots,\frac 1 q \}$ and all $\i \in \{0,\dots,c\}$ with probability at least $1-\frac 1 q \frac{\gamma^3} 2 = 1-\gamma$. In the rest of the proof we assume that Inequality \ref{eq:Ralg:x<x<x} holds for all $\j \in \{1,\dotsc,\frac 1 q \}$ and all $\i \in \{0,\dots,c\}$.

We define $\phi_{\j}$ to be the probability that $\alg_{\tau}$ reaches the $\j$-th number when running on $\frac 1 q$ instances of $D_{\min}$. Similarly, we define $\tilde \phi_{\j}$ to be the probability Algorithm \ref{Alg0} reaches the $\j$-th number when running on $\tilde D_{S_1},\dots,\tilde D_{S_{1/q}}$. 
We also define $\sigma_{\j}$ to be the probability that Algorithm $\alg_{\tau}$, conditioned on reaching the $\j$-th number, accepts the $\j$-th number when running on $\frac 1 q$ instances of $D_{\min}$ and succeeds. Similarly, we define $\tilde \sigma_{\j}$ to be the probability Algorithm \ref{Alg0}, conditioned on reaching the $\j$-th number, accepts the $\j$-th number when running on $\tilde D_{S_1},\dots,\tilde D_{S_{1/q}}$ and succeeds. We refer to this notion as the probability of success at $\j$. Notice that the probability that $\alg_{\tau}$ and Algorithm \ref{Alg0} succeed are $\sum_{\j=1}^{1/q} \phi_{\j}\sigma_{\j}$ and $\sum_{\j=1}^{1/q} \tilde\phi_{\j}\tilde\sigma_{\j}$ respectively. 

In fact, running Algorithm \ref{Alg0} on $\tilde D_{S_{1}},\dots,\tilde D_{S_{1/q}}$ is equivalent to running Lines \ref{Alg0:Ralg:if1} to \ref{Alg:Ralg:end} on $\hat D_{S_{1}},\dots,\hat D_{S_{1/q}}$. 
Hence by inequality \ref{eq:Ralg:x<x<x} we have
\begin{align*}
	\sum_{\j=1}^{1/q} \tilde\phi_{\j} \tilde\sigma_{\j} \geq 
	\sum_{\j=1}^{1/q} \phi_{\j} \tilde\sigma_{\j}.
\end{align*}

Now, let $\j \in \{1,\dots,\frac 1 q\}$ be an arbitrary index. Assume for all $\j'  \in \{1,\dots,\frac 1 q\} \setminus \{\j\}$ we replace distributions $\tilde D_{S_{\j'}}$ with $D_{\min}$. By Inequality \ref{eq:Ralg:x<x<x} this increases the the probability of success at $\j$ by at most a factor $\frac{1}{1-4\gamma}$. Next, if we replace $\tilde D_{S_{\j}}$ with $D_{\min}$ the probability of success at $\j$ decreases and becomes $(1-4\gamma)\sigma_{\j}$. Thus, we have $\frac{1}{1-4\gamma} \tilde\sigma_{\j} \geq (1-4\gamma)\sigma_{\j}$, which implies $\tilde\sigma_{\j} \geq (1-4\gamma)^2\sigma_{\j} \geq (1-8\gamma)\sigma_{\j}$. Therefore we have 
\begin{align*}
		\sum_{\j=1}^{1/q} \phi_{\j} \tilde\sigma_{\j} \geq (1-8\gamma)	\sum_{\j=1}^{1/q} \phi_{\j} \sigma_{\j}&\geq (1-8\gamma) \alpha \geq \alpha-8\gamma.
\end{align*}
Remember that as we mentioned in the beginning, Algorithm \ref{Alg0} misses the maximum with probability $\gamma$ due to the condition in line \ref{Alg0:Ralg:if1}, and it loses another $\gamma$ probability by assuming that Inequality \ref{eq:Ralg:x<x<x} holds for all $\j \in \{1,\frac 1 q \}$ and all $\i \in \{0,\dots,c\}$.  Hence the probability of selecting the maximum drops to $\alpha - 10\gamma$.
\end{proof}


We now want to prove that Algorithm~\ref{Alg} can likewise simulate an arbitrary i.i.d.\ algorithm with minimal loss, by comparing to the performance of Algorithm~\ref{Alg0}.  Recall that Algorithm~\ref{Alg} attempts to simulate Algorithm~\ref{Alg0} by applying threshold $\tau_{\j}$ to each of the $qn$ values in collection $\j$.  There are two ways that this simulation might fail.  First, it might be that two values in collection $\j$ are above threshold $\tau_{\j}$, and Algorithm~\ref{Alg} chooses the smaller one.  Second, it could be that the maximum value from two different collections both round to the same value $\tilde{x}$, and Algorithm~\ref{Alg0} chooses the smaller one; this is fine for Algorithm~\ref{Alg0}, since it cares only about the rounded values, but leads to failure for Algorithm~\ref{Alg}.


The following two concentration results handle these two modes of failure.  Lemma~\ref{lm:iid:column} shows that it is unlikely that two or more values in any given collection lie above the corresponding threshold.  Lemma~\ref{lm:iid:row} shows that it is unlikely that the maximum value in two different collections round to the same $t_{\i}$.  We defer the proofs to Section~\ref{subsec:reduc-conc}.

\begin{lemma}\label{lm:iid:column}
	Consider arbitrary numbers $\lambda_0,\gamma,\delta,q \in (0,1)$, $\rho \in (0,1-\lambda_0)$. 
	Set $\eps = \frac{  \gamma^2 q^2 \rho \lambda_0 }{2\log \frac {2} {\delta}}$.
	Let $S$ be a set of size $qn$, chosen uniformly at random without replacement from $D_1,\dots,D_n$. 
	Let $\tau^0$ be such that $\prob{\maxall \leq \tau^0}=1-\rho$. 
	Let $y_i$ be a random binary variable that is $1$ if $\tau^0\leq x_i$ and $0$ otherwise. Let $p'_i= \prob{y_i = 1}$. Assuming the no $\eps$-superstars assumption, with probability $1-\delta$ we have 
	\begin{align*}
	\prob{\exists_{i\in S}  y_i = 1}&\leq \frac{2 q}{\lambda_0}\\
	&\text{and}\\
	\prob{\sum_{i\in S} y_i \geq 2}  &\leq \frac{4 q^2}{\lambda_0^2}.
	\end{align*}
\end{lemma}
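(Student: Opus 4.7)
The plan is to reduce both inequalities to a single concentration claim: with probability at least $1-\delta$ over the sample $S$, the random sum $P_S := \sum_{i \in S} p'_i$ satisfies $P_S \leq 2q/\lambda_0$. Conditional on any fixed $S$ the $y_i$ are independent Bernoullis with parameters $p'_i$, so this event implies both stated bounds. A union bound gives $\prob{\exists i \in S: y_i = 1 \mid S} \leq \sum_{i \in S} p'_i = P_S \leq 2q/\lambda_0$, while a pairwise union bound gives $\prob{\sum_{i \in S} y_i \geq 2 \mid S} \leq \sum_{\{i,j\} \subseteq S} p'_i p'_j \leq P_S^2/2 \leq 2q^2/\lambda_0^2 \leq 4q^2/\lambda_0^2$.

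First I would translate the no-superstars assumption into a uniform pointwise bound $p'_i \leq M := \eps/(1-\rho)$. By independence of the $x_j$'s and the definition of $\tau^0$,
\[
\eps \;\geq\; \prob{i = \argmax_j x_j} \;\geq\; p'_i \prod_{j \neq i}(1 - p'_j) \;=\; p'_i \cdot \frac{1-\rho}{1-p'_i} \;\geq\; p'_i(1-\rho),
\]
using $\prod_j(1-p'_j) = 1-\rho$. The same product identity together with $\ln(1-x) \leq -x$ and the Bonferroni inequality brackets $\sum_i p'_i$ between $\rho$ and $-\ln(1-\rho) \leq 1/\lambda_0$ (using $1-\rho \geq \lambda_0$), so $\ex{P_S} = q\sum_i p'_i \in [q\rho,\, q/\lambda_0]$.

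The main obstacle is the concentration step. Since $S$ is a uniform sample without replacement, the indicators $\mathbf{1}[i \in S]$ are negatively associated, and hence Chernoff-type bounds apply to $P_S$ exactly as though they were independent. Normalizing to $Y_i := p'_i \mathbf{1}[i \in S]/M \in [0,1]$ and applying the multiplicative Chernoff bound at deviation factor two gives
\[
\prob{P_S \geq 2\ex{P_S}} \;\leq\; \exp\!\left(-\frac{\ex{P_S}}{3M}\right) \;\leq\; \exp\!\left(-\frac{q\rho(1-\rho)}{3\eps}\right).
\]
Since $\ex{P_S} \leq q/\lambda_0$ yields $2\ex{P_S} \leq 2q/\lambda_0$, the left-hand side also bounds $\prob{P_S \geq 2q/\lambda_0}$. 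Substituting $\eps = \gamma^2 q^2 \rho \lambda_0/(2\log(2/\delta))$, the exponent reduces to $2(1-\rho)\log(2/\delta)/(3\gamma^2 q \lambda_0)$, which exceeds $\log(1/\delta)$ by routine algebra (using $\log(2/\delta) \geq \log(1/\delta)$, $1-\rho \geq \lambda_0$, and $\gamma,q \in (0,1)$ small enough so that $3\gamma^2 q \leq 2$, as is the case for the parameters used in Lemma~\ref{lm:Ralg:DS}), so $\prob{P_S > 2q/\lambda_0} \leq \delta$. A naive Hoeffding bound for sampling without replacement would be too weak here because $M$ has no $n$-dependence; the crux is to exploit both the negative association of the sample and the smallness of each $p'_i$, so that the expected sum (not just the range $M$) governs the variance-based concentration, yielding a tail decaying with $\log(1/\delta)$ independent of $n$.
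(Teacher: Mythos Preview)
Your high-level plan matches the paper's proof: both establish a high-probability upper bound on $P_S=\sum_{i\in S}p'_i$ (the paper via Lemma~\ref{lm:iid:sumBound}, which is Massart's inequality for sampling without replacement, together with $\sum_i p'_i\le 1/\lambda_0$ from Lemma~\ref{lm:iid:oneyprob}), then deduce the first inequality by the trivial union bound and the second by a squaring argument (the paper uses Lemma~\ref{lm:iid:no2}, $\Pr[\sum y_i\ge 2]\le \Pr[\exists\,y_i=1]^2$, where your pairwise bound $P_S^2/2$ works equally well). Your use of negative association plus multiplicative Chernoff is a perfectly valid substitute for Massart here.

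There is, however, a small parameter gap. The lemma is stated for \emph{arbitrary} $\gamma,q\in(0,1)$, and your final step needs $3\gamma^2 q\le 2$, which you correctly flag as an extra restriction. The paper's Massart-based route avoids this: the Massart exponent in Lemma~\ref{lm:iid:sumBound} is $\gamma^2 q^2 \rho(1-\rho)/(2\eps)$, and the choice $\eps=\gamma^2 q^2\rho\lambda_0/(2\log(2/\delta))$ is tailored so that this exponent simplifies to $(1-\rho)\log(2/\delta)/\lambda_0\ge\log(2/\delta)$ with no residual $\gamma$ or $q$ factors. Your Chernoff exponent has only one factor of $q$ (from $\ex{P_S}\ge q\rho$), so after substituting $\eps$ you are left with $2/(3\gamma^2 q)$ multiplying $\log(2/\delta)$, which need not exceed $1$ in general. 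For the application in Section~\ref{sec:iidRed} (where $q=\gamma^2/2$) your argument is fine, but as a proof of the lemma as stated it falls slightly short; either restrict the statement, or swap in Massart/Bernstein whose variance term recovers the missing $q\gamma^2$.
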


\begin{lemma}\label{lm:iid:row}
	Consider arbitrary numbers $\rho,\lambda_0 \in (0,1)$ and $\lambda \in [0, 1-(\lambda_0+\rho)]$.
	Let $\tau^0$ and $\tau^1$ be such that $\prob{\maxall \leq \tau^0}=1-(\lambda+\rho)$ and $\prob{\maxall \leq \tau^1}=1-\lambda$. 
	Let $y_i$ be a random binary variable that is $1$ if $\tau^0\leq x_i \leq \tau^1$ and $0$ otherwise. 
	We have
	\begin{align*}
	\prob{\sum_{i=1}^n y_i \geq 2} \leq \frac{\rho^2}{\lambda_0^2}. 	
	\end{align*}
\end{lemma}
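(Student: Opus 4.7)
The plan is to reduce $\prob{\sum_i y_i \geq 2}$ to a bound on the first moment $\sum_i p_i$, where $p_i = \prob{\tau^0 \leq x_i \leq \tau^1}$, and then to exploit the two product identities implicit in the definitions of $\tau^0$ and $\tau^1$ to conclude $\sum_i p_i \leq \rho/\lambda_0$. First, since the $y_i$ are independent Bernoulli variables, a union bound over unordered pairs gives
\[
\prob{\sum_i y_i \geq 2} \;\leq\; \sum_{i<j} p_i p_j \;\leq\; \tfrac{1}{2}\Big(\sum_i p_i\Big)^2,
\]
so it suffices to prove $\sum_i p_i \leq \rho/\lambda_0$; this will yield the claim with a factor of two to spare, which is why the statement's constant $\rho^2/\lambda_0^2$ is loose by a factor of $2$.

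To bound $\sum_i p_i$, I would set $a_i = \prob{x_i > \tau^0}$ and $b_i = \prob{x_i > \tau^1}$, so that $p_i = a_i - b_i$ under the atomless convention used throughout the paper. By definition of $\tau^0$ and $\tau^1$,
\[
\prod_i (1 - a_i) \;=\; 1-(\lambda+\rho), \qquad \prod_i (1 - b_i) \;=\; 1-\lambda.
\]
The key pointwise inequality is
\[
a_i - b_i \;\leq\; \int_{b_i}^{a_i} \frac{dt}{1-t} \;=\; \log(1-b_i) - \log(1-a_i),
\]
which holds because $1/(1-t) \geq 1$ on $[0,1)$. Summing in $i$ collapses the right-hand side telescopically into $\log \frac{\prod_i (1-b_i)}{\prod_i (1-a_i)} = \log \frac{1-\lambda}{1-(\lambda+\rho)}$, so $\sum_i p_i \leq \log\frac{1-\lambda}{1-(\lambda+\rho)}$.

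Finally, the hypothesis $\lambda \leq 1 - (\lambda_0 + \rho)$ gives $1-(\lambda+\rho) \geq \lambda_0$, so
\[
\log\frac{1-\lambda}{1-(\lambda+\rho)} \;=\; \log\!\left(1 + \frac{\rho}{1-(\lambda+\rho)}\right) \;\leq\; \log\!\left(1 + \frac{\rho}{\lambda_0}\right) \;\leq\; \frac{\rho}{\lambda_0},
\]
and chaining the inequalities produces $\prob{\sum_i y_i \geq 2} \leq \rho^2/(2\lambda_0^2) \leq \rho^2/\lambda_0^2$. I do not anticipate any real obstacle: the only genuine step is the integral trick for converting the product constraints into a linear bound on $\sum_i p_i$; the pair-union bound and the log estimate at the end are routine.
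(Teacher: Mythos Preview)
Your proof is correct. Both steps---the pair union bound $\Pr[\sum_i y_i \geq 2] \leq \frac{1}{2}(\sum_i p_i)^2$ and the integral estimate $a_i - b_i \leq \log\frac{1-b_i}{1-a_i}$---are valid, and the constraint $\lambda \leq 1-(\lambda_0+\rho)$ indeed gives $1-(\lambda+\rho)\geq \lambda_0$, so the final chain goes through and even gains a factor of $2$.

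The paper's proof is organized differently. Rather than bounding by $(\sum_i p_i)^2$, it invokes a preliminary lemma stating $\Pr[\sum_i y_i \geq 2] \leq \Pr[\exists_i\, y_i = 1]^2$ for independent Bernoullis (proved by conditioning on the first index $j$ with $y_j=1$). It then controls $\Pr[\exists_i\, y_i = 1]$ directly by another preliminary lemma showing $\Pr[\exists_i\, y_i = 1] \leq \rho/(1-\lambda)$; that lemma's proof uses a conditioning argument on whether $\max_i x_i > \tau^1$, rather than your log-sum manipulation of the products $\prod(1-a_i)$ and $\prod(1-b_i)$. So the paper reuses two general-purpose lemmas it has already established for other parts of the argument, while your proof is self-contained and arguably more elementary for this specific statement. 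Your integral trick and the paper's conditioning argument are two different ways to extract the same information from the product identities; both arrive at essentially $\rho/\lambda_0$ for the first-moment quantity.
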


These lemmas in hand, we are now ready to bound the success probability of Algorithm~\ref{Alg}.  This is Theorem~\ref{thm:Ralg:main}, which was a restatement of our main result for the \our prophet secretary problem under a no-superstars assumption, Theorem~\ref{thm:Ralg:main:foreshadow}.

\begin{theorem}\label{thm:Ralg:main}
	Let $\alg_{\tau}$ be a threshold based algorithm that selects the maximum with probability at least $\alpha$ for $1/q$ instances of $D_{\min}$, with thresholds $\tau_{1},\dots,\tau_{1/q}$. For any arbitrary $\gamma\in (0,1)$, Algorithm \ref{Alg} selects the maximum with probability at least $(\alpha-13\gamma)$ for $ D_{\pi_1},\dots, D_{\pi_n}$, assuming the no $\eps$-superstars assumption with $\eps =\frac{\gamma^{10}}{24\log(\frac 2 {\gamma^2})}$.
\end{theorem}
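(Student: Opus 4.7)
The plan is to bootstrap from Lemma~\ref{lm:Ralg:Alg0}, which already guarantees that Algorithm~\ref{Alg0} achieves success probability $\alpha - 10\gamma$ when run on the rounded collection maxima $\tilde D_{S_1}, \dotsc, \tilde D_{S_{1/q}}$. It will then suffice to show that passing from Algorithm~\ref{Alg0} to Algorithm~\ref{Alg} costs at most an additional $3\gamma$. I will couple the two algorithms by fixing the random permutation $\pi$, which simultaneously determines the partition into collections $S_1, \dotsc, S_{1/q}$. Both algorithms use the same thresholds: Algorithm~\ref{Alg} applies $\tau_{\lceil qi \rceil}$ to observation $i$, which is exactly the threshold Algorithm~\ref{Alg0} applies to the corresponding collection. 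Since $\tilde x_{S_{\j}} > \tau_{\j}$ if and only if some element of $S_{\j}$ has rounded value above $\tau_{\j}$, in most realizations the two algorithms ``engage'' with the same collection, and it is enough to control the two specific ways in which their choices can diverge.

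Let $\j^\ast$ be the collection containing $x_{\max}$, and condition on Algorithm~\ref{Alg0} succeeding (i.e., picking $\tilde x_{S_{\j^\ast}}$). Algorithm~\ref{Alg} can still fail to select $x_{\max}$ via one of two discrepancy events. \emph{Within-collection collision:} the collection $S_{\j^\ast}$ contains two or more elements whose rounded values exceed $\tau_{\j^\ast}$, so Algorithm~\ref{Alg} may stop at a non-maximum element of $S_{\j^\ast}$. Lemma~\ref{lm:iid:column}, applied at threshold $\tau_{\j^\ast}$, bounds the probability that any fixed collection contains $\geq 2$ above-threshold elements; union-bounding over the $1/q$ collections (and over the $\delta$-failure event of that lemma) contributes at most roughly $\gamma$ to the total failure probability under the parameter choices $q = \gamma^2/2$, $\lambda_0 = \gamma$, $\delta = \gamma^6/4$. \emph{Cross-collection collision:} some earlier collection $\j' < \j^\ast$ has its rounded max in the same bucket $[t_{\i}, t_{\i+1})$ as $\tilde x_{\max}$; from Algorithm~\ref{Alg0}'s perspective the two rounded values are interchangeable, so Algorithm~\ref{Alg0} still counts as succeeding, but Algorithm~\ref{Alg} then stops in $S_{\j'}$ at an element whose true value is strictly less than $x_{\max}$. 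Lemma~\ref{lm:iid:row} bounds the probability that two distinct collection-maxes share a bucket by $\rho^2/\lambda_0^2$; summing over the $c = O(1/\rho)$ buckets with $\rho = \gamma^3$ contributes another $\gamma$.

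Combining the pieces---$\alpha - 10\gamma$ from Lemma~\ref{lm:Ralg:Alg0}, $\gamma$ each from the two discrepancy events, plus an additional $\gamma$ of slack absorbing the concentration-failure event of Lemma~\ref{lm:Ralg:DS} (which fails with probability at most $\gamma^3/2$ per collection and at most $\gamma/2$ in total across the $1/q$ collections)---yields the claimed $\alpha - 13\gamma$ bound. The main obstacle I anticipate is coupling the per-observation $4\gamma$-skip in Algorithm~\ref{Alg} with the per-collection $4\gamma$-skip in Algorithm~\ref{Alg0}: these are not literally the same random process, and some care is needed to show that the per-observation skip provides Algorithm~\ref{Alg} enough slack to ``recover'' from a premature stop at a non-max above-threshold element of $S_{\j^\ast}$, so that the within-collection bad event really can be charged only to Lemma~\ref{lm:iid:column}. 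A secondary subtlety is handling the without-replacement dependencies between the $qn$ elements in a given collection, which is precisely where the no $\eps$-superstars assumption earns its keep by controlling the second-moment terms feeding into Lemmas~\ref{lm:iid:column} and~\ref{lm:iid:row}.
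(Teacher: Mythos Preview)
Your proposal follows essentially the same architecture as the paper's proof: invoke Lemma~\ref{lm:Ralg:Alg0} for $\alpha-10\gamma$, then charge the gap between Algorithm~\ref{Alg0} and Algorithm~\ref{Alg} to two bad events controlled by Lemmas~\ref{lm:iid:column} and~\ref{lm:iid:row}. The cross-collection (bucket-collision) accounting matches the paper exactly.

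There is, however, a quantitative slip in your within-collection step. You propose to bound the probability that some collection has $\geq 2$ above-threshold elements via the second inequality of Lemma~\ref{lm:iid:column} and then union-bound over all $1/q$ collections. With the stated parameters $q=\gamma^2/2$ and $\lambda_0=\gamma$, this yields
\[
\frac{1}{q}\cdot\frac{4q^2}{\lambda_0^2}=\frac{4q}{\lambda_0^2}=\frac{2\gamma^2}{\gamma^2}=2,
\]
not $\gamma$, so the budget is blown. The paper avoids the union bound entirely: it observes that, conditional on Algorithm~\ref{Alg0} succeeding at collection $S_{\j}$ with maximum at index $i$, the remaining set $S_{\j}\setminus\{i\}$ is (essentially) a uniformly random subset of $\{1,\dots,n\}\setminus\{i\}$ of size $qn-1$. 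Applying the \emph{first} inequality of Lemma~\ref{lm:iid:column} to that single set gives $\frac{2q}{\lambda_0}=\gamma$ directly, plus $\delta\leq\gamma$ for the lemma's own failure event, for the paper's total of $2\gamma$ on this case.

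Two smaller remarks. First, your extra $\gamma$ reserved for the failure event of Lemma~\ref{lm:Ralg:DS} is redundant: that failure is already absorbed inside the $-10\gamma$ of Lemma~\ref{lm:Ralg:Alg0}. Second, your concern about coupling the per-observation $4\gamma$-skip with the per-collection $4\gamma$-skip is legitimate, but the paper's own proof does not address it explicitly either; it simply asserts that outside the two enumerated cases, success of Algorithm~\ref{Alg0} implies success of Algorithm~\ref{Alg}. So on that point you are being more careful than the paper, not less.
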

\begin{proof}
	There are two basic differences between Algorithm \ref{Alg0} and Algorithm \ref{Alg}. First, for each of the sets of $qn$ consecutive numbers $S_{\j}$, Algorithm \ref{Alg0} has the privilege to observe the maximum number in the set at once, while Algorithm \ref{Alg} sees the numbers in the set one by one. Second, the input numbers in Algorithm \ref{Alg0} are all rounded to $t_{\i}$'s, but this is not true for the input of Algorithm \ref{Alg}.
	Therefore, there are two cases where Algorithm \ref{Alg0} selects the maximum of the $\tilde x_{S_{\j}}$ but Algorithm \ref{Alg} does not choose the maximum of the $x_{\pi_i}$. 
	\begin{itemize}
		\item Algorithm \ref{Alg0} picks $\tilde x_{S_{\j}}$. There are two numbers $\tau_{\j} < x_i< x_{i'}$ with $i,i' \in S_{\j}$, and Algorithm \ref{Alg} picks $x_i$. 
		\item Algorithm \ref{Alg0} picks $\tilde x_{S_{\j}}$. But there is another $\j'$ such that $\tilde x_{S_{\j}} = \tilde x_{S_{\j'}}=t_{\i}$ but $ x_{S_{\j}} <  x_{S_{\j'}}$. 	
	\end{itemize}
	We show that first case happens with probability at most $2\gamma$ and the second case happens with probability at most $\gamma$. This together with Lemma \ref{lm:Ralg:Alg0} proves the theorem. 
	Notice that the probability of the first case is at most
	\begin{align*}
	\prob{\exists_{i' \in S_{\j}\setminus \{i\}} x_{i'} \geq \max(\tau_{\j},t_0)} 
	&\leq \prob{\exists_{i' \in S_{\j}} x_{i'} \geq \max(\tau_{\j},t_0)} \\
	&\leq \frac{2 q}{\lambda_0} = \gamma, &\text{By Lemma \ref{lm:iid:column}}
	\end{align*}
	where Lemma \ref{lm:iid:column} holds with probability $1-\delta \geq 1 - \gamma$. Hence the first case happens with probability at most $\gamma+\gamma=2\gamma$.
	
	Notice that in the second case for some $\i$ there are at least two numbers $x_i$ (corresponds to $\j$) and $x_{i'}$ (corresponds to $\j'$) such that $t_{\i}\leq x_i \leq x_{i'} \leq t_{\i+1}$. By Lemma \ref{lm:iid:row}, for a particular $\i$ this happens with probability at most $\frac{\rho^2}{\lambda_0^2}$. By the union bound over all choices of $\i$, the second case happens with probability at most $c\frac{\rho^2}{\lambda_0^2} \leq \frac{\rho}{\lambda_0^2} = \frac{\gamma^3}{\gamma^2} = \gamma$.
\end{proof}

Now we are ready to prove Theorem \ref{thm:ps:algEps}, which is an unconditional improvement that holds even without the no-superstars assumption.

\begin{proofof}{Theorem \ref{thm:ps:algEps}}
By Theorem~\ref{thm:Ralg:main:foreshadow}, there is a positive constant $\eps > 0$ such that the statement of Theorem~\ref{thm:ps:algEps} holds whenever the distributions satisfy the no $\eps$-superstars assumption.  We will therefore assume that there exists a distribution in the input that violates the no $\eps$-superstars assumption for this positive constant $\eps$.  That is,
		$\prob{i = \argmax_{j=1}^n x_j}\geq \eps$ for some $i$.
	Without loss of generality we assume that this distribution is $D_1$. Let $\tau$ be the threshold selected by the algorithm in Theorem \ref{thm:ps:alg1}. Recall that Theorem \ref{thm:ps:alg1} shows that, for any arbitrary $\eps' > 0 $, there exists a single threshold algorithm that chooses the maximum value with probability at least $\max_\lambda \sum_{k=1}^{\infty}\Big( \frac 1 k \frac{\lambda^k e^{-\lambda}}{k!} \Big) - \eps' $, for the \our prophet secretary problem. For the purpose of this theorem, we set $\eps'= \frac{e^{-1.5}\eps^2}{32}$. 
	We will consider two cases.  In the first case we have $\prob{x_1 < \tau  \text{ and } 1 = \argmax_{j=1}^n x_j}\geq \frac {\eps}{2}$.  In the second case we have $\prob{x_1 \geq \tau}\geq \frac {\eps}{2}$.  Note that we must be in one of these cases, since 
	\[ \prob{x_1 < \tau  \text{ and } 1 = \argmax_{j=1}^n x_j} + \prob{x_1 \geq \tau} \geq \prob{1 = \argmax_{j=1}^n x_j} \geq \eps. \]
	
	\textbf{Case 1. } In this case we apply the single threshold algorithm of Theorem \ref{thm:ps:alg1}, with a slight modification: if $D_1$ is one of the last $\frac{\eps n}{2}$ items, and we reach it, we stop and accept it regardless of its value. Note that the probability that $D_1$ appears in one of the last $\frac{\eps n}{2}$ positions, and at the same time the maximum appears after $D_1$ (and hence also somewhere in the last $\frac{\eps n}{2}$ positions), is at most $\frac{\eps }{2} \times \frac{\eps }{2} \times \frac 1 2 = \frac{\eps^2 }{8}$.  This is an upper bound on the loss of using this modification of the algorithm.  On the other hand, the probability that $D_1$ appears as one of the last $\frac{\eps n}{2}$ items, is the maximum item, and is below the threshold $\tau$ (which also means no item is above the threshold) is at least $\prob{x_1 < \tau  \text{ and } 1 = \argmax_{j=1}^n x_j} \times \frac{\eps }{2} \geq \frac {\eps^2}{4}$.  This is a lower bound on the expected gain of using this modification to the algorithm.  Therefore in this case we improve Theorem \ref{thm:ps:alg1} by at least $\frac{\eps^2}{4} - \frac{\eps^2}{8} = \frac {\eps^2}{8}$.

	\textbf{Case 2. } In this case we show that the analysis of Theorem \ref{thm:ps:alg1} in not tight and hence we provide a better bound for the algorithm with threshold $\tau$. To prove this, we show a constant gap in Inequality \ref{eq:ps:bound}, which directly translates to a constant improvement on the probability of success of the algorithm. Specifically, we consider the case where $D_1$ is the only item above the threshold, but more than one of its corresponding dummy distribution is above the threshold (i.e., $\mathcal{K}'\geq 2$). In this situation, the algorithm certainly selects the maximum; however, in the analysis, we assumed that of the $\mathcal{K}'$ values above the threshold from the dummy distributions, the algorithm would only choose the maximum with probability $\frac 1 {\mathcal{K}'} \leq \frac{1}{2}$ due to the ordering of items.
	Recall that $\prob{\maxall \leq \tau} = e^{-\lambda} > e^{-1.5}$ and hence $\prob{\max_{i=2}^{n} x_i \leq \tau} > e^{-1.5}$. Moreover, note that 	$\frac{\prob{x_1 \geq \tau}}{2}$ is a lower bound on the probability that we see at least one item above the threshold in half of the dummy distribution corresponding to $D_1$ and hence with probability at least $\Big(\frac{\prob{x_1 \geq \tau}}{2}\Big)^2$ we see at least one item above the threshold in the first half of the distributions and at least one in the second half. Thus, we have 
	\begin{align*}
		\prob{\mathcal{K}'\geq 2 \text{ and } x_1 \geq \tau \text{ and }\forall_{i\in \{2,\dots,n\}} x_i < \tau} \geq \Big(\frac{\prob{x_1 \geq \tau}}{2}\Big)^2 \times \prob{\forall_{i\in \{2,\dots,n\}} x_i < \tau}
		 \geq \frac{e^{-1.5}\eps^2}{8},
	\end{align*}
	Therefore, in an event that occurs with probability at least $\frac{e^{-1.5}\eps^2}{8}$, we can improve our bound from something at most $\frac{1}{2}$ to $1$.  This leads to a gap of $\frac{e^{-1.5}\eps^2}{16}$ in Inequality \ref{eq:ps:bound}, and hence a corresponding improvement to Theorem~\ref{thm:ps:alg1}.  
	
	Thus, in either case, we obtain an improvement of $\epsilon_0 = \frac{\eps^2}{16e^{1.5}}$ to the bound in Theorem~\ref{thm:ps:alg1}, which says we select the maximum value with probability at least $\max_\lambda \sum_{k=1}^{\infty}\Big( \frac 1 k \frac{\lambda^k e^{-\lambda}}{k!} \Big) - \eps' + \frac{\eps^2}{16e^{1.5}} = \max_\lambda \sum_{k=1}^{\infty}\Big( \frac 1 k \frac{\lambda^k e^{-\lambda}}{k!} \Big)  + \frac{\eps^2}{32 e^{1.5}}$.
\end{proofof}



\section{Conclusion and Open Problems}
\label{sec:conclusion}

We have provided several new results for the best-choice problem in the 
prophet inequality and prophet secretary settings, where the goal is to
maximize the probability of selecting the largest value from a sequence of values
drawn independently from known distributions.  Many of our proofs involve 
Poissonization-style arguments, where
we approximate the number of values above a threshold with a Poisson random
variable.  This approach was particularly useful for generalizing results
from the i.i.d.\ setting to the different setting of arbitrary
distributions in a random order.  We believe this approach may be
useful for other related problems.

Our main open problems relate to our most technical result, namely
that, under the no superstars assumption, we can use an algorithm with
multiple thresholds to select the maximum with probability
approximately $0.5801$ in the setting with arbitrary distributions in
a random order.  It is open to determine what probability can be
achieved for arbitrary distributions in a random order without the no
superstars assumption.  A related open question would be to simplify
our proof; it would be interesting to know if there is a more
straightforward argument, and such an argument might more readily lead
to results without the no superstars assumption.  Indeed, we
conjecture the following: that for any $n$, the worst-case
instance of the best-choice prophet secretary problem is an
i.i.d. instance, so in particular the worst-case success probability
matches that of the i.i.d. best-choice problem.



Another open problem is to consider
``best-case'' orderings, where the player trying to select the maximum
is allowed to choose the order of the distributions for observation.
Does the ability to choose the ordering provide an advantage over 
random order, in the worst case?  
Even beyond worst-case instances, there is a computational problem of finding the
best ordering.  Can the best ordering for an arbitrary problem instance be found in polynomial time?

We extended our results to the problem of selecting one of the top $k$
values.  More generally, one could consider the problem of maximizing 
other functions of the rank of the value selected, such as minimizing the
expected rank.  One could also study
variants in which multiple values can be selected, subject to a downward-closed
constraint, and the goal is to maximize a function of the set of ranks of the
selected values.  For example, how should one select values subject to a matroid constraint,
so as to maximize the probability that the largest value is among
the values selected?

\newpage



\bibliographystyle{ACM-Reference-Format}

\begin{thebibliography}{GHK{\etalchar{+}}14}

\bibitem[AEE{\etalchar{+}}17]{abolhassani2017beating}
Melika Abolhassani, Soheil Ehsani, Hossein Esfandiari, MohammadTaghi
  HajiAghayi, Robert Kleinberg, and Brendan Lucier.
\newblock Beating $1-1/e$ for ordered prophets.
\newblock In {\em Proceedings of the 49th Annual ACM Symposium on Theory
  of Computing}, pages 61--71, 2017.

\bibitem[AHL12]{alaei2012online}
Saeed Alaei, MohammadTaghi Hajiaghayi, and Vahid Liaghat.
\newblock Online prophet-inequality matching with applications to ad
  allocation.
\newblock In {\em Proceedings of the 13th ACM Conference on Electronic
  Commerce}, pages 18--35, 2012.

\bibitem[Ala14]{alaei2014bayesian}
Saeed Alaei.
\newblock Bayesian combinatorial auctions: Expanding single buyer mechanisms to
  many buyers.
\newblock {\em SIAM Journal on Computing}, 43(2):930--972, 2014.


\bibitem[BIK07]{babaioff2007matroids}
Moshe Babaioff, Nicole Immorlica, and Robert Kleinberg.
\newblock Matroids, secretary problems, and online mechanisms.
\newblock In {\em Proceedings of the Eighteenth Annual ACM-SIAM symposium on
  Discrete Algorithms}, pages 434--443, 2007.

\bibitem[BM{\etalchar{+}}15]{bardanet2015concentration}
R{\'e}mi Bardenet and Odalric-Ambrym Maillard.
\newblock Concentration inequalities for sampling without replacement.
\newblock {\em Bernoulli}, 21(3):1361--1385, 2015.

\bibitem[B78]{bojdecki1978optimal}
Tomasz Bojdecki.  
\newblock On optimal stopping of a sequence of
independent random variables -- probability maximizing approach.
\newblock {\em Stochastic Processes and Their Applications}, 6(2):153-163, 1978.  

\bibitem[CFH{\etalchar{+}}17]{correa2017posted}
Jos{\'e} Correa, Patricio Foncea, Ruben Hoeksma, Tim Oosterwijk, and Tjark
  Vredeveld.
\newblock Posted price mechanisms for a random stream of customers.
\newblock In {\em Proceedings of the 2017 ACM Conference on Economics and
  Computation}, pages 169--186. ACM, 2017.

\bibitem[CHMS10]{chawla2009sequential}
Shuchi Chawla, Jason~D. Hartline, David~L. Malec, and Balasubramanian Sivan.
\newblock Multi-parameter mechanism design and sequential posted pricing.
\newblock In {\em Proceedings of the 42nd Annual {ACM} Symposium on Theory of
  Computing}, pages 311--320, 2010.

\bibitem[DEH{\etalchar{+}}15]{dehghani2015online}
Sina Dehghani, Soheil Ehsani, MohammadTaghi Hajiaghayi, Vahid Liaghat, and
  Saeed Seddighin.
\newblock Online survivable network design and prophets.
\newblock 2015.

\bibitem[DEH{\etalchar{+}}17]{dehghani_et_al:LIPIcs:2017:7480}
Sina Dehghani, Soheil Ehsani, MohammadTaghi Hajiaghayi, Vahid Liaghat, and
  Saeed Seddighin.
\newblock {Stochastic k-Server: How Should Uber Work?}
\newblock In {\em Proceedings of the International Colloquium on Automata, Languages, and
  Programming}, 2017.



\bibitem[Dyn63]{dynkin1963optimum}
Eugene~B Dynkin.
\newblock The optimum choice of the instant for stopping a {M}arkov process.
\newblock In {\em Soviet Math. Dokl}, volume~4, 1963.

\bibitem[EHKS18]{EHKS18}
Soheil Ehsani, MohammadTaghi Hajiaghayi, Thomas Kesselheim, and Sahil Singla.
\newblock Prophet secretary for combinatorial auctions and matroids.
\newblock In {\em Proceedings of the Twenty-Ninth Annual {ACM-SIAM} Symposium
  on Discrete Algorithms}, pages 700--714, 2018.

\bibitem[EHLM17]{esfandiari2015prophet}
Hossein Esfandiari, MohammadTaghi Hajiaghayi, Vahid Liaghat, and Morteza
  Monemizadeh.
\newblock Prophet secretary.
\newblock In {\em To appear in SIAM Journal on Discrete Mathematics}. 2017.

\bibitem[Fer89]{Fer89}
Thomas~S. Ferguson.
\newblock Who solved the secretary problem?
\newblock {\em Statist. Sci.}, 4(3):282--289, 08 1989.


\bibitem[Fre83]{Free83}
P.~R. Freeman.
\newblock The secretary problem and its extensions: A review.
\newblock {\em International Statistical Review}, 51(2):189--206, 1983.

\bibitem[FSZ15]{FSZ-SODA15}
Moran Feldman, Ola Svensson, and Rico Zenklusen.
\newblock {A simple O (log~log~(rank))-competitive algorithm for the matroid
  secretary problem}.
\newblock In {\em Proceedings of the Twenty-Sixth Annual ACM-SIAM Symposium on
  Discrete Algorithms}, pages 1189--1201, 2015.


\bibitem[GGLS08]{garg2008stochastic}
Naveen Garg, Anupam Gupta, Stefano Leonardi, and Piotr Sankowski.
\newblock Stochastic analyses for online combinatorial optimization problems.
\newblock In {\em Proceedings of the Nineteenth Annual ACM-SIAM Symposium on
  Discrete Algorithms}, pages 942--951, 2008.

\bibitem[GHK{\etalchar{+}}14]{gobel2014online}
Oliver G{\"o}bel, Martin Hoefer, Thomas Kesselheim, Thomas Schleiden, and
  Berthold V{\"o}cking.
\newblock Online independent set beyond the worst-case: Secretaries, prophets,
  and periods.
\newblock In {\em Proceedings of the International Colloquium on Automata, Languages, and
  Programming}, pages 508--519, 2014.

\bibitem[GM66]{GM66}
John~P. Gilbert and Frederick Mosteller.
\newblock Recognizing the maximum of a sequence.
\newblock {\em J. Amer. Statist. Assoc.}, 61:35--73, 1966.

\bibitem[GM08]{GM-SODA08}
Gagan Goel and Aranyak Mehta.
\newblock Online budgeted matching in random input models with applications to
  adwords.
\newblock In {\em Proceedings of the Nineteenth Annual ACM-SIAM Symposium on
  Discrete Algorithms}, pages 982--991, 2008.


\bibitem[GS17]{GS-IPCO17}
Guru~Prashanth Guruganesh and Sahil Singla.
\newblock Online matroid intersection: Beating half for random arrival.
\newblock In {\em Proceedings of the International Conference on Integer Programming and
  Combinatorial Optimization}, pages 241--253, 2017.

\bibitem[GZ66]{GZ66}
S.M. Gusein-Zade.
\newblock The problem of choice and the optimal stopping rule for a sequence of
  independent trials.
\newblock {\em Theory of Probility and its Applications}, 11:472--476, 1966.

\bibitem[HKP04]{hajiaghayi2004adaptive}
Mohammad~Taghi Hajiaghayi, Robert Kleinberg, and David~C Parkes.
\newblock Adaptive limited-supply online auctions.
\newblock In {\em Proceedings of the 5th ACM Conference on Electronic
  commerce}, pages 71--80, 2004.

\bibitem[HKS07]{hajiaghayi2007automated}
Mohammad~Taghi Hajiaghayi, Robert Kleinberg, and Tuomas Sandholm.
\newblock Automated online mechanism design and prophet inequalities.
\newblock In {\em Proceedings of the 22nd National Conference on Artificial Intelligence}, volume~7, pages 58--65, 2007.

\bibitem[Hoe63]{Hoeffding63}
Wassily Hoeffding.
\newblock Probability inequalities for sums of bounded random variables.
\newblock {\em Journal of the American Statistical Association},
  58(301):13--30, 1963.

\bibitem[K{\etalchar{+}}85]{kennedy1985optimal}
DP~Kennedy et~al.
\newblock Optimal stopping of independent random variables and maximizing
  prophets.
\newblock {\em The Annals of Probability}, 13(2):566--571, 1985.

\bibitem[Ken87]{kennedy1987prophet}
DP~Kennedy.
\newblock Prophet-type inequalities for multi-choice optimal stopping.
\newblock {\em Stochastic Processes and their Applications}, 24(1):77--88,
  1987.

\bibitem[Ker86]{kertz1986comparison}
Robert~P Kertz.
\newblock Comparison of optimal value and constrained maxima expectations for
  independent random variables.
\newblock {\em Advances in applied probability}, 18(02):311--340, 1986.

\bibitem[Kle05]{kleinberg2005multiple}
Robert Kleinberg.
\newblock A multiple-choice secretary algorithm with applications to online
  auctions.
\newblock In {\em Proceedings of the Sixteenth Annual ACM-SIAM Symposium on
  Discrete Algorithms}, pages 630--631, 2005.

\bibitem[KMT11]{KMT-STOC11}
Chinmay Karande, Aranyak Mehta, and Pushkar Tripathi.
\newblock Online bipartite matching with unknown distributions.
\newblock In {\em Proceedings of the Forty-Third Annual ACM Symposium on Theory
  of Computing}, pages 587--596, 2011.


\bibitem[KP09]{KorulaPal-ICALP09}
Nitish Korula and Martin P{\'a}l.
\newblock {Algorithms for secretary problems on graphs and hypergraphs}.
\newblock In {\em Proceedings of the International Colloquium on Automata, Languages, and
  Programming}, pages 508--520, 2009.

\bibitem[KRTV13]{kesselheim2013optimal}
Thomas Kesselheim, Klaus Radke, Andreas T{\"o}nnis, and Berthold V{\"o}cking.
\newblock An optimal online algorithm for weighted bipartite matching and
  extensions to combinatorial auctions.
\newblock In {\em European Symposium on Algorithms}, pages 589--600, 2013.

\bibitem[KS77]{krengel1977semiamarts}
Ulrich Krengel and Louis Sucheston.
\newblock Semiamarts and finite values.
\newblock {\em Bull. Am. Math. Soc}, 1977.

\bibitem[KS78]{krengel1978semiamarts}
Ulrich Krengel and Louis Sucheston.
\newblock On semiamarts, amarts, and processes with finite value.
\newblock {\em Advances in Probability}, 4:197--266, 1978.


\bibitem[Lac14]{Lachish-FOCS14}
Oded Lachish.
\newblock O(log log rank) competitive ratio for the matroid secretary problem.
\newblock In {\em Proceedings of the Fifty-Fifth Annual IEEE Symposium on
  Foundations of Computer Science}, pages 326--335, 2014.

\bibitem[LC60]{le1960approximation}
Lucien Le~Cam.
\newblock An approximation theorem for the Poisson binomial distribution.
\newblock {\em Pacific Journal of Mathematics}, 10(4):1181--1197, 1960.

\bibitem[MY11]{MY-STOC11}
Mohammad Mahdian and Qiqi Yan.
\newblock Online bipartite matching with random arrivals: an approach based on
  strongly factor-revealing lps.
\newblock In {\em Proceedings of the Forty-Third Annual ACM Symposium on Theory
  of Computing}, pages 597--606, 2011.

\bibitem[Mey01]{Meyerson-FOCS01}
Adam Meyerson.
\newblock Online facility location.
\newblock In {\em Proceedings of the 42nd Annual Symposium on Foundations of Computer Science}, pages 426--431, 2001.

\bibitem[MU05]{mitzenmacher2005probability}
Michael Mitzenmacher and Eli Upfal.
\newblock {\em Probability and computing: Randomized algorithms and
  probabilistic analysis}.
\newblock Cambridge University Press, 2005.

\bibitem[Pet81]{Petruccelli81}
Joseph D. Petruccelli.
\newblock Best-Choice Problems Involving Uncertainty of Selection and Recall of Observations
\newblock {\em Journal of Applied Probability}, 18(2):415--425, 1981.

\bibitem[Por87]{Porosinski87}
Zdzis\l{}aw Porosi\'{n}ski.
\newblock The full-information best choice problem with a random number of observations.
\newblock {\em Stochastic Processes Applications}, 24:293--307, 1987.


\bibitem[Sam84]{samuel1984comparison}
Ester Samuel-Cahn et~al.
\newblock Comparison of threshold stop rules and maximum for independent
nonnegative random variables.
\newblock {\em the Annals of Probability}, 12(4):1213--1216, 1984.



\bibitem[Ste94]{steele1994cam}
J~Michael Steele.
\newblock Le Cam's inequality and Poisson approximations.
\newblock {\em The American Mathematical Monthly}, 101(1):48--54, 1994.

\bibitem[VDVW96]{van1996weak}
Aad~W Van Der~Vaart and Jon~A Wellner.
\newblock Weak convergence.
\newblock In {\em Weak convergence and empirical processes}, pages 16--28.
Springer, 1996.


\bibitem[Yan11]{yan2011mechanism}
Qiqi Yan.
\newblock Mechanism design via correlation gap.
\newblock In {\em Proceedings of the Twenty-Second Annual ACM-SIAM Symposium on
  Discrete Algorithms}, pages 710--719, 2011.

\end{thebibliography}

\newcommand{\etalchar}[1]{$^{#1}$}

\newpage

\appendix
\section{Further Related Work}
\label{sec:related}

Starting with the work of
Dynkin~\cite{dynkin1963optimum}, there has been a long line of
research on variants of the secretary problem.
See the survey by Ferguson~\cite{Fer89} for a light-hearted but thorough 
historical treatment, and the review paper by
Freeman~\cite{Free83} for many generalizations.  

There have likewise been many generalizations of the prophet inequality, since
the initial work of Garling, Krengel, and Sucheston~\cite{krengel1978semiamarts,krengel1977semiamarts}.
One of the first generalizations was the \textit{multiple-choice
prophet inequality}
\cite{kennedy1987prophet,kennedy1985optimal,kertz1986comparison} in
which we are allowed to pick $k$ items and the goal is to maximize
their sum. Alaei~\cite{alaei2014bayesian} gives an almost tight
($1-{1}/{\sqrt{k+3}}$)-approximation algorithm for this problem (the
lower bound is due to~\cite{hajiaghayi2007automated}), where the
{\em approximation factor} is the ratio of the expectation of the
algorithm to the expectation of the optimum. Similarly, the \textit{multiple-choice secretary}
problem was first studied by Hajiaghayi et
al.~\cite{hajiaghayi2004adaptive}, and Kleinberg~\cite{kleinberg2005multiple} gives a $(1-O(\sqrt{1/k}))$-approximation algorithm.

Other than Dynkin~\cite{dynkin1963optimum}, generally follow-up work
considers approximation factors instead of maximizing the probability
of obtaining the best.  An interesting exception is
Bojdecki~\cite{bojdecki1978optimal}, who provides a general approach
for determining the optimal stopping time for choosing the maximum of
a sequence of i.i.d. random varaibles (along with approaches for
finding the optimal stopping time for some related problems). This work
does not determine bounds on the probability of choosing the maximum, as we
do here for the problems we consider.

The research investigating the relation between prophet inequalities
and online auctions is initiated
in~\cite{hajiaghayi2007automated,chawla2009sequential}. This lead to
several interesting follow up works for
matroids~\cite{yan2011mechanism} and
matchings~\cite{alaei2012online}. Meanwhile, the connection between
secretary problems and online auctions is first explored in
Hajiaghayi et al.~\cite{hajiaghayi2004adaptive}. Its generalization
to matroids is considered
in~\cite{babaioff2007matroids,Lachish-FOCS14,FSZ-SODA15} and to
matchings
in~\cite{GM-SODA08,KorulaPal-ICALP09,MY-STOC11,KMT-STOC11,kesselheim2013optimal,GS-IPCO17}.


In the prophet secretary model, Esfandiari et
al.~\cite{esfandiari2015prophet} give a $(1-1/e)$-approximation in
the special case of a single item. Going beyond $1-1/e$ has been
challenging. Only recently, Abolhasani et
al.~\cite{abolhassani2017beating} and Correa et
al.~\cite{correa2017posted} improve this  factor for the single item
i.i.d. setting. Very recently, Ehsani et al.\cite{EHKS18} extend
prophet secretary  for combinatorial auctions and matroids as well.


\section{Appendix: Omitted Proofs from Section~\ref{sec:TopK}}
\label{app:TopK}

We present the proof of Theorem~\ref{thm:topk}, which states that one can solve the top-$k$-choice prophet inequality problem with a failure rate that is exponentially decreasing in $k$.  We restate the theorem below for completeness.

\begin{theorem}
For any $k \geq 1$, there exists a single-threshold algorithm for the top-$k$-choice prophet inequality problem that succeeds with probability at least $1 - 2e^{-\gamma k}$, where $\gamma = (3-\sqrt{5})/2$. 
\end{theorem}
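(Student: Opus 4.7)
The plan is to analyze a single-threshold algorithm of the kind already indicated in the text preceding the theorem: set the threshold $\tau$ so that
\[
\sum_{i=1}^n \Pr_{x_i \sim D_i}[x_i > \tau] = \gamma k,
\]
and accept the first observation $x_i$ that exceeds $\tau$. Let $X = |\{i : x_i > \tau\}|$, which is a sum of independent (but non-identical) Bernoulli random variables with mean $\gamma k$.

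The key structural observation is that the algorithm succeeds whenever $1 \leq X \leq k$. Indeed, every value above $\tau$ is strictly greater than every value below $\tau$, so if at most $k$ values lie above $\tau$, each of them must be among the top $k$ overall; hence whichever one the algorithm accepts (the first one in the arrival order) is necessarily a top-$k$ value. Therefore
\[
\Pr[\text{fail}] \leq \Pr[X = 0] + \Pr[X \geq k+1],
\]
and the remainder of the proof consists of two concentration estimates.

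For the lower tail, independence gives the standard bound
\[
\Pr[X = 0] = \prod_{i=1}^n (1-p_i) \leq \exp\Bigl(-\sum_i p_i\Bigr) = e^{-\gamma k},
\]
where $p_i = \Pr[x_i > \tau]$; crucially, this depends only on the aggregate $\sum_i p_i$ and not on the individual $p_i$'s. For the upper tail, I would apply a standard multiplicative Chernoff bound for sums of independent Bernoullis (again, the MGF estimate $\mathbb{E}[e^{tX}] \leq \exp(\mu(e^t-1))$ depends only on $\mu$, so the non-identical distributions pose no additional difficulty). This yields an exponent that, after setting the deviation $\delta$ so that $(1+\delta)\mu = k+1$, is a function of $\gamma$ alone.

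The value $\gamma = (3-\sqrt{5})/2$ is calibrated so that this Chernoff exponent is at least $\gamma k$, matching the lower-tail decay; it satisfies the clean identity $(1-\gamma)^2 = \gamma$, which is what makes the two exponents balance. Combining the two bounds then yields $\Pr[\text{fail}] \leq 2e^{-\gamma k}$. The main obstacle is thus just the careful verification of the upper-tail Chernoff bound at this specific value of $\gamma$, together with confirming that the lower-order terms coming from the ``$+1$'' in $k+1$ are absorbed in the asymptotics. Everything else, including the choice of threshold and the structural argument for success, is essentially forced by the shape of the theorem.
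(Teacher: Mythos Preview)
Your proposal is essentially the paper's own proof: the same threshold $\tau$ with $\mathbb{E}[X]=\gamma k$, the same structural observation that success is guaranteed when $1\le X\le k$, and the same pair of Chernoff-type tail bounds balanced via the identity $(1-\gamma)^2=\gamma$. Your lower-tail estimate $\Pr[X=0]=\prod_i(1-p_i)\le e^{-\gamma k}$ is in fact a bit more direct than the paper's KL-divergence route, but otherwise the arguments coincide.
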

\begin{proof}
We'll begin by showing a bound with a slightly worse constant in the exponent.  We will then describe a way to optimize the constant at the end of the proof.

For a given constant $t$, let $X(t)$ be the random variable corresponding to the number of items $i$ such that $x_i \geq t$.  Choose $\tau$ so that $\ex{X(\tau)} = k/2$.  

The single threshold algorithm with threshold $\tau$ will succeed unless $X(\tau) = 0$ or $X(\tau) > k$.  We note that $X(\tau)$ is the sum of $n$ Bernoulli random variables, where variable $i$ is $1$ with probability $\Pr[ x_i \geq t ]$.  By the additive form of the Chernoff bound, we have that
\[ \Pr[ X(\tau) = 0 ] = \Pr[ X(\tau) \leq \ex{X(\tau)} - k/2 ] < e^{-KL( 0 || k/2n ) \cdot n } \]
where $KL(p || q)$ denotes the Kullback-Leibler (KL) divergence.  Using the bound $KL(p || q) \geq (p-q)^2 / q$ for $p < q$, we have that
\[ \Pr[ X(\tau) = 0 ] < e^{-KL( 0 || k/2n ) \cdot n } < e^{ n \cdot ( k / 2n )^2 / (k / 2n) } = e^{-k/4}. \]
Similarly, we have
\[ \Pr[ X(\tau) > k ] = \Pr[ X(\tau) > \ex{X(\tau)} - k/2 ] < e^{-KL( k/n || k/2n ) \cdot n } < e^{ n \cdot ( k / 2n )^2 / (k / n) } = e^{-k/2} \]
where the second inequality uses the bound $KL(p || q) \geq (p-q)^2 / p$ for $p > q$.  Taking a union bound over these two events completes the proof.

We note that if we choose a threshold $\tau$ so that $\ex{X(\tau)} = \gamma k$ for $\gamma = (3-\sqrt{5})/2$, we obtain
a slightly better probability of success $1 - 2e^{-\gamma k}$ with the same argument.  We have not sought to optimize the constant further. 
\end{proof}

We next present the proof of Theorem~\ref{thm:topk.lb}, which shows that one cannot improve upon this exponential dependence on $k$, regardless of $n$ and even for i.i.d.\ instances.  We restate the theorem below.

\begin{theorem}
There exists a constant $c$ such that, for any fixed $k \geq 1$, no algorithm for the top-$k$-choice prophet inequality problem with identical distributions 
selects the maximum with probability more than $1 - e^{-c \cdot k}$.
\end{theorem}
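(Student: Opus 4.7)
The plan is to exhibit an i.i.d.\ instance on which every algorithm fails with probability at least $e^{-ck}$, by formalizing the ``trapping'' intuition from the paragraph preceding the theorem. I would take $D = \mathrm{Uniform}[0,1]$ and $n$ a sufficiently large multiple of $k$, and define the events $A = \{\text{all top-}k\text{ values lie in positions } 1,\ldots,n/2\}$ and $B = \{\text{all top-}k\text{ values lie in positions } n/2+1,\ldots,n\}$. By symmetry of the i.i.d.\ distribution, $\Pr[A] = \Pr[B] = \binom{n/2}{k}/\binom{n}{k}$, which is at least $e^{-c_0 k}$ for a small constant $c_0 \approx \ln 2$ once $n$ is a large enough multiple of $k$.

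The first main step is to reduce the failure probability to a conditional estimate. Write $V_1 = (x_1,\ldots,x_{n/2})$ for the first-half observations. At the halfway point the algorithm has either \emph{committed} (accepted some $x_t$ with $t \leq n/2$) or \emph{waited}, and for a deterministic algorithm which of these occurs is determined by $V_1$. If it has committed, then on event $B$ the selected value lies in the first half whereas the top $k$ all lie in the second, so it fails; if it has waited, then on event $A$ every later value is strictly less than the $k$-th largest entry of $V_1$, so no top-$k$ value can be picked. Since the algorithm seeing $V_1$ chooses whichever branch minimizes failure, and averaging over randomness only makes this worse, one obtains
\[
\Pr[\mathrm{fail}] \;\geq\; \mathbb{E}_{V_1}\bigl[\min(\Pr[A \mid V_1],\, \Pr[B \mid V_1])\bigr].
\]

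Next I would lower-bound the right-hand side by exhibiting a ``good'' event on $V_1$ with $e^{-O(k)}$ probability on which both conditional probabilities are at least $e^{-O(k)}$. Writing $\alpha$ and $\beta$ for the $k$-th largest and the maximum of $V_1$, direct computation gives $\Pr[A \mid V_1] = \alpha^{n/2}$ (nothing in the second half may exceed $\alpha$) and $\Pr[B \mid V_1] = \Pr[\mathrm{Bin}(n/2,\,1-\beta) \geq k]$ ($k$ second-half values must beat $\beta$). Declare $V_1$ \emph{good} if it has no entry above $1 - k/n$ and at least $k$ entries in $[1 - 2k/n,\, 1 - k/n]$. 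On this event $\alpha \geq 1 - 2k/n$ and $\beta \leq 1 - k/n$, yielding $\Pr[A \mid V_1] \geq (1 - 2k/n)^{n/2} \to e^{-k}$ and $\Pr[B \mid V_1] \geq \Pr[\mathrm{Bin}(n/2,\,k/n) \geq k] = e^{-O(k)}$ via a Poisson/Chernoff estimate. A multinomial bound shows the good event itself has probability $e^{-O(k)}$, and combining yields $\Pr[\mathrm{fail}] \geq e^{-ck}$.

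The main obstacle is the joint parameter tuning of the good event: the thresholds $1 - k/n$ and $1 - 2k/n$ must be chosen so that none of the three quantities --- $\Pr[V_1 \text{ good}]$, $\Pr[A \mid V_1]$, and $\Pr[B \mid V_1]$ on the good set --- degrades from exponential to super-exponential in $k$. Shrinking any one window to improve one of these rates tends to hurt another, which is exactly the ``some care'' the paper warns about. The final $c$ will be the sum of the three exponential rates, substantially larger than the upper-bound constant $\gamma = (3 - \sqrt{5})/2$, but adequate for the existence claim of the theorem.
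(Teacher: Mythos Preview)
Your proposal is correct and achieves the same $e^{-ck}$ lower bound, but it proceeds along a genuinely different route from the paper's own argument.

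The paper does not use $\mathrm{Uniform}[0,1]$. Instead it takes a sparse distribution that is $0$ with probability $1-k/n$ and otherwise uniform on $[1,2]$, and then \emph{decomposes the randomness} into (i) the latent non-zero values $v_1<\dots<v_n$, (ii) the permutation $\pi$, (iii) two orderings $\sigma_1,\sigma_2$ that decide \emph{which} positions in each half will be the successful ones, and (iv) the numbers $Z_1,Z_2$ of successes in each half. Its events are $A=\{Z_1=k\}$ and an ordering event $B$ on $(\pi,\sigma_1,\sigma_2)$; the point of this machinery is that $A$, $B$, and the entire first-half transcript can be revealed to the algorithm while $Z_2$ remains an \emph{independent} $\mathrm{Bin}(n/2,k/n)$ variable. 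The trap is then a one-line dichotomy in $Z_2$: if the algorithm commits it loses when $Z_2\ge k$, if it waits it loses when $Z_2=0$, each with probability $e^{-\Theta(k)}$.

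Your argument, by contrast, conditions directly on the observed first-half vector $V_1$, computes $\Pr[A\mid V_1]=\alpha^{n/2}$ and $\Pr[B\mid V_1]=\Pr[\mathrm{Bin}(n/2,1-\beta)\ge k]$ explicitly, and then isolates a ``good'' slab event on $V_1$ (no entry above $1-k/n$, at least $k$ entries in $[1-2k/n,1-k/n]$) on which both conditional probabilities stay at $e^{-O(k)}$. This avoids the paper's clever randomness decoupling at the cost of the three-way parameter balancing you flag as the main obstacle; conversely, the paper's decomposition buys it freedom from any such tuning, since once $Z_2$ is isolated the conditional estimates are trivial binomial tails. Both yield an unoptimized constant $c$ of the same order, and neither attempts to close the gap with the upper-bound exponent $(3-\sqrt{5})/2$.
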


\begin{proof}
Take $n > k$ sufficiently large.  Our problem instance is i.i.d., with distribution $D$ as follows.  
With probability $k/n$, distribution $D$ takes a value drawn uniformly from $[1,2]$; with the remaining probability, the value is $0$. 
We say that an observation is \emph{successful} if it takes on a non-zero value.  In order to describe our analysis more conveniently, we will think of the random process that generates our sequence of observations in the following alternative---but equivalent---way. 
\begin{itemize}
\item We first draw $n$ values uniformly from $[1,2]$, say $v_1 < v_2 < \dotsc < v_n$.  We think of $v_i$ as the value that $x_i$ will take if $x_i$ is non-zero.  
We write $D_i$ for the distribution that takes on value $v_i$ with probability $k/n$ and $0$ otherwise.  We will think of value $x_i$ as being drawn from distribution $D_i$.
\item We choose a permutation $\pi$ on $\{1,\cdots,n\}$;  $\pi(i)$ is the position in the sequence that distribution $D_i$ appears.
\item We choose a number of successes $Z_1$ for the first $n/2$ observations, and correspondingly a 
number of successes $Z_2$ for the second $n/2$ observations.  Both $Z_1$ and $Z_2$ are binomial random variables
$\mbox{Bin}(n/2,k/n)$ and are chosen accordingly.
\item We choose permutations $\sigma_1$ on $\{1,\cdots,n/2\}$ and $\sigma_2$ on $\{n/2+1,\cdots,n\}$;  $\sigma_1$ gives the order of
the successful observations in the first $n/2$ observations, and similarly for $\sigma_2$, as described below.
\end{itemize}

More formally, we see observations in the order
$x_{\pi(1)},\dotsc,x_{\pi(n)}$.  For each $t \in \{1,\cdots,n/2\}$, $x_{\pi(t)}
= v_{\pi(t)}$ if $\sigma_1(t) \leq Z_1$, and otherwise $x_{\pi(t)} = 0.$
Similarly, for each $t \in \{n/2+1,\cdots,n\}$, $x_{\pi(t)} = v_{\pi(t)}$ if
$\sigma_2(t) \leq Z_2$, and otherwise $x_{\pi(t)} = 0.$ 
This process generates a distribution over value sequences that
is identical to the distribution of value sequences in our i.i.d.\ top-$k$-choice problem.

We now consider the following events.  Event $A$ is that $Z_1 = k$;
that is, the first half has $k$ non-zero values.  Event $B$ is that,
for each $t_1,t_2$ satisfying $t_1 \leq n/2$, $t_2 > n/2$,
$\sigma_1(t_1) \leq k$, and $\sigma_2(t_2) \leq k$, we have that
$\pi(t_1) \leq \pi(t_2)$.  That is, event $B$ is that the first $k$
non-zero values in the first half of the observations (as determined by $\sigma_1$) will
be less than the first $k$ non-zero values in the second half (as determined
by $\sigma_2$).  Note that, from the way we have defined event $B$, it is independent of $Z_1$
and $Z_2$, as it depends only on $\pi,\sigma_1,$ and $\sigma_2$.  Because of this,
events $A$ and $B$ are independent of each other (and independent of the value of $Z_2$).  

We make the following claims.  First, each of the events $A$ and $B$ happen with probability
$e^{-\theta(k)}$.  Second, conditioned on both $A$ and $B$ occurring, any algorithm must
fail with probability at least $e^{-\theta(k)}$.  The result follows immediately from these claims.

For event $A$, $Z_1$ is distributed as $\mbox{Bin}(n/2,k/n)$, and a simple calculation shows
that it equals $k$ with probability at least $e^{-c_1k}$ for a suitable constant $c_1$ and
large enough $k$.  Indeed, the distribution is well approximated by a Poisson distribution, so
the desired probability is approximately $e^{-k/2}(k/2)^k/k!,$ which is $e^{-\theta(k)}$.

For event $B$, since $\pi$ is a random ordering on the elements, the probability the first
$k$ values determined by $\sigma_1$ are all less than the first
$k$ values determined by $\sigma_2$ is just ${2k \choose k} \approx 2^{2k}/{\sqrt{\pi k}}$,
which is $e^{-\theta(k)}$.

Now, for any algorithm, consider any realization of $\{v_1, \dotsc, v_n\}$, $\pi$, $\sigma_1$,
$\sigma_2$, and $Z_1$ for which events $A$ and $B$ both occur.  Note
that specifying $Z_2$ then specifies the entire process.  Let us give
the algorithm the additional power to decide, knowing $\{v_1, \dotsc, v_n\}$, $\pi$,
$\sigma_1$, $\sigma_2$, and $Z_1$ (but not $Z_2$), whether to have selected an element
or not after the first $n/2$ observations.  If the algorithm does not
select an item, it will fail when $Z_2 = 0$, as then the $k$ largest
items have all appeared in the first half.  If the algorithm does 
select an item, it will fail when $Z_2 \geq k$, as then the $k$ largest
items all appear in the second half.  
As $Z_2$ is distributed as $\mbox{Bin}(n/2,k/n)$, each of these possibilities 
for $Z_2$ occurs with probability $e^{-\theta(k)}$.
Thus, if we condition on $A$ and $B$ both occurring, the algorithm fails with probability 
$e^{-\theta(k)}$ whether or not it chooses a value from among the first $n/2$ observations, and the result follows.
\end{proof}

\section{Appendix: Omitted Proofs from Section 5}
\label{subsec:reduc-alg-proofs}
\subsection{Concentration Bounds}
\label{subsec:reduc-conc}

This section is dedicated to the proofs of Lemmas~\ref{lm:Ralg:DS}, \ref{lm:iid:column}, and \ref{lm:iid:row}.
To begin, we require several prelimiary lemmata.  The following lemma, for an arbitrary pair of thresholds $\tau^0 \leq \tau^1$, bounds the probability that at least one of the $x_i$'s is within the range $[\tau^0,\tau^1]$.

\begin{lemma}\label{lm:iid:allyprob}
	Consider arbitrary numbers $\rho \in (0,1)$ and $\lambda \in [0, 1-\rho)$.
	Let $\tau^0$ and $\tau^1$ be such that $\prob{\maxall \leq \tau^0}=1-(\lambda+\rho)$ and $\prob{\maxall \leq \tau^1}=1-\lambda$. 
	Let $y_i$ be a random binary variable that is $1$ if $\tau^0\leq x_i \leq \tau^1$ and $0$ otherwise. 
	We have $$\rho \leq \prob{\exists_{i\in \{1,\dots,n\}} y_i = 1} \leq  \frac{\rho}{1-\lambda}.$$
\end{lemma}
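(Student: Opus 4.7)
}
Let $q_i = \prob{x_i \leq \tau^0}$ and $r_i = \prob{x_i \leq \tau^1}$, so that $\prob{y_i = 1} = r_i - q_i$. By the independence of the $x_i$ and the definitions of $\tau^0, \tau^1$, we have $\prod_i q_i = 1-\lambda-\rho$ and $\prod_i r_i = 1-\lambda$. The plan is to obtain the two bounds separately from these two identities.

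For the lower bound, I will observe that $\{\exists i : y_i = 1\}$ contains the event $\{\maxall \in [\tau^0,\tau^1]\}$: if $i^*$ is the (a.s.\ unique, by atomlessness) index achieving the maximum and this maximum lies in $[\tau^0,\tau^1]$, then $y_{i^*} = 1$. The probability of the latter event is exactly $\prob{\maxall \leq \tau^1} - \prob{\maxall \leq \tau^0} = (1-\lambda) - (1-\lambda-\rho) = \rho$, which delivers the lower bound.

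For the upper bound, I will exploit the product structure rather than a union bound (the naive bound $\sum_i (r_i - q_i)$ does not simplify cleanly in terms of $\lambda$ and $\rho$). By independence,
\begin{align*}
\prob{\forall i : y_i = 0} = \prod_i \bigl(1 - (r_i - q_i)\bigr).
\end{align*}
The key step is the term-by-term inequality $1 - (r_i - q_i) \geq q_i / r_i$, which rearranges to $(r_i - q_i)(1 - r_i) \geq 0$ and therefore holds since $r_i \geq q_i$ and $r_i \leq 1$. Multiplying these inequalities across $i$ yields
\begin{align*}
\prod_i \bigl(1 - (r_i - q_i)\bigr) \;\geq\; \frac{\prod_i q_i}{\prod_i r_i} \;=\; \frac{1-\lambda-\rho}{1-\lambda},
\end{align*}
so that $\prob{\exists i : y_i = 1} \leq 1 - \tfrac{1-\lambda-\rho}{1-\lambda} = \tfrac{\rho}{1-\lambda}$, as desired.

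The main obstacle is identifying the right term-by-term inequality for the upper bound: a direct union bound is too loose, and one needs to notice that the comparison $(1 - (r_i - q_i)) \cdot r_i \geq q_i$ is equivalent to the trivial inequality $(r_i - q_i)(1-r_i) \geq 0$, which then telescopes perfectly against $\prod q_i$ and $\prod r_i$.
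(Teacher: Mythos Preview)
Your proof is correct. The lower bound argument is identical to the paper's. For the upper bound, the paper takes a slightly different-looking route: it writes
\[
\lambda+\rho \;=\; \prob{\maxall>\tau^0} \;=\; \prob{\maxall>\tau^1} + \prob{\maxall\leq\tau^1}\cdot\prob{\exists_i\, y_i=1 \mid \forall_i\, x_i\leq\tau^1},
\]
and then asserts the correlation-type inequality $\prob{\exists_i\, y_i=1 \mid \forall_i\, x_i\leq\tau^1}\geq \prob{\exists_i\, y_i=1}$, which immediately gives $\rho\geq(1-\lambda)\prob{\exists_i\, y_i=1}$. Unwinding that conditional inequality coordinate-by-coordinate (using independence) yields exactly your inequality $\prod_i q_i/r_i \leq \prod_i\bigl(1-(r_i-q_i)\bigr)$, so the two arguments are the same at the core. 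The difference is only in packaging: the paper states the step as a monotone-conditioning fact and leaves its justification implicit, whereas you isolate the per-coordinate inequality $(r_i-q_i)(1-r_i)\geq 0$ and verify it directly. Your version has the minor advantage of making that step fully explicit.
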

\begin{proof}
	On one hand we have
	\begin{align*}
		\prob{\exists_{i\in \{1,\dots,n\}} y_i = 1} \geq \prob{\tau^0 \leq \maxall \leq \tau^1} = \rho.
	\end{align*}
	On the other hand we have
	\begin{align*}
		\lambda+\rho &= \prob{\maxall > \tau^0} \\
		&=  \prob{\maxall > \tau^1} +  \prob{\maxall \leq \tau^1} \times \prob{\exists_{i\in \{1,\dots,n\}} \tau^0\leq x_i \leq \tau^1 \Big| \forall_{i\in \{1,\dots,n\}} x_i \leq \tau^1}\\
		&\geq  \prob{\maxall > \tau^1} +  \prob{\maxall \leq \tau^1} \times \prob{\exists_{i\in \{1,\dots,n\}} \tau^0\leq x_i \leq \tau^1}\\
		&=  \lambda +  (1-\lambda)  \prob{\exists_{i\in \{1,\dots,n\}} \tau^0\leq x_i \leq \tau^1 }\\
		&= \lambda + (1-\lambda) \prob{\exists_{i\in \{1,\dots,n\}} y_i = 1}. 
	\end{align*}
	This implies
	\begin{align*}
		\prob{\exists_{i\in \{1,\dots,n\}} y_i = 1} \leq  \frac{\rho}{1-\lambda}.
	\end{align*}
\end{proof}


For an arbitrary index $i$, the following lemma upper bounds the probability that $x_i$ is within the range $[\tau^0,\tau^1]$. Later, we use this to show a concentration bound in Lemma \ref{lm:iid:sumBound}.

\begin{lemma}\label{lm:iid:oneyprob}
	Consider arbitrary numbers $\rho \in (0,1)$ and $\lambda \in [0, 1-\rho)$.
	Let $\tau^0$ and $\tau^1$ be such that $\prob{\maxall \leq \tau^0}=1-(\lambda+\rho)$ and $\prob{\maxall \leq \tau^1}=1-\lambda$. 
	Let $y_i$ be a random binary variable that is $1$ if $\tau^0\leq x_i \leq \tau^1$ and $0$ otherwise.
	Assuming the no $\eps$-superstars assumption we have $$\prob{y_j=1} \leq \frac{\prob{j = \argmax_{i=1}^n x_i}}{1-(\lambda+\rho)} \leq \frac{\eps}{1-(\lambda+\rho)}.$$
\end{lemma}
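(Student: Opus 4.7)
The plan is to lower-bound $\prob{j = \argmax_{i} x_i}$ by the joint probability of two events that, together, force $j$ to be the argmax, and then use independence to factor that joint probability.

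First, I would observe the following inclusion of events. If $y_j = 1$, then in particular $x_j \geq \tau^0$. So if additionally every other $x_i$ (for $i \neq j$) satisfies $x_i < \tau^0$, then $x_j$ strictly exceeds all the other $x_i$'s, which means $j$ is the argmax. This gives the containment
\begin{align*}
\{y_j = 1\} \cap \{\max_{i \neq j} x_i < \tau^0\} \subseteq \{j = \argmax_{i=1}^n x_i\},
\end{align*}
and therefore
\begin{align*}
\prob{j = \argmax_{i=1}^n x_i} \geq \prob{y_j = 1 \text{ and } \max_{i \neq j} x_i < \tau^0}.
\end{align*}

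Second, I would use independence of the $x_i$'s (so $y_j$ and $\max_{i \neq j} x_i$ are independent) to factor this as $\prob{y_j = 1} \cdot \prob{\max_{i \neq j} x_i < \tau^0}$. Dropping the constraint $i \neq j$ only makes the event smaller, so $\prob{\max_{i \neq j} x_i < \tau^0} \geq \prob{\maxall < \tau^0} = 1 - (\lambda + \rho)$ by the definition of $\tau^0$ (and using that distributions are atomless so the inequality $<$ versus $\leq$ does not matter). Combining,
\begin{align*}
\prob{j = \argmax_{i=1}^n x_i} \geq \prob{y_j = 1}\,\bigl(1 - (\lambda + \rho)\bigr),
\end{align*}
which rearranges to the first inequality of the lemma. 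The second inequality is then immediate from the no $\eps$-superstars assumption $\prob{j = \argmax_i x_i} \leq \eps$.

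There is no real obstacle here; the only subtlety is making sure we condition or factor correctly so that independence is legitimate, which is why I would phrase the argument via the containment of events and then factor using independence of $x_j$ from $\{x_i : i \neq j\}$, rather than conditioning on $y_j = 1$ and reasoning about the maximum of the remaining coordinates (which would amount to the same thing but is slightly less clean).
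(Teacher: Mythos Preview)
Your proof is correct and follows essentially the same approach as the paper: both lower-bound $\prob{j=\argmax_i x_i}$ by the probability that $x_j\geq\tau^0$ while all other coordinates are below $\tau^0$, factor via independence, replace $\prob{\max_{i\neq j}x_i<\tau^0}$ by $\prob{\maxall\leq\tau^0}=1-(\lambda+\rho)$, and then rearrange and apply the no-superstars assumption. The only cosmetic difference is that the paper first bounds via $\prob{x_j\geq\tau^0}$ and then passes to $\prob{y_j=1}$ at the end, whereas you start from $\{y_j=1\}$ directly.
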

\begin{proof}
	For any $j$ we have 
	\begin{align*}
		\prob{j = \argmax_{i=1}^n x_i} &\geq \prob{x_j\geq \tau^0} \prob{\argmax_{i\in \{0,\dots,n\}\setminus j} x_i < \tau^0 }\\
		&\geq \prob{x_j\geq \tau^0} \prob{\argmax_{i\in \{0,\dots,n\}} x_i < \tau^0 }\\
		&= \prob{x_j\geq \tau^0} \big(1-(\lambda+\rho)\big)\\
		&\geq \prob{\tau^1 \geq x_j\geq \tau^0} \big( 1-(\lambda+\rho)\big)\\
		&= \prob{y_j=1} \big( 1-(\lambda+\rho)\big).\\
	\end{align*}
	This together with the no-superstars assumption implies that
	\begin{align*}
		\prob{y_j=1} \leq \frac{\prob{j = \argmax_{i=1}^n x_i}}{1-(\lambda+\rho)} \leq \frac{\eps}{1-(\lambda+\rho)}.
	\end{align*}
\end{proof}


The following lemma, for an arbitrary set $S$ of indices, compares the expected number of $x_i$'s that are in a range $[\tau^0,\tau^1]$ with the probability of observing at least one $x_i$ in the range $[\tau^0,\tau^1]$. We later use this to exchange $\prob{\exists_{i\in S} x_i \in [\tau^0,\tau^1]}$ and $\sum_{i\in S} \prob{ x_i \in [\tau^0,\tau^1]}$.

\begin{lemma}\label{lm:iid:ES}
	Consider arbitrary numbers $\rho \in (0,1)$ and $\lambda \in [0, 1-\rho)$.
	Let $\tau^0$ and $\tau^1$ be such that $\prob{\maxall \leq \tau^0}=1-(\lambda+\rho)$ and $\prob{\maxall \leq \tau^1}=1-\lambda$. 
	Let $y_i$ be a random binary variable that is $1$ if $\tau^0\leq x_i \leq \tau^1$ and $0$ otherwise. Let $p'_i= \prob{y_i = 1}$.
	For any set $S\subseteq  \{1,\dots,n\}$ we have
	$$\max\big(1-\sum_{i\in S} p'_i, 1-\frac{\rho}{1-\lambda}\big)\sum_{i\in S} p'_i \leq \prob{\exists_{i\in S} y_i = 1}\leq \sum_{i\in S} p'_i.$$
\end{lemma}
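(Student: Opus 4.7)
The upper bound $\prob{\exists_{i\in S} y_i = 1} \leq \sum_{i\in S} p'_i$ is just the union bound, so the substance is in the two lower bounds, which I plan to derive from a single unified argument. The key ingredient is that the $y_i$'s are mutually independent (since the $x_i$'s are), which permits a clean first-occurrence decomposition of $\prob{\exists_{i\in S} y_i = 1}$.

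Specifically, fix any total order on $S$ (say, inherited from $\{1,\dots,n\}$). Writing $\prob{\exists_{i \in S} y_i = 1} = \sum_{i \in S} \prob{y_i=1 \text{ and } y_j=0 \text{ for all } j\in S, j<i}$ and using independence to decouple, I get $\prob{\exists_{i \in S} y_i = 1} = \sum_{i \in S} p'_i \cdot \prob{y_j = 0 \text{ for all } j \in S,\, j < i}$. Subtracting this from $\sum_{i\in S} p'_i$ and taking complements, the gap telescopes to
\begin{align*}
\sum_{i \in S} p'_i - \prob{\exists_{i \in S} y_i = 1}
  &= \sum_{i \in S} p'_i \cdot \prob{\exists j \in S,\, j < i,\, y_j = 1} \\
  &\leq \Big(\sum_{i \in S} p'_i\Big) \cdot \prob{\exists_{j \in S} y_j = 1},
\end{align*}
so $\prob{\exists_{i \in S} y_i = 1} \geq \big(\sum_{i \in S} p'_i\big)\big(1 - \prob{\exists_{j \in S} y_j = 1}\big)$. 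It then remains to bound the factor $\prob{\exists_{j \in S} y_j = 1}$ in two different ways.

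The first bound is the trivial union bound $\prob{\exists_{j \in S} y_j = 1} \leq \sum_{j \in S} p'_j$, which yields the first half of the claimed lower bound, $\big(1 - \sum_{i \in S} p'_i\big)\sum_{i \in S} p'_i$. The second bound uses monotonicity, $\prob{\exists_{j \in S} y_j = 1} \leq \prob{\exists_{j \in \{1,\dots,n\}} y_j = 1}$, and invokes Lemma~\ref{lm:iid:allyprob} to bound the latter by $\rho/(1-\lambda)$; this yields the second half, $\big(1 - \rho/(1-\lambda)\big)\sum_{i \in S} p'_i$. Taking the maximum gives the claim.

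I do not anticipate any real obstacle beyond keeping the book-keeping straight: the only substantive move is observing that the multiplicative ``slack'' $\prob{\exists_{j \in S} y_j = 1}$ arising from the first-occurrence decomposition can be controlled either by the union bound on $S$ itself (useful when $\sum p'_i$ is small) or, via monotonicity, by the already-established global bound from Lemma~\ref{lm:iid:allyprob} (useful when $\sum p'_i$ exceeds $\rho/(1-\lambda)$). Both estimates are needed because each is tighter than the other in a different regime of $\sum_{i\in S} p'_i$.
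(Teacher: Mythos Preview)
Your proof is correct. Both your argument and the paper's arrive at the same pivotal inequality
\[
\prob{\exists_{i\in S} y_i = 1} \;\geq\; \Big(1 - \prob{\exists_{j\in S} y_j = 1}\Big)\sum_{i\in S} p'_i,
\]
and then bound the slack factor $\prob{\exists_{j\in S} y_j = 1}$ in the same two ways (union bound on $S$, and monotonicity plus Lemma~\ref{lm:iid:allyprob}). The difference lies only in how that pivotal inequality is reached. The paper starts from $\prob{\exists_{i\in S} y_i=1} = 1 - \prod_{i\in S}(1-p'_i) \geq 1 - \exp\big(-\sum_{i\in S} p'_i\big)$, inverts to $\sum p'_i \leq \log\frac{1}{1-\prob{\exists y_i=1}}$, and then applies $\log \xi \leq \xi - 1$ to obtain $\sum p'_i \leq \frac{\prob{\exists y_i=1}}{1-\prob{\exists y_i=1}}$, which rearranges to the displayed inequality. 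Your first-occurrence decomposition is a more combinatorial route that bypasses the exponential and logarithm estimates entirely; it is slightly more elementary and arguably more transparent, since it makes visible exactly where the loss $\sum_i p'_i\,\prob{\exists j<i,\, y_j=1}$ comes from. Neither approach dominates the other in generality here, as both rely only on independence of the $y_i$'s.
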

\begin{proof}
	We have
	\begin{align*}
	\prob{\exists_{i\in S} y_i = 1} &= 1- \prob{\forall_{i\in S} y_i = 0} \\
	&= 1- \Pi_{i\in S} (1-p'_i) \\
	&\geq 1-\exp\Big(-\sum_{i\in S} p'_i\Big).
	\end{align*}
	This implies that 
	\begin{align*}
	\sum_{i\in S} p'_i &\leq \log\Big( \frac{1}{1- \prob{\exists_{i\in S} y_i = 1}} \Big)\\
	 &\leq \frac{1}{1- \prob{\exists_{i\in S} y_i = 1}} -1 & \log(\xi)\leq \xi-1\\
	 &= \frac {\prob{\exists_{i\in S} y_i = 1}}{1-\prob{\exists_{i\in S} y_i = 1}}\\
	 &\leq \frac {\prob{\exists_{i\in S} y_i = 1}}{1-\prob{\exists_{i\in \{1,\dots,n\}} y_i = 1}}\\
	 &\leq \frac {\prob{\exists_{i\in S} y_i = 1}}{1-\frac{\rho}{1-\lambda}}.&\text{Using Lemma \ref{lm:iid:allyprob}}
	\end{align*}
	This implies 
	$$\big(1-\frac{\rho}{1-\lambda}\big)\sum_{i\in S} p'_i \leq \prob{\exists_{i\in S} y_i=1}.$$
	Similarly, we have 
	\begin{align*}
	\sum_{i\in S} p'_i &\leq \frac {\prob{\exists_{i\in S} y_i = 1}}{1-\prob{\exists_{i\in S} y_i = 1}}\\
	&\leq \frac {\prob{\exists_{i\in S} y_i = 1}}{1-\ex{\sum_{i\in S} Y_i}}\\
	&= \frac {\prob{\exists_{i\in S} y_i = 1}}{1-\sum_{i\in S} p'_i},
	\end{align*}
	which implies
	$$\big(1-\sum_{i\in S} p'_i\big)\sum_{i\in S} p'_i \leq \prob{\exists_{i\in S} y_i=1}.$$
	On the other hand we have
	\begin{align*}
	\prob{\exists_{i\in S} y_i = 1} \leq \ex{\sum_{i\in S} y_i} = \sum_{i\in S} p'_i.
	\end{align*}
\end{proof}


In Lemma \ref{lm:iid:sumBound} below we show the concentration of $\sum_{i \in S} \prob{ x_i \in [\tau^0,\tau^1]}$ for a set $S$ chosen uniformly at random without replacement. 
To prove Lemma \ref{lm:iid:sumBound} we use a variation of Massart's inequality for sampling without replacement \cite{van1996weak}. Then to apply Massart's bound to $\sum_{i \in S} \prob{ x_i \in [\tau^0,\tau^1]}$, we use Lemma \ref{lm:iid:oneyprob} to upper bound $\prob{ x_i \in [\tau^0,\tau^1]}$ and use Lemma \ref{lm:iid:allyprob} to lower bound $\ex{\sum_{i \in S} \prob{ x_i \in [\tau^0,\tau^1]}}$.


\begin{lemma}[Massart's inequality]\label{lm:iid:hf}
	Let $\Psi_1,\dots,\Psi_n$ be a set of $n$ numbers and let $\psi_1,\dots,\psi_{c}$ be a subset of $\Psi_1,\dots,\Psi_n$ drawn uniformly at random without replacement. We have 
	\begin{align*}
		\prob{\Big|\frac 1 c \sum_{i=1}^c \psi_i -  \bar{\Psi} \Big| \geq \gamma} \leq 2\exp\Big(-\frac{c^2 \gamma^2}{\sum_{i=1}^n (\Psi_i-\bar{\Psi})^2}\Big),
	\end{align*}
	where $\bar{\Psi}=\frac 1 n\sum_{i=1}^n \Psi_i $, and $n$ is assumed to be divisible by $c$.
\end{lemma}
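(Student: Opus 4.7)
The plan is to use the standard Chernoff moment-generating-function (MGF) argument, adapted to sampling without replacement via Hoeffding's 1963 convex-function reduction theorem. By symmetry of the two tails and a union bound, it suffices to prove the one-sided inequality $\Pr\bigl[\tfrac{1}{c}\sum_{i=1}^c \psi_i - \bar\Psi \geq \gamma\bigr] \leq \exp\bigl(-c^2\gamma^2/\sum_{j=1}^n(\Psi_j-\bar\Psi)^2\bigr)$ and then absorb the resulting factor of $2$. Applying Markov's inequality to the exponential moment, for any $t > 0$,
$$\Pr\left[\sum_{i=1}^c (\psi_i - \bar\Psi) \geq c\gamma\right] \leq e^{-tc\gamma}\cdot \mathbb{E}\left[\exp\left(t\sum_{i=1}^c (\psi_i - \bar\Psi)\right)\right].$$

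The next step is the key reduction. Since the map $(z_1,\dotsc,z_c)\mapsto \exp(t\sum_i z_i)$ is convex (as an affine function composed with the convex exponential), Hoeffding's theorem on sampling without replacement implies that its expectation under sampling without replacement is at most its expectation under sampling with replacement from the same population $\{\Psi_1,\dotsc,\Psi_n\}$. Letting $\psi'_1,\dotsc,\psi'_c$ be i.i.d.\ uniform on $\{\Psi_1,\dotsc,\Psi_n\}$, we obtain
$$\mathbb{E}\left[\exp\left(t\sum_{i=1}^c (\psi_i - \bar\Psi)\right)\right] \leq \left(\frac{1}{n}\sum_{j=1}^n e^{t(\Psi_j - \bar\Psi)}\right)^c.$$

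It then remains to bound the single-sample MGF by a Gaussian-type expression in the population variance $\sigma^2 := \frac{1}{n}\sum_j(\Psi_j - \bar\Psi)^2$. Because the values $\Psi_j - \bar\Psi$ are centered and have second moment $\sigma^2$, a Bennett/Bernstein-style estimate yields $\frac{1}{n}\sum_j e^{t(\Psi_j - \bar\Psi)} \leq \exp(\tfrac{1}{2}t^2\sigma^2)$ for $t$ in an appropriate range. Substituting back and optimizing over $t$ gives an exponent of order $-c\gamma^2/\sigma^2$, which upon rewriting $n\sigma^2 = \sum_j(\Psi_j - \bar\Psi)^2$ recovers the claimed form (up to tuning of $t$ and absolute constants, which can be absorbed by a slight reparameterization).

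The main obstacle is handling the single-sample MGF bound cleanly: the tight variance-based bounds require either boundedness of $\Psi_j - \bar\Psi$ in a known range (which gives a sub-Gaussian MGF via Hoeffding's lemma) or a Bennett-type argument that depends only on the second moment. Since the lemma is classical and commonly attributed to Massart (see \cite{van1996weak} and the survey \cite{bardanet2015concentration} for sampling without replacement), an alternative to rederiving it from scratch is simply to invoke the existing statement and check that its hypotheses apply here (the $\Psi_j$ are deterministic real numbers, with the sample drawn uniformly without replacement).
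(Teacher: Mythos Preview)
The paper does not prove this lemma; it is stated as a known result (Massart's inequality) with a citation to~\cite{van1996weak}. Your closing remark---that one may simply invoke the existing statement after checking the hypotheses---is exactly what the paper does, and is the appropriate disposition here.

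Your attempted derivation, by contrast, has a real gap at the single-sample MGF step. The Hoeffding convexity reduction from without-replacement to with-replacement sampling is correct, but the subsequent claim that ``a Bennett/Bernstein-style estimate yields $\tfrac{1}{n}\sum_j e^{t(\Psi_j-\bar\Psi)}\le \exp(\tfrac12 t^2\sigma^2)$'' is false for general populations. A centered bounded variable with variance $\sigma^2$ need not be sub-Gaussian with parameter $\sigma^2$: whenever the population third moment is nonzero, the MGF already exceeds $\exp(\tfrac12 t^2\sigma^2)$ at third order in $t$, and Bennett's bound $\exp\bigl(\sigma^2(e^{tM}-1-tM)/M^2\bigr)$ does not collapse to that form either. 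The only universally valid sub-Gaussian parameter is Hoeffding's $(b-a)^2/4$; bounding this crudely by $\tfrac12\sum_j(\Psi_j-\bar\Psi)^2$ and carrying the Chernoff argument through yields an exponent $-c\gamma^2/\sum_j(\Psi_j-\bar\Psi)^2$, not the stated $-c^2\gamma^2/\sum_j(\Psi_j-\bar\Psi)^2$. That discrepancy is a full factor of $c$, not an absolute constant, so it cannot be absorbed by ``tuning of $t$'' or ``slight reparameterization.'' Obtaining the extra factor of $c$ genuinely requires exploiting the without-replacement structure beyond what the i.i.d.\ reduction retains, which is why the result is cited rather than rederived.
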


Now we are ready to prove Lemma \ref{lm:iid:sumBound}.


\begin{lemma}\label{lm:iid:sumBound}
	Consider arbitrary numbers $\rho,\gamma,\eps,q \in (0,1)$, $\lambda \in [0, 1-\rho)$.
	Let $S$ be a set of size $qn$, chosen uniformly at random without replacement from $D_1,\dots,D_n$. 
	Let $\tau^0$ and $\tau^1$ be such that $\prob{\maxall \leq \tau^0}=1-(\lambda+\rho)$ and $\prob{\maxall \leq \tau^1}=1-\lambda$. 
	Let $y_i$ be a random binary variable that is $1$ if $\tau^0\leq x_i \leq \tau^1$ and $0$ otherwise. Let $p'_i= \prob{y_i = 1}$. Assuming the no $\eps$-superstars assumption, with probability $1-2\exp \Big(-  \frac{  \gamma^2 q^2 \rho (1-(\lambda+\rho))}{2\eps }\Big)$ we have 
	\begin{align*}
		(1-\gamma)q \sum_{i=1}^n p'_i \leq \sum_{i\in S} p'_i \leq (1+\gamma)q \sum_{i=1}^n p'_i 
	\end{align*}
\end{lemma}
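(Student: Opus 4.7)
The plan is to apply Massart's inequality (Lemma \ref{lm:iid:hf}) to the deterministic sequence $\Psi_i := p'_i$ with sample size $c = qn$, and then bound the resulting exponent using the no-superstars assumption together with Lemmas \ref{lm:iid:allyprob} and \ref{lm:iid:oneyprob}.

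First, observe that the desired conclusion $(1-\gamma)q\sum_{i=1}^n p'_i \leq \sum_{i\in S} p'_i \leq (1+\gamma)q\sum_{i=1}^n p'_i$ is equivalent, after dividing by $c = qn$, to
\[
\left|\frac{1}{c}\sum_{i\in S} p'_i - \bar{p'}\right| \leq \gamma\,\bar{p'},
\]
where $\bar{p'} = \tfrac{1}{n}\sum_{i=1}^n p'_i$. So I invoke Lemma \ref{lm:iid:hf} with deviation parameter $\gamma\,\bar{p'}$, which gives failure probability at most $2\exp\!\left(-\frac{c^2 \gamma^2 \bar{p'}^2}{\sum_{i=1}^n (p'_i - \bar{p'})^2}\right)$. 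The entire proof reduces to showing the exponent is at least $\frac{\gamma^2 q^2 \rho(1-(\lambda+\rho))}{2\eps}$.

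For the numerator, substituting $c = qn$ gives $c^2 \gamma^2 \bar{p'}^2 = q^2 \gamma^2 (\sum_i p'_i)^2$. For the denominator, I bound crudely $\sum_i (p'_i - \bar{p'})^2 \leq \sum_i (p'_i)^2 \leq (\max_i p'_i)\sum_i p'_i$, and then apply Lemma \ref{lm:iid:oneyprob}, which under the no $\eps$-superstars assumption yields $\max_i p'_i \leq \frac{\eps}{1-(\lambda+\rho)}$. Putting these together, the exponent is at least
\[
\frac{q^2 \gamma^2 (\sum_i p'_i)^2}{\frac{\eps}{1-(\lambda+\rho)}\sum_i p'_i} \;=\; \frac{q^2 \gamma^2 (1-(\lambda+\rho))}{\eps}\sum_{i=1}^n p'_i.
\]

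It remains to lower bound $\sum_i p'_i$ by $\rho$; then a final factor of $2$ absorbs the constant in the denominator of the claimed bound. This follows by combining Lemma \ref{lm:iid:allyprob}, which says $\Pr[\exists_i\, y_i = 1] \geq \rho$, with the upper bound of Lemma \ref{lm:iid:ES} applied with $S = \{1,\dots,n\}$, which gives $\Pr[\exists_i\, y_i = 1] \leq \sum_i p'_i$. Thus $\sum_i p'_i \geq \rho$, and the exponent is at least $\frac{q^2 \gamma^2 \rho(1-(\lambda+\rho))}{\eps} \geq \frac{q^2 \gamma^2 \rho(1-(\lambda+\rho))}{2\eps}$, as required. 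The only subtle step is recognizing that the no-superstars assumption is precisely what is needed to control $\max_i p'_i$, giving a uniform smallness of individual contributions that is essential for Massart-type concentration under sampling without replacement; the other ingredients are routine bookkeeping.
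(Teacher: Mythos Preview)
Your proposal is correct and follows essentially the same approach as the paper: apply Massart's inequality with deviation $\gamma\bar{p'}$, bound the variance term using $\max_i p'_i \leq \eps/(1-(\lambda+\rho))$ from Lemma~\ref{lm:iid:oneyprob}, and lower bound $\sum_i p'_i \geq \rho$ via Lemmas~\ref{lm:iid:allyprob} and~\ref{lm:iid:ES}. The only cosmetic difference is where the factor of $2$ is absorbed: you use the sharp identity $\sum_i (p'_i-\bar{p'})^2 \leq \sum_i (p'_i)^2$ and discard a factor of $2$ at the end, whereas the paper uses the looser bound $\sum_i (p'_i-\bar{p'})^2 \leq 2\sum_i (p'_i)^2$ up front; both arrive at the same exponent.
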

\begin{proof}
	Let $z_i$ be a random variable that is $1$ when $i\in S$ and $0$ otherwise. We have
	\begin{align*}
	\sum_{i=0}^n p'_i 
	&\geq \prob{\exists_{i\in S} y_i = 1} &\text{By Lemma \ref{lm:iid:ES}}\\
	&\geq \rho. &\text{By Lemma \ref{lm:iid:allyprob}}
	\end{align*}
	Moreover, by Lemma \ref{lm:iid:oneyprob} we have $0 \leq p'_i \leq \frac{\eps}{1-(\lambda+\rho)}$. Thus,
	\begin{align*}
	&\prob{\Big| \sum_{i\in S} p'_i - q \sum_{i=1}^n p'_i \Big|\geq  \gamma q\sum_{i=1}^n p'_i} =\\
	&\prob{\Big| \frac 1 {qn} \sum_{i\in S} p'_i - \frac 1 n \sum_{i=1}^n p'_i \Big|\geq  \gamma \frac 1 n \sum_{i=1}^n p'_i} = &\text{Multiply both sides by $\frac 1 {qn}$}\\
	& 2\exp \Big(- \frac{(qn)^2\big( \gamma\frac 1 n  \sum_{i=1}^n p'_i\big)^2}{\sum_{i=1}^n \big(p'_i-\frac 1 n  \sum_{i=1}^n p'_i\big)^2} \Big) = &\text{Massart bound}\\
	& 2\exp \Big(- q^2 \gamma^2\frac{ \big( \sum_{i=1}^n p'_i\big)^2}{\sum_{i=1}^n \big(p'_i-\frac 1 n  \sum_{i=1}^n p'_i\big)^2} \Big) \leq \\
	& 2\exp \Big(- q^2 \gamma^2\frac{ \big( \sum_{i=1}^n p'_i\big)^2}{\sum_{i=1}^n {p'_i}^2+ \sum_{i=1}^n \big(\frac 1 n  \sum_{i=1}^n p'_i\big)^2} \Big) \leq \\
	& 2\exp \Big(- q^2 \gamma^2\frac{ \big( \sum_{i=1}^n p'_i\big)^2}{2\sum_{i=1}^n {p'_i}^2} \Big) \leq \\
	& 2\exp \Big(- q^2 \gamma^2\frac{ \big( \sum_{i=1}^n p'_i\big)^2}{ 2\frac{\eps}{1-(\lambda+\rho)} \sum_{i=1}^n p'_i} \Big) = & p'_i \leq \frac{\eps}{1-(\lambda+\rho)}\\
	& 2\exp \Big(- \frac{ q^2 \gamma^2(1-(\lambda+\rho))}{2\eps} \sum_{i=1}^n p'_i  \Big)\leq \\
& 2\exp \Big(-  \frac{  \gamma^2 q^2 \rho (1-(\lambda+\rho))}{2\eps }\Big) & \sum_{i=1}^n p'_i \geq {\rho}\\
\end{align*}
	
\end{proof}


Next, we use Lemma \ref{lm:iid:sumBound} together with Lemma \ref{lm:iid:ES} to show the concentration of $\prob{\exists_{i \in S} x_i \in [\tau^0,\tau^1]}$ for a set $S$ chosen uniformly at random without replacement.

\begin{lemma}\label{lm:iid:main}
	Consider arbitrary numbers $\rho,\gamma,\eps,q \in (0,1)$, $\lambda \in [0, 1-\rho)$.
	Let $S$ be a set of size $qn$, chosen uniformly at random without replacement from $D_1,\dots,D_n$. 
	Let $\tau^0$ and $\tau^1$ be such that $\prob{\maxall \leq \tau^0}=1-(\lambda+\rho)$ and $\prob{\maxall \leq \tau^1}=1-\lambda$. 
	Let $y_i$ be a random binary variable that is $1$ if $\tau^0\leq x_i \leq \tau^1$ and $0$ otherwise. Let $p'_i= \prob{y_i = 1}$. Assuming the no $\eps$-superstars assumption, with probability $1-2\exp \Big(-  \frac{  \gamma^2 q^2 \rho (1-(\lambda+\rho))}{2\eps }\Big)$ we have 
	\begin{align*}
	\big(1-\gamma-\frac {2q\rho}{1-(\lambda+\rho)}\big)q \sum_{i=1}^n p'_i \leq \prob{\exists_{i\in S} y_i = 1} \leq (1+\gamma)q \sum_{i=1}^n p'_i 
	\end{align*}
\end{lemma}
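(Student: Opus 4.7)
\textbf{Proof proposal for Lemma \ref{lm:iid:main}.} The plan is to combine the concentration result of Lemma \ref{lm:iid:sumBound} with the two-sided bound of Lemma \ref{lm:iid:ES}, deriving an auxiliary bound on $\sum_{i=1}^n p'_i$ along the way. Throughout, I condition on the high-probability event of Lemma \ref{lm:iid:sumBound}, which holds with probability at least $1-2\exp\!\bigl(-\frac{\gamma^2 q^2 \rho(1-(\lambda+\rho))}{2\eps}\bigr)$ and gives $(1-\gamma)q\sum_{i=1}^n p'_i \le \sum_{i\in S}p'_i \le (1+\gamma)q\sum_{i=1}^n p'_i$.

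The upper bound is immediate: by the union-bound half of Lemma \ref{lm:iid:ES}, $\prob{\exists_{i\in S}y_i=1}\le \sum_{i\in S}p'_i$, and the concentration result then yields $\prob{\exists_{i\in S}y_i=1}\le (1+\gamma)q\sum_{i=1}^n p'_i$, as required.

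For the lower bound I will use the other half of Lemma \ref{lm:iid:ES}, namely $\prob{\exists_{i\in S}y_i=1}\ge (1-\sum_{i\in S}p'_i)\sum_{i\in S}p'_i$. Substituting $\sum_{i\in S}p'_i\ge (1-\gamma)q\sum_{i=1}^n p'_i$ gives
\[
\prob{\exists_{i\in S}y_i=1}\ \ge\ \bigl(1-\textstyle\sum_{i\in S}p'_i\bigr)(1-\gamma)q\sum_{i=1}^n p'_i,
\]
so it remains to argue that $\sum_{i\in S}p'_i\le \tfrac{2q\rho}{1-(\lambda+\rho)}$. The key intermediate step is the bound $\sum_{i=1}^n p'_i \le \tfrac{\rho}{1-(\lambda+\rho)}$: applying the lower bound in Lemma \ref{lm:iid:ES} to the full index set $\{1,\dotsc,n\}$ and combining with $\prob{\exists_i y_i=1}\le \tfrac{\rho}{1-\lambda}$ from Lemma \ref{lm:iid:allyprob} gives $(1-\tfrac{\rho}{1-\lambda})\sum_i p'_i \le \tfrac{\rho}{1-\lambda}$, which rearranges to the claimed bound. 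Then the concentration upper bound yields $\sum_{i\in S}p'_i \le (1+\gamma)q\sum_{i=1}^n p'_i \le \tfrac{2q\rho}{1-(\lambda+\rho)}$ (using $\gamma<1$). Plugging this back in and using $(1-a)(1-b)\ge 1-a-b$ completes the lower bound.

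The only nontrivial step is the derivation of $\sum_{i=1}^n p'_i \le \tfrac{\rho}{1-(\lambda+\rho)}$, which is really just algebra from the two auxiliary lemmas; the rest is bookkeeping. I do not anticipate a genuine obstacle here, since all the concentration work was done in Lemma \ref{lm:iid:sumBound} and the structural first-moment bounds were done in Lemmas \ref{lm:iid:allyprob} and \ref{lm:iid:ES}.
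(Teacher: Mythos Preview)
Your proposal is correct and follows essentially the same route as the paper: condition on the event of Lemma~\ref{lm:iid:sumBound}, get the upper bound from the union-bound half of Lemma~\ref{lm:iid:ES}, and for the lower bound use the $(1-\sum_{i\in S}p'_i)$ half of Lemma~\ref{lm:iid:ES}, bounding $\sum_{i\in S}p'_i$ via the concentration upper bound together with $\sum_{i=1}^n p'_i\le \rho/(1-(\lambda+\rho))$ obtained from Lemmas~\ref{lm:iid:ES} and~\ref{lm:iid:allyprob} applied to the full index set. The paper's presentation interleaves these steps slightly differently (and has a minor notational slip, writing $\prob{\exists_{i\in S} y_i=1}$ where the full index set is meant), but the argument and the bounds used are the same as yours.
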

\begin{proof}
	With probability $1-2\exp \Big(-  \frac{ \gamma^2 q^2 \rho (1-(\lambda+\rho))}{2\eps }\Big)$ Lemma \ref{lm:iid:sumBound} holds and we have
	\begin{align*}
	\prob{\exists_{i\in S} y_i = 1} &\leq \sum_{i\in S} p'_i &\text{By Lemma \ref{lm:iid:ES}}\\
	&\leq (1+\gamma)q \sum_{i=1}^n p'_i. &\text{By Lemma \ref{lm:iid:sumBound}}
	\end{align*}
	Moreover, we have
	\begin{align*}
	\prob{\exists_{i\in S} y_i = 1} &\geq \big(1-\sum_{i\in S} p'_i\big)\sum_{i\in S} p'_i &\text{By Lemma \ref{lm:iid:ES}}\\
	&\geq \big(1-\sum_{i\in S} p'_i\big)(1-\gamma)q\sum_{i=1}^n p'_i &\text{By Lemma \ref{lm:iid:sumBound}}\\
	&\geq \big(1-(1+\gamma)q\sum_{i=1}^n p'_i\big)(1-\gamma)q\sum_{i=1}^n p'_i &\text{By Lemma \ref{lm:iid:sumBound}}\\
	&\geq \big(1-(1+\gamma)q\frac{1-\lambda}{1-(\lambda+\rho)}\prob{\exists_{i\in S} y_i = 1}\big)(1-\gamma)q\sum_{i=1}^n p'_i &\text{By Lemma \ref{lm:iid:ES}}\\
	&\geq \big(1-2q\frac{1-\lambda}{1-(\lambda+\rho)}\prob{\exists_{i\in S} y_i = 1}\big)(1-\gamma)q\sum_{i=1}^n p'_i\\ 
	&\geq \big(1-2q\frac{1-\lambda}{1-(\lambda+\rho)}\frac{\rho}{1-\lambda}\big)(1-\gamma)q\sum_{i=1}^n p'_i &\text{By Lemma \ref{lm:iid:allyprob}} \\
	&\geq \big(1-\frac{2q\rho}{1-(\lambda+\rho)}\big)(1-\gamma)q\sum_{i=1}^n p'_i\\
	&\geq \big(1-\gamma-\frac{2q\rho}{1-(\lambda+\rho)}\big)q\sum_{i=1}^n p'_i.
	\end{align*}
	
\end{proof}


The following corollary is a simplified (and restricted) variation of Lemma \ref{lm:iid:main}.

\begin{corollary}\label{cr:iid:main}
	Consider arbitrary numbers $\rho,\lambda_0,\gamma,\delta \in (0,1)$, $\lambda \in [0,1-(\lambda_0+\rho)]$ and $q \in \big(0,\min\big(\frac{\gamma\lambda_0}{2\rho},1\big)\big)$. 
	Set $\eps = \frac{  \gamma^2 q^2 \rho \lambda_0 }{2\log \frac {2} {\delta}}$.
	Let $S$ be a set of size $qn$, chosen uniformly at random without replacement from $D_1,\dots,D_n$. 
	Let $\tau^0$ and $\tau^1$ be such that $\prob{\maxall \leq \tau^0}=1-(\lambda+\rho)$ and $\prob{\maxall \leq \tau^1}=1-\lambda$. 
	Let $y_i$ be a random binary variable that is $1$ if $\tau^0\leq x_i \leq \tau^1$ and $0$ otherwise. Let $p'_i= \prob{y_i = 1}$. Assuming the no $\eps$-superstars assumption, with probability $1-\delta$ we have 
	\begin{align*}
	(1-2\gamma)q \sum_{i=1}^n p'_i \leq \prob{\exists_{i\in S} y_i = 1} \leq (1+\gamma)q \sum_{i=1}^n p'_i 
	\end{align*}
\end{corollary}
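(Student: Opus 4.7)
The plan is to derive Corollary \ref{cr:iid:main} as a direct simplification of Lemma \ref{lm:iid:main} by verifying three things: the failure probability $\delta$, the lower-bound coefficient $(1-2\gamma)$, and the upper-bound coefficient $(1+\gamma)$. The upper-bound is identical in both statements, so no work is required there. The other two reductions follow from the added hypotheses $\lambda \in [0,1-(\lambda_0+\rho)]$ and $q \leq \gamma\lambda_0/(2\rho)$, which were introduced precisely to absorb the ugly factors $1-(\lambda+\rho)$ appearing in Lemma \ref{lm:iid:main}.

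First, I would handle the failure probability. The restriction $\lambda \leq 1-(\lambda_0+\rho)$ gives $1-(\lambda+\rho) \geq \lambda_0$. Therefore, under the choice $\eps = \gamma^2 q^2 \rho \lambda_0/(2\log(2/\delta))$,
\[
2\exp\!\Big(-\frac{\gamma^2 q^2 \rho (1-(\lambda+\rho))}{2\eps}\Big) \;\leq\; 2\exp\!\Big(-\frac{\gamma^2 q^2 \rho \lambda_0}{2\eps}\Big) \;=\; 2\exp\!\big(-\log(2/\delta)\big) \;=\; \delta,
\]
so Lemma \ref{lm:iid:main} applies with failure probability at most $\delta$, matching the corollary.

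Next, I would handle the lower-bound coefficient. Using $1-(\lambda+\rho) \geq \lambda_0$ and the assumption $q \leq \gamma\lambda_0/(2\rho)$, we get
\[
\frac{2 q \rho}{1-(\lambda+\rho)} \;\leq\; \frac{2 q \rho}{\lambda_0} \;\leq\; \gamma,
\]
so that
\[
1 - \gamma - \frac{2 q \rho}{1-(\lambda+\rho)} \;\geq\; 1 - \gamma - \gamma \;=\; 1 - 2\gamma.
\]
Substituting into the lower bound of Lemma \ref{lm:iid:main} yields
\[
\prob{\exists_{i\in S} y_i = 1} \;\geq\; (1-2\gamma) q \sum_{i=1}^n p'_i,
\]
which is exactly the lower bound claimed in the corollary.

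There is no real obstacle here: the corollary is purely a bookkeeping specialization of Lemma \ref{lm:iid:main}, designed so that the parameter choices $\eps = \gamma^2 q^2 \rho \lambda_0 / (2\log(2/\delta))$ and $q \leq \gamma\lambda_0/(2\rho)$ simultaneously make the exponent in the concentration bound at least $\log(2/\delta)$ and make the ``slack'' term $2q\rho/(1-(\lambda+\rho))$ at most $\gamma$. The only thing worth double-checking is that the no $\eps$-superstars assumption with this specific $\eps$ is exactly what Lemma \ref{lm:iid:main} needs, which it is by construction.
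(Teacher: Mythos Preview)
Your proposal is correct and matches the paper's own proof essentially line for line: both use $\lambda \leq 1-(\lambda_0+\rho)$ to get $1-(\lambda+\rho)\geq\lambda_0$, then plug this into the failure probability and into the slack term $2q\rho/(1-(\lambda+\rho))$ together with $q\leq\gamma\lambda_0/(2\rho)$ to reduce Lemma~\ref{lm:iid:main} to the stated form.
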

\begin{proof}
	Note that Lemma \ref{lm:iid:main} holds with probability
	\begin{align}\label{eq:iid:delta}
		1-{2}\exp \Big(-  \frac{  \gamma^2 q^2 \rho (1-(\lambda+\rho))}{2\eps  }\Big) &\geq
		1-{2}\exp \Big(-  \frac{  \gamma^2 q^2 \rho \lambda_0}{2\eps  }\Big)\nonumber\\
		&=
		1-{2}\exp \Bigg(-  \frac{  \gamma^2 q^2 \rho \lambda_0}{2\big( \frac{  \gamma^2 q^2 \rho \lambda_0 }{2\log \frac {2} {\delta}}  \big)}\Bigg)\nonumber\\
		& = 1- {2}\exp\big(-\log \frac {2} {\delta}\big) \nonumber\\
		&= 1-\delta.
	\end{align}
	Note that Lemma \ref{lm:iid:main} directly gives us $\prob{\exists_{i\in S} y_i = 1} \leq (1+\gamma)q \sum_{i=1}^n p'_i $. Moreover, we have
	\begin{align*}
	\prob{\exists_{i\in S} y_i = 1} &\geq \big(1-\gamma-\frac {2q\rho}{1-(\lambda+\rho)}\big)q \sum_{i=1}^n p'_i &\text{Lemma \ref{lm:iid:main}}\\
	&\geq \big(1-\gamma-\frac {2q\rho}{\lambda_0}\big)q \sum_{i=1}^n p'_i \\	
	&\geq \big(1-\gamma-\frac {2\frac{\gamma\lambda_0}{2\rho}\rho}{\lambda_0}\big)q \sum_{i=1}^n p'_i \\
	&= (1-2\gamma)q \sum_{i=1}^n p'_i.	
	\end{align*} 
\end{proof}


We can now prove Lemma~\ref{lm:Ralg:DS}.  We will restate it as Lemma~\ref{lm:Ralg:DS:proof} below for convenience. Recall that for the purpose of this lemma for some arbitrary $\gamma\in (0,1)$ we set $\lambda_0=\gamma$, $\rho=\gamma^3$, $q=\frac{\gamma^2}{2}$, and $\delta=\frac{\gamma^6}{4}$.

\begin{lemma}\label{lm:Ralg:DS:proof}
	Let $S$ be a set of size $qn$, chosen uniformly at random without replacement from $D_1,\dots,D_n$. 
	With probability $1-\frac{\gamma^3} 2$ for all $\i \in \{0,\dots,c-1\}$ we have
	\begin{align*}
		(1-3\gamma)q \sum_{i=1}^n p^{\i}_i \leq \prob{ t_{\i} \leq x_S < t_{\i+1}} \leq (1+\gamma)q \sum_{i=1}^n p^{\i}_i,
	\end{align*}
	where $p^{\i}_i = \prob{t_{\i} \leq x_i < t_{\i+1}}$, assuming the no $\eps$-superstars assumption with $\eps =\frac{\gamma^{10}}{24\log(\frac 2 {\gamma^2})}$.
\end{lemma}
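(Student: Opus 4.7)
The plan is to apply Corollary~\ref{cr:iid:main} slab by slab to obtain concentration of $\prob{\exists i\in S:\ t_{\i}\leq x_i<t_{\i+1}\mid S}$ around $q\sum_i p_i^{\i}$, and then to pay a small additional error to pass from this ``exists in slab'' quantity to the ``max in slab'' quantity $\prob{t_{\i}\leq x_S<t_{\i+1}\mid S}$. For each $\i\in\{0,\dots,c-1\}$ we invoke Corollary~\ref{cr:iid:main} with $\tau^0=t_{\i}$, $\tau^1=t_{\i+1}$, $\lambda=1-\lambda_0-(\i+1)\rho$, and the same $q,\gamma,\lambda_0,\delta$. The hypotheses are routine to verify: $\lambda\in[0,1-(\lambda_0+\rho)]$, $q=\gamma^2/2\leq 1/(2\gamma)=\gamma\lambda_0/(2\rho)$, and the prescribed $\eps$ matches since $\gamma^2 q^2 \rho\lambda_0/(2\log(2/\delta))=\gamma^{10}/(24\log(2/\gamma^2))$. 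The corollary yields, with probability at least $1-\delta$ over $S$, the two-sided bound $(1-2\gamma)q\sum_i p_i^{\i}\leq \prob{\exists i\in S:\ t_{\i}\leq x_i<t_{\i+1}\mid S}\leq (1+\gamma)q\sum_i p_i^{\i}$, and its proof incidentally delivers the concentration $\sum_{i\in S} p_i^{\i}\leq (1+\gamma)q\sum_i p_i^{\i}$ via Lemma~\ref{lm:iid:sumBound}.

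The upper direction of the lemma is then immediate from the inclusion $\{t_{\i}\leq x_S<t_{\i+1}\}\subseteq \{\exists i\in S:\ t_{\i}\leq x_i<t_{\i+1}\}$. For the lower direction, the two events differ exactly on the ``overshoot'' event $E_{\i}$ that some $x_i$ ($i\in S$) lies in $[t_{\i},t_{\i+1})$ while some other $x_j$ ($j\in S$) is at least $t_{\i+1}$. Independence of the $x$'s and a union bound over ordered pairs give $\prob{E_{\i}\mid S}\leq \bigl(\sum_{i\in S} p_i^{\i}\bigr)\bigl(\sum_{j\in S} r_j^{\i}\bigr)$, where $r_j^{\i}:=\prob{x_j\geq t_{\i+1}}$. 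A second application of Lemma~\ref{lm:iid:sumBound} with $\tau^0=t_{\i+1}$ and $\tau^1=t_c$ (needed only for $\i\leq c-2$, since $r_j^{c-1}\equiv 0$) delivers $\sum_{j\in S} r_j^{\i}\leq (1+\gamma) q\sum_j r_j^{\i}$ with failure probability at most $\delta$, while the elementary estimate $\sum_j r_j^{\i}\leq -\log\prob{\maxall\leq t_{\i+1}}\leq \log(1/\lambda_0)=\log(1/\gamma)$ (using $-\log F_j(t)\geq 1-F_j(t)$ and the product form of the cdf of the max) bounds the total sum deterministically.

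Combining the two concentration bounds gives $\prob{E_{\i}\mid S}\leq (1+\gamma)^2 q^2\log(1/\gamma)\sum_i p_i^{\i}\leq \gamma\cdot q\sum_i p_i^{\i}$, since $4q\log(1/\gamma)=2\gamma^2\log(1/\gamma)\leq \gamma$ (using $\gamma\log(1/\gamma)\leq 1/e$). Subtracting this overshoot correction from the corollary's lower bound yields $(1-3\gamma) q\sum_i p_i^{\i}$, completing the per-slab argument. A union bound over the at most $2c-1$ high-probability events---one corollary invocation per slab and one sum-concentration invocation per slab for $\i\leq c-2$---costs at most $2c\delta=(1-\gamma)\gamma^3/2\leq \gamma^3/2$, matching the failure bound in the lemma.

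The main obstacle is the overshoot control. Since $\sum_j r_j^{\i}$ can be as large as $\log(1/\lambda_0)$, the estimate succeeds only because $q$ is small enough that $q\log(1/\gamma)=O(\gamma)$. The parameter choices $\rho=\gamma^3$, $q=\gamma^2/2$, $\lambda_0=\gamma$, and $\delta=\gamma^6/4$ are engineered precisely to make this balance work while simultaneously keeping the corollary's hypothesis $q\leq\gamma\lambda_0/(2\rho)$ satisfied (in fact, tightly, since $q=\gamma^2/2$ and $\gamma\lambda_0/(2\rho)=1/(2\gamma)$) and keeping the total union-bound cost below $\gamma^3/2$.
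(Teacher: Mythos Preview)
Your proof is correct and follows essentially the same strategy as the paper: invoke Corollary~\ref{cr:iid:main} slab by slab, get the upper bound by inclusion, handle the lower bound by subtracting an ``overshoot'' term, and union-bound over the $\Theta(c)$ concentration events to land at failure probability $\gamma^3/2$.

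The only real difference is in how the overshoot is controlled. The paper invokes Lemma~\ref{lm:iid:column} directly, obtaining $\prob{\exists\, i\in S:\ x_i\geq t_{\i+1}\mid S}\leq 2q/\lambda_0=\gamma$, and then uses the multiplicative decomposition
\[
\prob{t_{\i}\leq x_S<t_{\i+1}\mid S}\ \geq\ \bigl(1-\prob{\exists\, i\in S:\ x_i\geq t_{\i+1}\mid S}\bigr)\cdot\prob{\exists\, i\in S:\ t_{\i}\leq x_i<t_{\i+1}\mid S}
\]
(which holds by positive correlation of the two events), yielding $(1-\gamma)(1-2\gamma)\geq 1-3\gamma$. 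You instead bound $\prob{E_{\i}\mid S}$ additively via the ordered-pair union bound and then concentrate $\sum_{j\in S} r_j^{\i}$ with a second call to Lemma~\ref{lm:iid:sumBound}, combined with the elementary estimate $\sum_j r_j^{\i}\leq \log(1/\lambda_0)$. Your route is slightly more work (and requires checking that the failure probability of this second Lemma~\ref{lm:iid:sumBound} call is still at most $\delta$, which it is since the relevant slab width $\rho'=1-\lambda_0-(\i+1)\rho\geq\rho$), but the bookkeeping comes out the same, and your log bound on $\sum_j r_j^{\i}$ is in fact sharper than the $1/\lambda_0$ bound implicit in Lemma~\ref{lm:iid:column}.
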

\begin{proof}
	Note that by Corollary \ref{cr:iid:main}, for a fixed $\i \in \{0,\dots,c-1\}$ with probability $1-\delta = 1- \frac{\gamma^6}{4}$ we have
	\begin{align}\label{eq:Ralg:t1t2}
			(1-2\gamma)q \sum_{i=1}^n p^{\i}_i \leq \prob{ \exists_{i\in S} t_{\i} \leq x_i < t_{\i+1}} \leq (1+\gamma)q \sum_{i=1}^n p^{\i}_i.
	\end{align}
	By the union bound, this holds for all $\i \in \{0,\dots,c-1\}$ with probability at least
	\begin{align*}
		1- c \frac{\gamma^6}{4} = 1- \frac{1-\lambda_0}{\rho} \frac{\gamma^6}{4} \geq 1 - \frac{\gamma^6}{4 \rho} = 1- \frac{\gamma^3}{4}. 
	\end{align*}
	Similarly, using Lemma \ref{lm:iid:column} with probability at least $1- \frac{\gamma^3}{4}$ for all $\i \in \{0,\dots,c-1\}$ we have
	\begin{align}\label{eq:Ralg:t1}
		\prob{ \exists_{i\in S} t_{\i+1} \leq x_i } \leq \frac{2 q}{\lambda_0} = \gamma.
	\end{align}
	Next, we prove the statement of the lemma assuming that for all $\i \in \{0,\dots,c-1\}$ Inequalities \ref{eq:Ralg:t1t2} and \ref{eq:Ralg:t1} hold. First note that we have
	\begin{align*}
		\prob{ t_{\i} \leq x_S < t_{\i+1}} &\leq 
		\prob{ \exists_{i\in S} t_{\i} \leq x_i < t_{\i+1}} \\
		&\leq (1+\gamma)q \sum_{i=1}^n p^{\i}_i. &\text{By Inequality \ref{eq:Ralg:t1t2}}
	\end{align*}
	This proves the upper bound. On the other hand we have
	\begin{align*}
		\prob{ t_{\i} \leq x_S < t_{\i+1}} & \geq 
		 \prob{ \nexists_{i\in S}  x_i \geq t_{\i+1} } \times \prob{ \exists_{i\in S} t_{\i} \leq x_i < t_{\i+1}} \\
		&= \Big(1-\prob{ \exists_{i\in S}  x_i \geq t_{\i+1} } \Big) \times \prob{ \exists_{i\in S} t_{\i} \leq x_i < t_{\i+1}}  \\
		&\geq (1-\gamma) \prob{ \exists_{i\in S} t_{\i} \leq x_i < t_{\i+1}} &\text{By Inequality \ref{eq:Ralg:t1}}\\
		&\geq (1-\gamma) (1-2\gamma)q \sum_{i=1}^n p^{\i}_i &\text{By Inequality \ref{eq:Ralg:t1t2}}\\
		& \geq (1-3\gamma)q \sum_{i=1}^n p^{\i}_i.
	\end{align*}
\end{proof}


The following technical lemma will be useful for proving Lemma \ref{lm:iid:column} and Lemma \ref{lm:iid:row}.

\begin{lemma}\label{lm:iid:no2}
	Let $\chi_1,\dots,\chi_m$ be a sequence of independent binary random variables. We have
	\begin{align*}
	\prob{\sum_{i=1}^m \chi_i \geq 2} \leq \prob{\exists_{i} \chi_i=1}^2.
	\end{align*}
\end{lemma}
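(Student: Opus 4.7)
\begin{proofof}{Lemma \ref{lm:iid:no2} (proof plan)}
Let $A = \prob{\exists_i \chi_i = 1}$ and $B = \prob{\sum_{i=1}^m \chi_i \geq 2}$, and write $p_i = \prob{\chi_i = 1}$. The plan is to expose ``the first success'' and then use independence of the remaining coordinates to separate the bound into two copies of $A$.

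I would condition on the (random) index $I$ of the smallest $i$ with $\chi_i = 1$, with $I = \infty$ if no such index exists. On the event $\{\sum_i \chi_i \geq 2\}$, the index $I$ is finite and there exists some $j > I$ with $\chi_j = 1$, so one gets the decomposition
\begin{align*}
B \;=\; \sum_{k=1}^m \prob{I = k}\cdot \prob{\exists\, j > k : \chi_j = 1 \,\big|\, I = k}.
\end{align*}
Because the $\chi_i$ are independent, the event $\{I = k\} = \{\chi_1 = 0,\ldots,\chi_{k-1}=0,\chi_k=1\}$ depends only on $\chi_1,\ldots,\chi_k$, while $\{\exists\, j > k : \chi_j = 1\}$ depends only on $\chi_{k+1},\ldots,\chi_m$; so the conditional probability equals the unconditional one. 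Moreover,
\begin{align*}
\prob{\exists\, j > k : \chi_j = 1} \;=\; 1 - \prod_{j>k}(1-p_j) \;\leq\; 1 - \prod_{j=1}^m (1-p_j) \;=\; A.
\end{align*}

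Substituting this uniform bound and pulling $A$ out of the sum yields
\begin{align*}
B \;\leq\; A \sum_{k=1}^m \prob{I = k} \;=\; A\cdot \prob{I < \infty} \;=\; A \cdot A \;=\; A^2,
\end{align*}
which is the claim. The only subtlety is verifying the independence step cleanly (in particular that conditioning on $I = k$ does not change the distribution of $\chi_{k+1},\ldots,\chi_m$), and the key trick is replacing $1-\prod_{j>k}(1-p_j)$ by the looser $1-\prod_{j}(1-p_j) = A$ so that the sum over $k$ telescopes to $A$. I don't foresee a real obstacle; the argument is essentially a first-success coupling.
\end{proofof}
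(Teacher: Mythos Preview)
Your proposal is correct and is essentially the same argument as the paper's: both decompose by the first index $k$ with $\chi_k=1$ (your $I=k$ is precisely the paper's $\{\forall_{i<k}\,\chi_i=0,\ \chi_k=1\}$), use independence to factor off $\prob{\exists_{j>k}\,\chi_j=1}$, bound that by $A=\prob{\exists_i\,\chi_i=1}$, and then sum the first-success probabilities back to $A$. The only difference is packaging (you name the stopping index $I$ explicitly), not content.
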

\begin{proof}
	We have
	\begin{align*}
	\prob{\sum_{i=1}^m \chi_i \geq 2} &= \sum_{j=1}^m \Big( \prob{\forall_{i<j} \chi_i=0} \prob{\chi_j = 1} \prob{\sum_{i=j+1}^m \chi_i \geq 1}\Big)\\
	&\leq \sum_{j=1}^m \Big( \prob{\forall_{i<j} \chi_i=0} \prob{\chi_j = 1} \prob{\sum_{i=0}^m \chi_i \geq 1}\Big)\\
	&= \prob{\sum_{i=0}^m \chi_i \geq 1} \sum_{j=1}^m \Big( \prob{\forall_{i<j} \chi_i=0} \prob{\chi_j = 1} \Big)\\
	& = \prob{\sum_{i=0}^m \chi_i \geq 1}^2\\
	&= \prob{\exists_{i} \chi_i=1}^2.
	\end{align*}
\end{proof}


We can now prove Lemma~\ref{lm:iid:column}.
For a small set of indices $S$ chosen uniformly at random, we wish to upper bound the probability of observing at least two $x_i$'s with $i\in S$ above a threshold $\tau^0$. We declare this as a failure case in our algorithm in subsection \ref{subsec:reduc-alg}.  For convenience we restate as Lemma~\ref{lm:iid:column:proof} below.

\begin{lemma}\label{lm:iid:column:proof}
	Consider arbitrary numbers $\lambda_0,\gamma,\delta,q \in (0,1)$, $\rho \in (0,1-\lambda_0)$. 
	Set $\eps = \frac{  \gamma^2 q^2 \rho \lambda_0 }{2\log \frac {2} {\delta}}$.
	Let $S$ be a set of size $qn$, chosen uniformly at random without replacement from $D_1,\dots,D_n$. 
	Let $\tau^0$ be such that $\prob{\maxall \leq \tau^0}=1-\rho$. 
	Let $y_i$ be a random binary variable that is $1$ if $\tau^0\leq x_i$ and $0$ otherwise. Let $p'_i= \prob{y_i = 1}$. Assuming the no $\eps$-superstars assumption, with probability $1-\delta$ we have 
	\begin{align*}
		\prob{\exists_{i\in S}  y_i = 1}&\leq \frac{2 q}{\lambda_0}\\
			&\text{and}\\
		\prob{\sum_{i\in S} y_i \geq 2}  &\leq \frac{4 q^2}{\lambda_0^2}.
	\end{align*}
\end{lemma}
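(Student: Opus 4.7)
The plan is to derive this lemma as a direct specialization of Lemma~\ref{lm:iid:sumBound} to the semi-infinite range $[\tau^0,\infty)$: formally, I would take $\tau^1 = +\infty$, i.e.\ $\lambda = 0$ in the parameterization of Lemma~\ref{lm:iid:sumBound}. The hypothesis $\rho \leq 1-\lambda_0$ of Lemma~\ref{lm:iid:column:proof} gives $1-(\lambda+\rho) = 1-\rho \geq \lambda_0$, and the choice $\eps = \gamma^2 q^2 \rho \lambda_0/(2\log(2/\delta))$ already matches the one assumed in Lemma~\ref{lm:iid:sumBound}, so the exponent in its failure probability is at least $\log(2/\delta)$. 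Hence, with probability $\geq 1-\delta$ over the random set $S$, one has $\sum_{i\in S} p'_i \leq (1+\gamma) q \sum_{i=1}^n p'_i$. Call this event $\mathcal{G}$; the remainder of the argument conditions on $\mathcal{G}$.

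For the first bound, I need to control $\sum_{i=1}^n p'_i$. Since $\prod_{i=1}^n (1-p'_i) = \prob{\maxall < \tau^0} = 1-\rho$, taking logarithms and using $-\ln(1-p) \geq p$ yields $\sum_i p'_i \leq \ln(1/(1-\rho)) \leq \rho/(1-\rho)$. Using $\rho \leq 1-\lambda_0$ then gives $\rho/(1-\rho) \leq (1-\lambda_0)/\lambda_0 \leq 1/\lambda_0$. Combining with the union bound $\prob{\exists_{i\in S} y_i = 1 \mid S} \leq \sum_{i\in S} p'_i$ and the concentration from $\mathcal{G}$, I obtain
$$\prob{\exists_{i\in S} y_i = 1 \mid S} \;\leq\; (1+\gamma) q / \lambda_0 \;\leq\; 2q/\lambda_0,$$
which is the first claim.

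For the second bound, I would invoke Lemma~\ref{lm:iid:no2}. Conditional on $S$ the collection $\{y_i\}_{i\in S}$ is independent, because each $y_i$ depends only on the independent draw $x_i$. Lemma~\ref{lm:iid:no2} therefore gives $\prob{\sum_{i\in S} y_i \geq 2 \mid S} \leq \prob{\exists_{i\in S} y_i = 1 \mid S}^2$, and on $\mathcal{G}$ the right-hand side is at most $(2q/\lambda_0)^2 = 4q^2/\lambda_0^2$. This establishes the second bound simultaneously on the same event $\mathcal{G}$ of probability $\geq 1-\delta$.

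I do not expect a real obstacle: the argument is essentially bookkeeping to verify that Lemma~\ref{lm:iid:sumBound} applies cleanly at $\lambda=0$ with our parameter choices, followed by combining the resulting sampling concentration with the identity $\prod_i(1-p'_i) = 1-\rho$ and Lemma~\ref{lm:iid:no2}. The only mild subtlety is that the bound $2q/\lambda_0$ is only nontrivial when $q < \lambda_0/2$; for larger $q$ the statement is vacuous and the lemma holds trivially with probability one, so the argument above covers the interesting regime.
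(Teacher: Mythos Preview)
Your proposal is correct and follows essentially the same route as the paper: invoke Lemma~\ref{lm:iid:sumBound} at $\lambda=0$ to control $\sum_{i\in S}p'_i$, bound $\sum_{i=1}^n p'_i \le 1/\lambda_0$, apply the union bound, and finish with Lemma~\ref{lm:iid:no2}. The only minor variation is in how $\sum_{i=1}^n p'_i \le 1/\lambda_0$ is obtained: the paper uses Lemma~\ref{lm:iid:oneyprob} (bounding each $p'_i$ by $\prob{i=\argmax_j x_j}/\lambda_0$ and summing), whereas you extract it directly from the product identity $\prod_i(1-p'_i)=1-\rho$; both arguments are valid and yield the same bound.
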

\begin{proof}
	With probability $1-2\exp \Big(-  \frac{  \gamma^2 q^2 \rho (1-(\lambda+\rho))}{2\eps }\Big) \geq 1-\delta$ (see Inequality \ref{eq:iid:delta}), Lemma \ref{lm:iid:sumBound} holds and hence we have
	\begin{align*}
	\prob{\exists_{i\in S} y_i = 1}
	&\leq \sum_{i\in S} p'_i  &\text{By Lemma \ref{lm:iid:ES}}\\
	&\leq 2 q \sum_{i=1}^n p'_i  &\text{By Lemma \ref{lm:iid:sumBound} with $\gamma < 1$}\\
	&\leq 2 q \sum_{i=1}^n \frac{\prob{i = \argmax_{j=1}^n x_j}}{\lambda_0}  &\text{By Lemma \ref{lm:iid:oneyprob} with $\lambda=0, \rho = 1-\lambda_0$}\\
	&\leq \frac{2 q}{\lambda_0}. &\sum_{i=1}^n \prob{i = \argmax_{j=1}^n x_j} = 1
	\end{align*}
	and hence, we have
	\begin{align*}
		\prob{\sum_{i\in S} y_i \geq 2} &\leq
		\prob{\exists_{i\in S} y_i = 1}^2 &\text{By Lemma \ref{lm:iid:no2}}\\
		&\leq \frac{4 q^2}{\lambda_0^2}.
	\end{align*}
\end{proof}


Finally we will prove Lemma~\ref{lm:iid:row}.
We wish to upper bound the probability of observing at least two $x_i$'s within a narrow range $[\tau^0,\tau^1]$. We declare this as a failure case in our algorithm in subsection \ref{subsec:reduc-alg}.  For convenience we restate as Lemma~\ref{lm:iid:row:proof} below.

\begin{lemma}\label{lm:iid:row:proof}
	Consider arbitrary numbers $\rho,\lambda_0 \in (0,1)$ and $\lambda \in [0, 1-(\lambda_0+\rho)]$.
	Let $\tau^0$ and $\tau^1$ be such that $\prob{\maxall \leq \tau^0}=1-(\lambda+\rho)$ and $\prob{\maxall \leq \tau^1}=1-\lambda$. 
	Let $y_i$ be a random binary variable that is $1$ if $\tau^0\leq x_i \leq \tau^1$ and $0$ otherwise. 
	We have
	\begin{align*}
		\prob{\sum_{i=1}^n y_i \geq 2} \leq \frac{\rho^2}{\lambda_0^2}. 	
	\end{align*}
\end{lemma}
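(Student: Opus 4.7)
The plan is to derive the bound by combining two of the intermediate lemmas already proved in this subsection, namely Lemma~\ref{lm:iid:allyprob} (which controls the probability that \emph{at least one} $y_i$ equals $1$) and Lemma~\ref{lm:iid:no2} (which bootstraps an ``exists'' bound on independent Bernoullis into an ``at least two'' bound by squaring). Since the $x_i$ are independent, the indicators $y_i = \mathbf{1}[\tau^0 \leq x_i \leq \tau^1]$ are independent as well, so Lemma~\ref{lm:iid:no2} applies directly to yield
\begin{align*}
\prob{\sum_{i=1}^n y_i \geq 2} \leq \prob{\exists_{i\in\{1,\dots,n\}}\ y_i = 1}^2.
\end{align*}

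Next I would invoke Lemma~\ref{lm:iid:allyprob} with the same $\tau^0, \tau^1$: since $\prob{\maxall \leq \tau^0}=1-(\lambda+\rho)$ and $\prob{\maxall \leq \tau^1}=1-\lambda$, the lemma gives
\begin{align*}
\prob{\exists_{i\in\{1,\dots,n\}}\ y_i = 1} \leq \frac{\rho}{1-\lambda}.
\end{align*}

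Finally I would use the hypothesis $\lambda \in [0, 1-(\lambda_0+\rho)]$, which implies $1-\lambda \geq \lambda_0 + \rho \geq \lambda_0$. Hence $\frac{\rho}{1-\lambda} \leq \frac{\rho}{\lambda_0}$, and squaring and chaining the two inequalities above yields the claimed bound $\prob{\sum_{i=1}^n y_i \geq 2} \leq \frac{\rho^2}{\lambda_0^2}$. There is no real obstacle here; the only subtle point is the bookkeeping on the parameter range, namely verifying that $1-\lambda \geq \lambda_0$ under the stated hypothesis on $\lambda$, which is exactly what makes the clean $\lambda_0$ in the denominator appear rather than the tighter but less usable $1-\lambda$.
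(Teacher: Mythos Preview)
Your proposal is correct and follows essentially the same approach as the paper: apply Lemma~\ref{lm:iid:no2} to square the ``exists'' probability, bound that probability via Lemma~\ref{lm:iid:allyprob} by $\rho/(1-\lambda)$, and then use $1-\lambda \geq \lambda_0$ from the hypothesis on $\lambda$. If anything, you spell out the final parameter-bookkeeping step more explicitly than the paper does.
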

\begin{proof}
	We have
	\begin{align*}
		\prob{\sum_{i=1}^n y_i \geq 2} &\leq
		\prob{\exists_{i=1}^n y_i = 1}^2 &\text{By Lemma \ref{lm:iid:no2}}\\
		&\leq \Big(\frac{\rho}{1-\lambda}\Big) &\text{By Lemma \ref{lm:iid:allyprob}}\\	
		&\leq \frac{\rho^2}{\lambda_0^2}. 	
	\end{align*}
\end{proof}


\end{document}